\DeclareMathOperator{\Var}{\operatorname{Var}}
\newcommand{\norm}[1]{\left\lVert#1\right\rVert}
\newcommand{\stkout}[1]{\ifmmode\text{\sout{\ensuremath{#1}}}\else\sout{#1}\fi}
\newtheorem{thm}{Theorem}
\newtheorem{lem}{Lemma}
\newtheorem{assum}{Assumption}
\newtheorem{defi}{Definition}
\begin{document}
	
	\title{Minimax Optimal Estimation of KL Divergence for Continuous Distributions}
	
	\author{Puning Zhao and
		Lifeng Lai\thanks{Puning Zhao and Lifeng Lai are with Department of Electrical and Computer Engineering, University of California, Davis, CA, 95616. Email: \{pnzhao,lflai\}@ucdavis.edu. This work was supported by the National Science Foundation under grants CCF-17-17943, ECCS-17-11468, CNS-18-24553 and CCF-19-08258. }
	}
	\maketitle
	
	\begin{abstract}
Estimating Kullback-Leibler divergence from identical and independently distributed samples is an important problem in various domains. One simple and effective estimator is based on the $k$ nearest neighbor distances between these samples. In this paper, we analyze the convergence rates of the bias and variance of this estimator. Furthermore, we derive a lower bound of the minimax mean square error and show that kNN method is asymptotically rate optimal.
	\end{abstract}
\begin{IEEEkeywords}
	KNN, Kullback-Leibler Divergence, Functional Approximation, Convergence Rate
\end{IEEEkeywords}

\section{Introduction}
Kullback-Leibler (KL) divergence has a broad range of applications in information theory, statistics and machine learning. For example, KL divergence can be used in hypothesis testing \cite{anderson1994two}, text classification \cite{dhillon2003divisive}, outlying sequence detection \cite{bu2016universal}, multimedia classification \cite{moreno2004kullback}, speech recognition \cite{ramirez2004new}, etc. In many applications, we hope to know the value of KL divergence, but the distributions are unknown. Therefore, it is important to estimate KL divergence based only on some identical and independently distributed (i.i.d) samples. Such problem has been widely studied \cite{bu2018estimation,wang2005divergence,wang2009divergence,antos2001convergence,paul2019practical,cai2002universal,cai2006universal,zhang2014nonparametric}.

The estimation method is different depending on whether the underlying distribution is discrete or continuous. For discrete distributions, an intuitive method is called plug-in estimator, which first estimates the probability mass function (PMF) by simply counting the number of occurrences at each possible value and then calculates the KL divergence based on the estimated PMF. However, since it is always possible that the number of occurrences at some locations is zero, this method has infinite bias and variance for arbitrarily large sample size. As a result, it is necessary to design some new estimators, such that both the bias and variance converge to zero. Several methods have been proposed in \cite{zhang2014nonparametric,cai2002universal,cai2006universal}. These methods perform well for distributions with fixed alphabet size. Recently, there is a growing interest in designing estimators that are suitable for distributions with growing alphabet size. \cite{bu2018estimation} provided an `augmented plug-in estimator', which is a modification of the simple plug-in method. The basic idea of this method is to add a term to both the numerator and the denominator when calculating the ratio of the probability mass. Although this modification will introduce some additional bias, the overall bias is reduced. Moreover, a minimax lower bound has also been derived in \cite{bu2018estimation}, which shows that the augmented plug-in estimator proposed in~\cite{bu2018estimation} is rate optimal.

For continuous distributions, there are also many interesting methods. A simple one is to divide the support into many bins, so that continuous values can be quantized, and then the distribution can be converted to a discrete one. As a result, the KL divergence can be estimated based on these two discrete distributions. However, compared with other methods, this method is usually inefficient, especially when the distributions have heavy tails, as the probability mass of a bin at the tail of distributions is hard to estimate. 
An improvement was proposed in \cite{wang2005divergence}, which is based on data dependent partitions on the densities with an appropriate bias correction technique. Comparing with the direct partition method mentioned above, this adaptive one constructs more bins at the regions with higher density, and vice versa, to ensure that the probability mass in each bins are approximately equal. It is shown in \cite{wang2005divergence} that this method is strongly consistent. Another estimator was designed in \cite{nguyen2010estimating}, which uses a kernel based approach to estimate the density ratio. There are also some previous works that focus on a more general problem of estimating $f$-divergence, with KL divergence being a special case. For example, \cite{moon2014ensemble} constructed an estimator based on a weighted ensemble of plug in estimators, and the parameters need to be tuned properly to get a good bias and variance tradeoff. Another method of estimating $f$-divergence in general was proposed in \cite{paul2019practical}, under certain structural assumptions.

Among all the methods for the estimation of KL divergence between two continuous distributions, a simple and effective one is $k$ nearest neighbor (kNN) method based estimator. kNN method, which was first proposed in \cite{fix1951discriminatory}, is a powerful tool for nonparametric statistics. Kozachenko and Leonenko \cite{kozachenko1987sample} designed a kNN based method for the estimation of differential entropy, which is convenient to use and does not require too much parameter tuning. Both theoretical analysis and numerical experiments show that this method has desirable accuracy \cite{tsybakov1996root,singh2016analysis,singh2016finite,gao2018demystifying,berrett2019efficient,zhao2019analysis,khan2007relative}. In particular, \cite{zhao2019analysis} shows that this estimator is nearly minimax rate optimal under some assumptions. The estimation of KL divergence shares some similarity with that of entropy estimation, since KL divergence between $f$ and $g$, which denotes the probability density functions (pdf) of two distributions, is actually the difference of the entropy of $f$ and the cross entropy between $f$ and $g$. As a result, the idea of Kozachenko-Leonenko entropy estimator can be used to construct a kNN based estimator for KL divergence, which was first proposed in \cite{wang2009divergence}. The basic idea of this estimator \cite{wang2009divergence} is to obtain an approximate value of the ratio between $f$ and $g$ based on the ratio of kNN distances. It has been discussed in \cite{wang2009divergence} that, compared with other KL divergence estimators, the kNN based estimator has a much lower sample complexity, and is easier to generalize and implement for high dimensional data. Moreover, it was proved in \cite{wang2009divergence} that the kNN based estimator is consistent, which means that both the bias and the variance converge to zero as sample sizes increase. However, the convergence rate remains unknown.

In this paper, we make the following two contributions. Our first main contribution is the analysis of the convergence rates of bias and variance of the kNN based KL divergence estimator proposed in \cite{wang2009divergence}. For the bias, we discuss two significantly different types of distributions separately. In the first type of distributions analyzed, both $f$ and $g$ have bounded support, and are bounded away from zero. One such example is when both distributions are uniform distributions. This implies that the distribution has boundaries, where the pdf suddenly changes. There are two main sources of estimation bias of kNN method for this case. The first source is the boundary effect, as the kNN method tends to underestimate the pdf values at the region near the boundary. The second source is the local non-uniformity of the pdf. It can be shown that the bias caused by the second source converges fast enough and thus can be negligible. As a result, the boundary bias is the main cause of bias of the kNN based KL divergence estimator for the first type of distributions considered. In the second type of distributions analyzed, we assume that both $f$ and $g$ are continuous everywhere. For example, a pair of two Gaussian distributions with different mean or variance belong to this case. For this type of distributions, the boundary effect does not exist. However, as the density values can be arbitrarily close to zero, we need to consider the bias caused by the tail region, in which $f$ or $g$ is too low and thus kNN distances are too large for us to obtain an accurate estimation of the density ratio $f/g$. For the variance of this estimator, we bound the convergence rate under a unified assumption, which holds for both two cases discussed above. The convergence rate of the mean square error can then be obtained based on that of the bias and variance. In this paper, we assume that $k$ is fixed. We will show that with fixed $k$, the convergence rate of the mean square error over the sample sizes is already minimax optimal.

Our second main contribution is to derive a minimax lower bound of the mean square error of KL divergence estimation, which characterizes the theoretical limit of the convergence rates of any methods. For discrete distributions, the minimax lower bound has already been derived in \cite{han2016minimax} and \cite{bu2018estimation}. However, for continuous distributions, the minimax lower bound has not been established. In fact, there exists no estimators that are uniformly consistent for all continuous distributions. For example, let $f=\sum_{i=1}^m p_i\mathbf{1}((i-1)/m<x\leq i/m)$, in which $\mathbf{1}$ is the indicator function, and $g$ is uniform in $[0,1]$. Then the estimation error of KL divergence between $f$ and $g$ equals the estimation error of the entropy of $\mathbf{p}=(p_1,\ldots,p_m)$. Since $m$ can be arbitrarily large, according to the lower bound derived in \cite{wu2016minimax}, there exists no uniformly consistent estimator. As a result, to find a minimax lower bound, it is necessary to impose some restrictions on the distributions. In this paper, we analyze the minimax lower bound for two cases that match our assumptions for deriving the upper bound, i.e. distributions with bounded support and densities bounded away from zero, and distributions that are smooth everywhere and densities can be arbitrarily close to zero. For each case, we show that the minimax lower bound nearly matches our upper bound using kNN method. This result indicates that the kNN based KL divergence estimator is nearly minimax optimal. To the best of our knowledge, our work is the first attempt to analyze the convergence rate of KL divergence estimator based on kNN method, and prove its minimax optimality. 

The remainder of this paper is organized as follows. In Section \ref{sec:main}, we provide the problem statements. In Sections \ref{sec:bias} and \ref{sec:variance}, we characterize the convergence rates of the bias and variance of the kNN based KL divergence estimator respectively. In Section \ref{sec:mmx}, we show the minimax lower bound. We then provide numerical examples in Section \ref{sec:numerical}, and concluding remarks in Section \ref{sec:conclusion}.

\section{Problem Statement}\label{sec:main}
Consider two pdfs $f,g:\mathbb{R}^d\rightarrow \mathbb{R}$ where $f(\mathbf{x})>0$ only if $g(\mathbf{x})>0$. The KL divergence between $f$ and $g$ is defined as
\begin{eqnarray}
D(f||g)=\int f(\mathbf{x})\ln \frac{f(\mathbf{x})}{g(\mathbf{x})}d\mathbf{x}.
\label{eq:kldef}
\end{eqnarray}
$f$ and $g$ are unknown. However, we are given a set of samples $\{\mathbf{X}_1,\ldots,\mathbf{X}_N\}$ drawn i.i.d from pdf $f$, and another set of samples $\{\mathbf{Y}_1,\ldots,\mathbf{Y}_M \}$ drawn i.i.d from pdf $g$. The goal is to estimate $D(f||g)$ based on these samples.

\cite{wang2009divergence} proposed a kNN based estimator:
\begin{eqnarray}
\hat{D}(f||g)=\frac{d}{N}\sum_{i=1}^N\ln \frac{\nu_i}{\epsilon_i}+\ln \frac{M}{N-1},
\label{eq:estimator}
\end{eqnarray}
in which $\epsilon_i$ is the distance between $\mathbf{X}_i$ and its $k$-th nearest neighbor in $\{\mathbf{X}_1,\ldots, \mathbf{X}_{i-1},\mathbf{X}_{i+1},\ldots,\mathbf{X}_N \}$, while $\nu_i$ is the distance between $\mathbf{X}_i$ and its $k$-th nearest neighbor in $\{\mathbf{Y}_1,\ldots, \mathbf{Y}_M \}$, $d$ is the dimension. The distance between any two points $\mathbf{u},\mathbf{v}$ is defined as $\norm{\mathbf{u}-\mathbf{v}}$, in which $\norm{\cdot}$ can be an arbitrary norm. The basic idea of this estimator is using kNN method to estimate the density ratio. An estimation of $f$ at $\mathbf{X}_i$ is
\begin{eqnarray}
\hat{f}(\mathbf{X}_i)=\frac{k}{N-1}\frac{1}{V(B(\mathbf{X}_i,\epsilon_i))},
\label{eq:fhat}
\end{eqnarray}
in which $V(S)$ is the volume of set $S$. \eqref{eq:fhat} can be understood as follows. Apart from $\mathbf{X}_i$, there are another $N-1$ samples from $\mathbf{X}_1,\ldots, \mathbf{X}_N$, among which $k$ points fall in $V(B(\mathbf{X}_i,\epsilon_i))$. Therefore, $k/(N-1)$ is an estimate of $P_f(B(\mathbf{X}_i,\epsilon_i))$, in which $P_f$ is the probability mass with respect to the distribution with pdf $f$. As the distribution is continuous, we have $P_f(B(\mathbf{X}_i,\epsilon_i))\approx f(\mathbf{X}_i)V(B(\mathbf{X}_i,\epsilon_i))$. We can then use \eqref{eq:fhat} to estimate $\hat{f}(\mathbf{X}_i)$. Similarly, as there are $M$ samples $\mathbf{Y}_1,\ldots, \mathbf{Y}_M$ generated from $g$, we can obtain an estimate $\hat{g}$ by
\begin{eqnarray}
\hat{g}(\mathbf{X}_i)=\frac{k}{M}\frac{1}{V(B(\mathbf{X}_i,\nu_i))}.
\label{eq:ghat}
\end{eqnarray}
As
\begin{eqnarray}
D(f||g)=\mathbb{E}_{\mathbf{X}\sim f}\left[\ln \frac{f(\mathbf{X})}{g(\mathbf{X})} \right]\approx \frac{1}{N}\sum_{i=1}^N \ln \frac{f(\mathbf{X}_i)}{g(\mathbf{X}_i)},
\end{eqnarray}
by replacing $f(\mathbf{X}_i)$, $g(\mathbf{X}_i)$ with \eqref{eq:fhat} and \eqref{eq:ghat} respectively, we can get the expression of the KL divergence estimator in \eqref{eq:estimator}. 

\cite{wang2009divergence} has proved that this estimator is consistent, but the convergence rate remains unknown. In this paper, we analyze the convergence rates of the bias and variance of this estimator, and derive the minimax lower bound.

\section{Bias Analysis}\label{sec:bias}

In this section, we derive convergence rate of the bias of the estimator~\eqref{eq:estimator}. We will consider two different cases depending on whether the support is bounded or not, as they have different sources of biases.

\subsection{The Case with Bounded Support}

We first discuss the case in which the distributions have bounded support and the densities are bounded away from zero. The main source of bias of this case is boundary effects. The analysis is based on the following assumptions:

\begin{assum}\label{ass:bounded}
	Assume the following conditions: 
	
	(a) $S_f\subset S_g$, in which $S_f$ and $S_g$ are the supports of $f$ and $g$;
	
	(b) There exist constants $L_f$, $U_f$, $L_g$, $U_g$ such that $L_f\leq f(\mathbf{x})\leq U_f$ for all $\mathbf{x}\in S_f$ and $L_g\leq g(\mathbf{x})\leq U_g$ for all $\mathbf{x}\in S_g$;
	
	(c) The surface areas (or Hausdorff measure) of $S_f$ and $S_g$ are bounded by $H_f$ and $H_g$;
	
	(d) The diameters of $S_f$ and $S_g$ are bounded by $R$, i.e. $\underset{\mathbf{x}_1,\mathbf{x}_2\in S_g}{\sup} \norm{\mathbf{x}_2-\mathbf{x}_1}<R$;
	
	(e) There exists a constant $0<a<1$ such that for all $r\leq R$ and $\mathbf{x}\in S_f$, $V(B(\mathbf{x},r)\cap S_f)\geq a V(B(\mathbf{x},r))$, and for all $\mathbf{x}\in S_g$, $V(B(\mathbf{x},r)\cap S_g)\geq a V(B(\mathbf{x},r))$, in which $V$ denotes the volume of a set;
	
	(f) The Hessian of $f$ and $g$ are both bounded by $C_0$.
	
\end{assum}
Assumption (a) is necessary to ensure that the definition of KL divergence in \eqref{eq:kldef} is valid. (b) bounds both the lower and upper bound of the pdf value. (c) restricts the surface area of the supports of $f$ and $g$. Since the kNN divergence estimator tends to cause significant bias at the region near to the boundary, the estimation bias for distributions with irregular supports with large surface area are usually large. (d) requires the boundedness of the support. The case with unbounded support will be considered in Section~\ref{sec:biasub}. (e) ensures that the angles at the corners of the support sets have a lower bound, so that there will not be significant bias at the corner region. (f) ensures the smoothness of distribution in the support set. Note that \eqref{eq:fhat} and \eqref{eq:ghat} actually estimate the average density $f$ and $g$ over the ball $B(\mathbf{X}_i,\epsilon_i)$ and $B(\mathbf{X}_i,\nu_i)$. If the $f$ and $g$ are smooth, then the average values will not deviate too much from the pdf value at the center of the balls, i.e. $f(\mathbf{X}_i)$ and $g(\mathbf{X}_i)$. 

Based on the above assumptions, we have the following theorem regarding the bias of estimator~\eqref{eq:estimator}.
\begin{thm}\label{thm:bounded}
	Under Assumption \ref{ass:bounded}, the convergence rate of the bias of kNN based KL divergence estimator is bounded by:
	\begin{eqnarray}
	|\mathbb{E}[\hat{D}(f||g)]-D(f||g)|=\mathcal{O}\left(\left(\frac{\ln \min\{M,N \}}{\min\{M,N\}}\right)^\frac{1}{d}\right).
	\label{eq:bounded}
	\end{eqnarray}
\end{thm}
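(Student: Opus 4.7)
The plan is to start from
\begin{equation*}
\mathbb{E}[\hat{D}(f\|g)]-D(f\|g)=d\,\mathbb{E}[\ln\nu_1]-d\,\mathbb{E}[\ln\epsilon_1]+\ln\frac{M}{N-1}-D(f\|g)
\end{equation*}
and analyze the two expectations by conditioning on $\mathbf{X}_1=\mathbf{x}$. Writing $F_f(\mathbf{x},r):=P_f(B(\mathbf{x},r))$ and $F_g(\mathbf{x},r):=P_g(B(\mathbf{x},r))$, the probability integral transform sends $\|\mathbf{X}_i-\mathbf{x}\|$ to a uniform on $[0,1]$, so that the order-statistic identity gives $F_f(\mathbf{x},\epsilon_1)\sim\mathrm{Beta}(k,N-k)$ and $F_g(\mathbf{x},\nu_1)\sim\mathrm{Beta}(k,M-k+1)$. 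Consequently
\begin{equation*}
\mathbb{E}\bigl[\ln F_f(\mathbf{x},\epsilon_1)\mid\mathbf{X}_1=\mathbf{x}\bigr]=\psi(k)-\psi(N),\qquad\mathbb{E}\bigl[\ln F_g(\mathbf{x},\nu_1)\mid\mathbf{X}_1=\mathbf{x}\bigr]=\psi(k)-\psi(M+1),
\end{equation*}
and the digamma asymptotics $\psi(N)=\ln(N-1)+\mathcal{O}(1/N)$, $\psi(M+1)=\ln M+\mathcal{O}(1/M)$ will cancel the $\ln(M/(N-1))$ in the estimator up to an $\mathcal{O}(1/\min\{M,N\})$ residual.

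Next I convert each $\ln F$ back to $\ln\epsilon_1$ or $\ln\nu_1$. For $\mathbf{x}$ whose distance to $\partial S_f$ exceeds $\epsilon_1$, a second-order Taylor expansion using the Hessian bound in Assumption~\ref{ass:bounded}(f) gives $F_f(\mathbf{x},\epsilon_1)=f(\mathbf{x})c_d\epsilon_1^{\,d}(1+\mathcal{O}(\epsilon_1^2))$, where $c_d$ is the unit-ball volume (the linear term vanishes by symmetry of the ball). For $\mathbf{x}$ closer to the boundary I write $F_f(\mathbf{x},\epsilon_1)\approx f(\mathbf{x})\eta_f(\mathbf{x},\epsilon_1)c_d\epsilon_1^{\,d}$ with $\eta_f(\mathbf{x},\epsilon_1):=V(B(\mathbf{x},\epsilon_1)\cap S_f)/(c_d\epsilon_1^{\,d})$, which by Assumption~\ref{ass:bounded}(e) lies in $[a,1]$, so the extra term $\ln\eta_f$ is uniformly bounded by $|\ln a|$. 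The same decomposition applies to $\nu_1$ on the $g$-side. Integrating in $\mathbf{x}\sim f$ and subtracting, the $\mathbb{E}_f[\ln f]-\mathbb{E}_f[\ln g]$ contribution reproduces $D(f\|g)$, and what survives is the Hessian error plus the boundary error.

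The heart of the proof is controlling the boundary error, which is where the logarithmic factor enters. I would choose a truncation radius $r_N=(C\ln N/N)^{1/d}$ and use Assumptions~\ref{ass:bounded}(b) and~\ref{ass:bounded}(e) to lower-bound $F_f(\mathbf{x},r_N)\geq aL_fc_dr_N^{\,d}\gtrsim\ln N/N$. The Beta tail then shows $P(\epsilon_1>r_N\mid\mathbf{X}_1=\mathbf{x})$ decays polynomially in $N$ uniformly in $\mathbf{x}$, so the event $\{\epsilon_1>r_N\}$ contributes negligibly even after multiplication by $|\ln\eta_f|$. On $\{\epsilon_1\leq r_N\}$ the boundary correction is nonzero only when $\mathbf{x}$ lies within $r_N$ of $\partial S_f$, a set of $f$-mass at most $U_fH_fr_N=\mathcal{O}((\ln N/N)^{1/d})$ by Assumption~\ref{ass:bounded}(c). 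An identical argument with $M$ and $H_g,U_g,L_g$ handles the $\nu_1$ side. The Hessian contribution is at most $\mathcal{O}(r_N^2)=\mathcal{O}((\ln\min\{M,N\}/\min\{M,N\})^{2/d})$, which is strictly smaller, so the boundary terms dominate and yield the claimed bound.

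I expect the main obstacle to be keeping the boundary analysis tight: $\epsilon_1$ can in principle be much larger than its typical scale, and if that rare event is not handled carefully, the $\mathcal{O}(1)$ boundary bias could be multiplied by something larger than the thin-shell probability. The truncation at $r_N$ combined with Assumption~\ref{ass:bounded}(e)—which prevents the fraction $\eta_f$ from degenerating at corners and keeps $|\ln\eta_f|$ uniformly bounded—is what resolves this, and carefully executing the Beta-tail and boundary-layer estimates is the delicate part of the argument.
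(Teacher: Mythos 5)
Your proposal is correct and follows essentially the same route as the paper's proof: an order-statistics/digamma identity reduces the bias to $\mathbb{E}\bigl[\ln\bigl(P_f(B(\mathbf{X},\epsilon))/(c_d\epsilon^d f(\mathbf{X}))\bigr)\bigr]$ and its $g$-analogue, which are then split into an interior region (Hessian term of order $a_M^2$ plus a Chernoff/Beta-tail bound for the rare event $\{\nu>a_M\}$, with the log-ratio kept uniformly bounded via Assumptions (b) and (e)) and a boundary shell of width $\sim(\ln M/M)^{1/d}$ whose mass is controlled by the surface-area assumption. The choice of truncation radius, the sources of error, and the final balancing all coincide with the paper's argument.
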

\begin{proof}
	(Outline) Considering that
	\begin{eqnarray}
	D(f||g)=-h(f)-\int f(\mathbf{x})\ln g(\mathbf{x})d\mathbf{x},
	\end{eqnarray}
	in which $h$ denotes the differential entropy, we decompose the KL divergence estimator to an estimator of the differential entropy of $f$, as well as an estimator of the cross entropy between $f$ and $g$. We then bound the bias of these two estimators. In particular, we can write
	\begin{eqnarray}
	\mathbb{E}[\hat{D}(f||g)]-D(f||g)=-I_1+I_2+I_3,\label{eq:dico}
	\end{eqnarray}
	with
	\begin{eqnarray}
	I_1&=&-\psi(k)+\psi(N)+\ln c_d+d\mathbb{E}[\ln \epsilon]-h(f),\nonumber\\
	I_2&=&-\psi(k)+\psi(M+1)+\ln c_d+d\mathbb{E}[\ln \nu]+\mathbb{E}[\ln g(\mathbf{X})],\nonumber\\
	I_3&=&\ln M-\psi(M+1)-\ln(N-1)+\psi(N),	
	\end{eqnarray}
	in which $\psi$ is the digamma function, $\psi(u)=d(\ln \Gamma(u))/du$, with $\Gamma$ being the Gamma function. Due to the property of Gamma distribution, we know that $|\ln M-\psi(M+1)|\leq 1/M$, and $|\ln (N-1)-\psi(N)|\leq 1/N$. Hence $I_3$ decays sufficiently fast and can be negligible for large sample sizes $N$ and $M$.
	
	$I_1$ has the same form as the bias of Kozachenko-Leonenko entropy estimator \cite{kozachenko1987sample}, which has been analyzed in many previous literatures \cite{biau2015lectures,gao2018demystifying,singh2016analysis,berrett2019efficient,zhao2019analysis}. With some modifications, the proofs related to the entropy estimator can also be used to bound $I_2$, which is actually the bias of a cross entropy estimator. However, as the assumptions are different from the assumptions made in previous literatures, we need to derive \eqref{eq:bounded} in a different way.
	
	In our proof, for both the entropy estimator and the cross entropy estimator, we divide the support into two parts, the central region and the boundary region. In the central region, $B(\mathbf{x},\epsilon)$ will be within $S_f$ and $B(\mathbf{x},\nu)$ will be within $S_g$ with high probability. Since $f$ and $g$ are smooth, the expected estimate $\hat{f}$ and $\hat{g}$ are very close to the truth, and thus will not cause significant bias. The main bias comes from the boundary region, in which the density estimator $\hat{f}$ and $\hat{g}$ are no longer accurate, as $B(\mathbf{x},\epsilon)$ or $B(\mathbf{x},\nu)$ exceeds the supports $S_f$ and $S_g$. We bound the boundary bias by letting the boundary region to shrink with a proper speed. 
	
	The detailed proof is shown in Appendix \ref{sec:bounded}.
\end{proof}

\subsection{The Case with Smooth Distributions}\label{sec:biasub}
We now consider the second case where the density is smooth everywhere and the density can be arbitrarily close to zero. For this case, the main source of bias is tail effects. We make the following assumptions:
\begin{assum}\label{ass:unbounded}
	Assume the following conditions: 
	
	(a) If $f(\mathbf{x})>0$, then $g(\mathbf{x})>0$;
	
	(b) $\text{P}(f(\mathbf{X})\leq t)\leq \mu t^\gamma$ and $\text{P}(g(\mathbf{X})\leq t)\leq \mu t^\gamma$ for some constants $\mu$ and $\gamma\in (0,1]$, in which $\mathbf{X}$ follows a distribution with pdf $f$;
	
	(c)$\norm{\nabla^2 f}\leq C_0$, $\norm{\nabla^2 g}\leq C_0$ for some constant $C_0$;
	
	(d) $\mathbb{E}[\norm{\mathbf{X}}^s]\leq K$, and $\mathbb{E}[\norm{\mathbf{Y}}^s]\leq K$ for some constants $s>0$, $K>0$.
\end{assum}

Assumption (a) ensures that the definition of KL divergence in \eqref{eq:kldef} is valid. (b) is the tail assumption. A lower $\gamma$ indicates a stronger tail, and thus the convergence of bias of the KL divergence estimator will be slower. For example, $\gamma=1$ for Gaussian distribution and $\gamma=1/2$ for Cauchy distribution. (c) is the smoothness assumption. (d) is an additional tail assumption, which is actually very weak and holds for almost all of the common distributions, since $s$ can be arbitrarily small. However, this assumption is important since it prevents very large $\epsilon$ and $\nu$. Based on the above assumptions, we have the following theorem regarding the bias of estimator~\eqref{eq:estimator}.

\begin{thm}\label{thm:unbounded}
	Under Assumption \ref{ass:unbounded}, the convergence rate of the bias of kNN based KL divergence estimator is bounded by:
	\begin{eqnarray}
\left|\mathbb{E}[\hat{D}(f||g)]-D(f||g)\right|=\mathcal{O}\left((\min\{M,N \})^{-\frac{2\gamma}{d+2}}\ln \min\{M,N \}\right).
	\label{eq:unbounded}
	\end{eqnarray}
\end{thm}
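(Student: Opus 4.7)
The plan is to reuse the decomposition from Theorem~\ref{thm:bounded}: write $\mathbb{E}[\hat{D}(f\|g)]-D(f\|g) = -I_1+I_2+I_3$ exactly as in~\eqref{eq:dico}. The digamma term $I_3$ is $\mathcal{O}(1/\min\{M,N\})$ and is negligible against the target rate, so the work is to bound the entropy-estimator bias $I_1$ and the cross-entropy bias $I_2$. Since $f$ and $g$ are now smooth everywhere on $\mathbb{R}^d$, the boundary-strip argument used in Theorem~\ref{thm:bounded} is replaced by a tail-region argument on low-density sets.

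For each of $I_1$ and $I_2$ I would introduce a threshold $a = a_{M,N}>0$ and split the expectation over $\mathbf{X}_i$ into a bulk part $\{f(\mathbf{X}_i)\geq a,\; g(\mathbf{X}_i)\geq a\}$ and a tail part. On the bulk, the kNN distances satisfy $\epsilon^d \lesssim 1/(Na)$ and $\nu^d \lesssim 1/(Ma)$ with high probability, so Assumption~\ref{ass:unbounded}(c) together with a Taylor expansion around $\mathbf{X}_i$ yields $P_f(B(\mathbf{X}_i,\epsilon)) = f(\mathbf{X}_i)c_d\epsilon^d(1+\mathcal{O}(\epsilon^2))$ and an analogous expression for $g$ and $\nu$. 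Feeding these into the order-statistic identities for kNN distances produces a per-point bulk error of order $\epsilon^2$ (resp.\ $\nu^2$) in each of $d\ln\epsilon-\ln(1/f(\mathbf{X}_i))$ and $d\ln\nu-\ln(1/g(\mathbf{X}_i))$, and integrating against $f$ over the bulk gives a bulk contribution that depends on $M$, $N$, and $a$.

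On the tail we invoke Assumption~\ref{ass:unbounded}(b), which gives $\text{P}(f(\mathbf{X})<a)+\text{P}(g(\mathbf{X})<a)\leq 2\mu a^\gamma$ under $\mathbf{X}\sim f$ --- exactly the probabilities needed for $I_1$ and $I_2$ respectively. Here smoothness is useless, so I would bound the local errors of $d\ln\epsilon$ and $d\ln\nu$ crudely by $\mathcal{O}(\ln\min\{M,N\})$, justified by Assumption~\ref{ass:unbounded}(d): the moment bound, via Markov's inequality applied to $\|\mathbf{X}_i\|$ and to the distance from $\mathbf{X}_i$ to its nearest $\mathbf{Y}_j$, prevents $\epsilon$ and $\nu$ from being super-polynomial in $M$ or $N$ with more than polynomially small probability. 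This yields a tail bias of order $a^\gamma\ln\min\{M,N\}$. Balancing the bulk estimate against the tail estimate by choosing $a$ optimally then produces the claimed rate.

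The main obstacle I expect is the cross-entropy bias $I_2$: since $\nu$ is a kNN distance \emph{from} the $\mathbf{X}$-sample \emph{to} the $\mathbf{Y}$-sample, the conditional distribution of $\nu$ given $\mathbf{X}_i$ involves $g$ rather than $f$, so the standard Kozachenko--Leonenko machinery used for $I_1$ has to be modified to handle this asymmetry, in particular for the identity relating $\mathbb{E}[d\ln\nu\mid\mathbf{X}_i]$ to $\ln g(\mathbf{X}_i)$. The fact that Assumption~\ref{ass:unbounded}(b) controls the $f$-probability of $\{g<a\}$ (and not just $\{f<a\}$) is precisely what keeps the tail estimate for $I_2$ parallel to that for $I_1$ and allows the same threshold $a$ to be used in both.
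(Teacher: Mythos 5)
Your route is essentially the paper's: the same $-I_1+I_2+I_3$ decomposition, the order-statistics identity for $\mathbb{E}[\ln P_g(B(\mathbf{X},\nu))]$, a bulk/tail split with a vanishing density threshold, a Hessian-based local expansion in the bulk, Assumption \ref{ass:unbounded}(b) for the tail, Assumption \ref{ass:unbounded}(d) against huge kNN distances, and a final balancing. But there is a concrete flaw in your bulk accounting. Assumption \ref{ass:unbounded}(c) gives the \emph{absolute} bound $|P_g(B(\mathbf{x},\nu))-c_d\nu^d g(\mathbf{x})|\le C_1\nu^{d+2}$ (the paper's Lemma \ref{lem:pdf}), so the multiplicative error is $1+\mathcal{O}(\nu^2/g(\mathbf{x}))$, not $1+\mathcal{O}(\nu^2)$; the per-point bulk bias is of order $\nu^2/g(\mathbf{X}_i)$, and integrating against $f$ leaves you with $\nu^2\int_{g\ge a} (f/g)\,d\mathbf{x}$, which is not $\mathcal{O}(\nu^2)$ and must itself be controlled via Assumption \ref{ass:unbounded}(b) — this is exactly the paper's Lemma \ref{lem:tail}, giving $\lesssim a^{\gamma-1}$ for $\gamma<1$ and $\ln(1/a)$ for $\gamma=1$. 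If you keep the uniform $\mathcal{O}(\nu^2)$ as written and balance against the $a^\gamma$ tail, you get a bias rate of roughly $M^{-2\gamma/(d\gamma+2)}$, polynomially faster than the theorem for $\gamma<1$ and incompatible (up to logarithms) with the minimax lower bound of Theorem \ref{thm:mmx2} — a clear sign that step fails. With the $1/g$ weight and the Lemma \ref{lem:tail} bound, the balance lands at $a\sim M^{-2/(d+2)}$ and recovers exactly $M^{-2\gamma/(d+2)}\ln M$, as in the paper.

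Two further points need repair rather than just elaboration. On the tail you cannot bound the per-point error uniformly by $\mathcal{O}(\ln\min\{M,N\})$: the contributions $\ln(1/g(\mathbf{X}))$ and, through the kNN distance, $\ln\norm{\mathbf{X}}$ are unbounded on $\{g<a\}$, and they have to be integrated over the tail event using Assumption \ref{ass:unbounded}(b) with a layer-cake argument and a H\"older/moment argument (the paper's Lemma \ref{lem:log} and Lemma \ref{lem:log2}); they do come out at the same order $a^\gamma\ln(1/a)$, but only after this work. Likewise, in the bulk, "with high probability $\nu^d\lesssim 1/(Ma)$" must be quantified: the event that $\nu$ exceeds the bulk cutoff is controlled by a Chernoff bound on the ball count, and its contribution to the bias requires bounding $\mathbb{E}[\ln(\nu/a_M)\mathbf{1}(\nu>a_M)]$ (the paper's Lemmas \ref{lem:largeradius} and \ref{lem:lognu}); Markov's inequality on $\norm{\mathbf{X}}$ and nearest-sample distances only handles the extreme deviations, while this moderate-deviation event is what produces the second term $M^{-\gamma(1-\beta d)}\ln M$ that participates in the final optimization of $\beta$. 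Finally, the asymmetry you worry about in $I_2$ is not an obstacle: since $\nu$ is the $k$NN distance at a point independent of the $M$ samples from $g$, the identity $\mathbb{E}[\ln P_g(B(\mathbf{X},\nu))]=\psi(k)-\psi(M+1)$ holds verbatim, which is precisely how the paper reduces $I_2$ to the ratio $P_g(B(\mathbf{X},\nu))/(c_d\nu^d g(\mathbf{X}))$.
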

\begin{proof}
	(Outline) Similar to the proof of Theorem \ref{thm:bounded}, we still decompose the KL divergence estimator to two estimators that estimate the entropy of $f$ and the cross entropy between $f$ and $g$, separately. In particular, we can still decompose the bias using~\eqref{eq:dico}. For simplicity, we only provide the convergence bound of $I_2$, which is the error of the cross entropy estimator. The bound of the entropy estimator holds similarly. 
	
	For the cross entropy estimator, we divide the support into two parts, including a central region $S_1$, in which $f$ or $g$ is relatively high, and a tail region $S_2$, in which $f$ or $g$ is relatively low. According to the results of order statistics \cite{david1970order,biau2015lectures}, $\mathbb{E}[\ln P_g(B(\mathbf{x},\nu))]=\psi(k)-\psi(M+1)$, in which $P_g(S)$ is the probability mass of $S$ with respect to the distribution with pdf $g$. Therefore, $I_2$ can be bounded by
	\begin{eqnarray}
	|I_2|&=&\left|\mathbb{E}\left[\ln \frac{P_g(B(\mathbf{X},\nu))}{c_d\nu^d g(\mathbf{X})}\right]\right|\leq \sum_{i=1}^2\left|\mathbb{E}\left[\ln \frac{P_g(B(\mathbf{X},\nu))}{c_d\nu^d g(\mathbf{X})}\mathbf{1}(\mathbf{X}\in S_i)\right]\right|.
	\label{eq:I2decomp}
	\end{eqnarray} 
	
	We bound two terms in \eqref{eq:I2decomp} separately. To derive the bound of bias in $S_1$, we find a high probability upper bound of $\nu_i$, denoted as $\rho$. The bound of bias can be obtained by bounding the local non-uniformity of $g$ in $B(\nu_i,\rho)$ if $\nu_i\leq \rho$. On the contrary, if $\nu_i>\rho$, we use assumption (d) to ensure that $\nu_i$ will not be too large, and thus will not cause significant estimation error. We let $\rho$ to decay with $M$ at a proper speed, to maximize the overall convergence rate of the bias. 
	
	To bound the bias in $S_2$, we let the threshold between $S_1$ and $S_2$ to decay with sample size $M$, so that the probability mass of $S_2$ also decreases with $M$. We then combine the bound of $S_1$ and $S_2$, and adjust the rate of the decay of the threshold between $S_1$ and $S_2$ properly.
	
	The detailed proof can be found in Appendix~\ref{sec:unbounded}.
\end{proof}

\section{Variance Analysis}\label{sec:variance}

We now discuss the variance of this divergence estimator, based on the following unifying assumptions.
\begin{assum}\label{ass:var}
	Assume that the following conditions hold:
	
	(a) $f$ and $g$ are continuous almost everywhere;
	
	(b) $\exists r_0>0$, such that
	\begin{eqnarray}
	\int f(\mathbf{x})\left(\underset{r<r_0}{\inf}\tilde{f}(\mathbf{x},r)\right)^2 d\mathbf{x}<\infty;\\
	\int f(\mathbf{x})\left(\underset{r<r_0}{\sup}\tilde{f}(\mathbf{x},r)\right)^2 d\mathbf{x}<\infty;\\	
	\int f(\mathbf{x})\left(\underset{r<r_0}{\inf}\tilde{g}(\mathbf{x},r)\right)^2 d\mathbf{x}<\infty;\\
	\int f(\mathbf{x})\left(\underset{r<r_0}{\sup}\tilde{g}(\mathbf{x},r)\right)^2 d\mathbf{x}<\infty,	
	\label{eq:varb}	
	\end{eqnarray}
	in which
	\begin{eqnarray}
	\tilde{f}(\mathbf{x},r)=P_f(B(\mathbf{x},r))/V(B(\mathbf{x},r))
	\label{eq:tildedef}
	\end{eqnarray} 
	is the average of $f$ over $B(\mathbf{x},r)$. $\tilde{g}$ is similarly defined;
	
	(c) $\mathbb{E}[\norm{\mathbf{X}}^s]\leq K$ and $\mathbb{E}[\norm{\mathbf{Y}}^s]\leq K$ for two finite constants $s,K>0$;
	
	(d) There exist two constants $C$ and $U_g$, such that for all $\mathbf{x}$, $f(\mathbf{x})\leq Cg(\mathbf{x})$ and $g(\mathbf{x})\leq U_g$.
	
\end{assum}

Assumption \ref{ass:var} (a)-(c) are satisfied if either Assumption \ref{ass:bounded} or Assumption \ref{ass:unbounded} is satisfied. (a) only requires that the pdf is continuous almost everywhere, and thus holds not only for distributions that are smooth everywhere, but also for distributions that have boundaries. (b) is obviously satisfied under Assumption \ref{ass:bounded}, since it requires that the densities are both upper and lower bounded. From Assumption \ref{ass:unbounded}, it is also straightforward to show that $\int f(\mathbf{x})\ln^2 f(\mathbf{x})d\mathbf{x}<\infty$ and $\int f(\mathbf{x})\ln^2 g(\mathbf{x})<\infty$. This property combining with the smoothness condition (Assumption \ref{ass:unbounded} (c)) imply that \eqref{eq:varb} holds for sufficiently small $r_0$. (c) is the same as Assumption \ref{ass:unbounded} (d) and weaker than Assumption \ref{ass:bounded} (d). Therefore, (a)-(c) are weaker than both previous assumptions on the analysis of bias. (d) is a new assumption which restricts the density ratio. This is important since if the density ratio can be too large, which means that there exists a region on which there are too many samples from $\{\mathbf{X}_1,\ldots,\mathbf{X}_N \}$, but much fewer samples from $\{\mathbf{Y}_1,\ldots,\mathbf{Y}_M \}$, then $\nu_i$ will be large and unstable for too many $i\in \{1,\ldots,N\}$. Therefore we use assumption (d) to bound the density ratio. 

Under these assumptions, the variance of the divergence estimator can be bounded using the following theorem.

\begin{thm}\label{thm:var}
	Under Assumption \ref{ass:var}, if $N\ln M/M\rightarrow \infty$, then the convergence rate of the variance of estimator~\eqref{eq:estimator} can be bounded by:
	\begin{eqnarray}
	\Var[\hat{D}(f||g)]=\mathcal{O}\left(\frac{1}{N}+\frac{\ln^4 M\ln^2 (M+N)}{M}\right).
	\label{eq:var}
	\end{eqnarray}
\end{thm}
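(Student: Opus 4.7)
The plan is to apply the Efron-Stein inequality, viewing $\hat{D}(f||g)$ as a function of the $N+M$ independent samples $\mathbf{X}_1,\ldots,\mathbf{X}_N,\mathbf{Y}_1,\ldots,\mathbf{Y}_M$. Rewriting
\begin{eqnarray*}
\hat{D}(f||g) = \frac{d}{N}\sum_{i=1}^N \ln \nu_i - \frac{d}{N}\sum_{i=1}^N \ln \epsilon_i + \ln \frac{M}{N-1},
\end{eqnarray*}
Efron-Stein yields
\begin{eqnarray*}
\Var[\hat{D}] \leq \frac{1}{2}\sum_{i=1}^N \mathbb{E}[(\hat{D} - \hat{D}_i^{(X)})^2] + \frac{1}{2}\sum_{j=1}^M \mathbb{E}[(\hat{D} - \hat{D}_j^{(Y)})^2],
\end{eqnarray*}
where $\hat{D}_i^{(X)}$ and $\hat{D}_j^{(Y)}$ denote the estimator after replacing the corresponding sample by an independent copy. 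I expect the $\mathbf{X}$-sum to produce the $1/N$ contribution and the $\mathbf{Y}$-sum to produce the $\ln^4 M\ln^2(M+N)/M$ contribution.

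For the $\mathbf{X}$-sum, the key observation is that swapping $\mathbf{X}_i$ alters $\nu_i$ only for that single index (since for $l\neq i$ the quantity $\nu_l$ depends on $\mathbf{X}_l$ and the $\mathbf{Y}$-sample alone), and alters $\epsilon_l$ only for the at most $k\gamma_d$ indices $l$ for which $\mathbf{X}_i$ or its replacement is among the $k$ nearest neighbors of $\mathbf{X}_l$ in the $\mathbf{X}$-sample, $\gamma_d$ being the standard dimension-dependent geometric constant. Combined with the moment bounds $\mathbb{E}[\ln^2\nu_i],\mathbb{E}[\ln^2\epsilon_i]<\infty$---which follow from Assumption~\ref{ass:var}(b) together with the tail conditions (c) and (d)---Cauchy-Schwarz produces $\sum_i\mathbb{E}[(\hat{D}-\hat{D}_i^{(X)})^2]=O(1/N)$.

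The main work lies in the $\mathbf{Y}$-sum. Replacing $\mathbf{Y}_j$ by $\mathbf{Y}_j'$ only affects $\nu_i$ for $i$ in the set $L_j$ of query points for which $\mathbf{Y}_j$ is a $k$-nearest neighbor in the $\mathbf{Y}$-sample, together with its post-replacement counterpart $L_j'$. Cauchy-Schwarz gives
\begin{eqnarray*}
(\hat{D}-\hat{D}_j^{(Y)})^2 \leq \frac{d^2}{N^2}(|L_j|+|L_j'|)\sum_{i\in L_j\cup L_j'}(\ln\nu_i-\ln\nu_i')^2,
\end{eqnarray*}
which I would control on a high-probability event on which (i) $|\ln\nu_i|\leq O(\ln M)$ uniformly in $i\leq N$, and (ii) $|L_j|=O(\ln M)$ uniformly in $j\leq M$. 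Bound (i) uses Assumption~\ref{ass:var}(c) and a union bound over the $N$ query points to prevent any $\nu_i$ from being too large (producing the $\ln(M+N)$ factor, squared in the final bound), together with Assumption~\ref{ass:var}(d) to prevent $\nu_i$ from being too small. Bound (ii) uses $f\leq Cg$ from Assumption~\ref{ass:var}(d) to translate the $f$-weighted count defining $|L_j|$ into a Binomial count of parameter on the order of $k/M$, so that standard concentration controls $|L_j|$ up to a logarithmic factor; the complement events contribute negligibly by the moment bound in (c). Multiplying the two bounds and summing the $M$ terms produces the $O(\ln^4 M\ln^2(M+N)/M)$ rate; the hypothesis $N\ln M/M\to\infty$ is invoked to make the concentration arguments valid for sufficiently large $M$.

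The principal obstacle is the $\mathbf{Y}$-sum, where $|L_j|$ and $|\ln\nu_i-\ln\nu_i'|$ must be tamed simultaneously. Assumption~\ref{ass:var}(d) is indispensable here: without the bound $f\leq Cg$, regions of high density ratio $f/g$ could force $|L_j|$ to scale linearly with $N$, destroying the $1/M$ rate and preventing the stated variance bound.
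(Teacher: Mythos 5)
Your overall route is the same as the paper's: Efron--Stein over single-sample replacements, a neighbor-region set (your $L_j$, the paper's $S_1$) whose $P_g$-mass is controlled by a covering/Chernoff argument and transferred to an $\mathbf{X}$-count via $f\leq Cg$, a high-probability event on which $|\ln\nu_i|=O(\ln(M+N))$, and a fourth-moment/Cauchy--Schwarz treatment of the complement event. Two steps as you state them, however, do not hold. First, your claim (ii) that $|L_j|=O(\ln M)$ uniformly is false whenever $N\gg M$: on the good event $|L_j|$ is a Binomial$(N,p)$ count with $p\asymp k\ln^2 M/M$, so its typical size is $\asymp kN\ln^2 M/M$, which exceeds any power of $\ln M$ when, say, $N=M^2$. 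The correct high-probability bound (the one the paper proves for $n_X(S_1)$) is $O(N\ln^2 M/M)$; because of the $1/N^2$ prefactor in your Cauchy--Schwarz display this still gives $(1/N^2)(N\ln^2 M/M)^2\ln^2(M+N)$ per term and hence the stated $\ln^4 M\ln^2(M+N)/M$ after summing over $j$, so the fix is routine---but the uniform $O(\ln M)$ claim as written is simply wrong.

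Second, and more substantively, the $O(1/N)$ part cannot be obtained from ``$\mathbb{E}[\ln^2\epsilon_i],\mathbb{E}[\ln^2\nu_i]<\infty$.'' These moments are finite for each fixed sample size but are not bounded uniformly in it: since $\epsilon_i\asymp (k/(Nf(\mathbf{X}_i)))^{1/d}$, one has $\mathbb{E}[\ln^2\epsilon_i]\asymp\ln^2 N$, and plugging that into Efron--Stein yields only $O(\ln^2 N/N)$ for the $\mathbf{X}$-sum, which does not imply the theorem when $M\gg N$ (e.g.\ $M=N^2$, where $\ln^4 M\ln^2(M+N)/M\ll \ln^2 N/N$). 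What is actually needed is that the Efron--Stein differences---equivalently the normalized quantities $\ln(Nc_d\epsilon_l^d)$, and $\ln(Mc_d\nu_1^d)$ for the term in which swapping an $\mathbf{X}$ changes one $\nu$---have second moments bounded by a constant independent of $N$ and $M$, exploiting the cancellation of the $\ln N$ (resp.\ $\ln M$) terms. Establishing this is exactly where Assumption \ref{ass:var}(b) enters, and it is a nontrivial piece of the paper's proof (the decomposition of $\ln(Nc_d\epsilon^d)$ into $\ln(NP_f(B(\mathbf{X},\epsilon)))$, a local non-uniformity term, a truncation term, and $\ln f(\mathbf{X})$, together with the lemmas controlling each, including the untruncated-tail lemma). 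Your outline skips this normalization, and with the moment bound as you state it the argument falls short of the claimed $1/N$ term.
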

\begin{proof}
	(Outline) From \eqref{eq:estimator}, we have
	\begin{eqnarray}
	\Var[\hat{D}(f||g)]	&=&\Var\left[\frac{d}{N}\sum_{i=1}^N \ln \nu_i-\frac{d}{N}\sum_{i=1}^N \ln \epsilon_i\right]\nonumber\\
	&\leq &2\Var\left[\frac{d}{N}\sum_{i=1}^N\ln \epsilon_i\right]+2\Var\left[\frac{d}{N}\sum_{i=1}^N\ln \nu_i\right].
	\label{eq:vardecomp}
	\end{eqnarray}
	Our proof uses some techniques from \cite{biau2015lectures}, which proved the $\mathcal{O}(1/N)$ convergence of variance of Kozachenko-Leonenko entropy estimator with $k=1$ for one dimensional distributions, and \cite{zhao2019analysis}, which generalizes the result to arbitrary fixed dimension and $k$, without restrictions on the boundedness of the support. The basic idea is that if one sample is replaced by another i.i.d sample, then it can be shown that the $k$-NN distance will change only for a tiny fraction of the samples. 
	
	The first term in \eqref{eq:vardecomp} is just the variance of Kozachenko-Leonenko entropy estimator. Therefore we can use similar proof procedure as was already used in the proof of Theorem 2 in \cite{zhao2019analysis}. \cite{zhao2019analysis} analyzed a truncated Kozachenko-Leonenko entropy estimator, which means that $\epsilon_i$ is truncated by an upper bound $a_N$. We prove the same convergence bound for the estimator without truncation.
	
	For the second term in \eqref{eq:vardecomp}, the analysis becomes much harder, since the $k$-NN distance may change for much more samples from $\{\mathbf{X}_1,\ldots,\mathbf{X}_N \}$, instead of only a tiny fraction of samples. For this term, we design a new method to obtain the high probability bound of the deviation of $(d/N)\sum_{i=1}^N \ln \nu_i$ from its mean. The basic idea of our new methods can be briefly stated as following: Define two sets $S_1$ and $S_1'$, in which $S_1$ is a subset of $\mathbb{R}^d$ such that for any $\mathbf{x}\in S_1$, $\mathbf{Y}_1$ is among the $k$ nearest neighbors of $\mathbf{x}$ in  $\{\mathbf{Y}_1,\ldots,\mathbf{Y}_M \}$. Similarly, define $S_1'$ to be a set such that for all $\mathbf{x}\in S_1'$, $\mathbf{Y}_1'$ is among the $k$ nearest neighbors of $\mathbf{x}$. If we replace $\mathbf{Y}_1$ with $\mathbf{Y}_1'$, the kNN distance of $\mathbf{X}_i, i=1,\ldots, N$ will only change if $\mathbf{X}_i\in S_1$ or $\mathbf{X}_i\in S_1'$. With this observation, we give a high probability bound of the number of samples from $\{\mathbf{X}_1,\ldots,\mathbf{X}_N \}$ that are in $S_1$ and $S_1'$ respectively, and then bound the maximum difference of the estimated result caused by replacing $\mathbf{Y}_1$ with $\mathbf{Y}_1'$. Based on this bound, we can then bound the second term in \eqref{eq:vardecomp} using Efron-Stein inequality. 
	
		The detailed proof can be found in Appendix~\ref{sec:var}.
\end{proof}

In the analysis above, we have derived the convergence rate of bias and variance. With these results, we can then bound the mean square error of kNN based KL divergence estimator. For distributions that satisfy Assumptions \ref{ass:bounded} and \ref{ass:var}, the mean square error can be bounded by
\begin{eqnarray}
\mathbb{E}[(\hat{D}(f||g)-D(f||g))^2]=\mathcal{O}\left(M^{-\frac{2}{d}}\ln^\frac{2}{d}M+M^{-1}\ln^4 M\ln^2(M+N)+N^{-\frac{2}{d}}\ln^\frac{2}{d}N+N^{-1}\right).
\label{eq:mse1}
\end{eqnarray}

For distributions that satisfy Assumptions \ref{ass:unbounded} and \ref{ass:var}, the corresponding bound is
\begin{eqnarray}
\mathbb{E}[(\hat{D}(f||g)-D(f||g))^2]=\mathcal{O}\left(M^{-\frac{4\gamma}{d+2}}\ln^2 M +M^{-1}\ln^4 M\ln^2(M+N)+N^{-\frac{4\gamma}{d+2}}\ln^2 N+N^{-1}\right).
\label{eq:mse2}
\end{eqnarray}

\section{Minimax Analysis}\label{sec:mmx}
In this section, we derive the minimax lower bound of the mean square error of KL divergence estimation, which holds for all methods (not necessarily kNN based) that do not have the knowledge of the distributions $f$ and $g$. The minimax analysis also considers two cases, i.e. the distributions whose densities are bounded away from zero, and those who has approaching zero densities.

For the first case, the following theorem holds.
\begin{thm}\label{thm:mmx1}
	Define $\mathcal{S}_a$ as set of pairs $(f,g)$ that satisfies Assumptions \ref{ass:bounded} and \ref{ass:var}, and
	\begin{eqnarray}
	R_a(N,M):=\underset{\hat{D}}{\inf}\underset{(f,g)\in \mathcal{S}_a}{\sup}\mathbb{E}[(\hat{D}(N,M)-D(f||g))^2],
	\label{eq:Ra}
	\end{eqnarray}
	in which $\hat{D}(N,M)$ is the estimation of KL divergence using $N$ samples drawn from distribution with pdf $f$ and $M$ samples from $g$. Then for sufficiently large $U_f$, $U_g$, $H_f$, $H_g$ and sufficiently small $L_f$ and $L_g$, we have
	\begin{eqnarray}
	R_a(N,M)&=&\Omega\left(\frac{1}{N}+N^{-\frac{2}{d}\left(1+\frac{2}{\ln\ln N}\right)}\ln^{-2}N \ln^{-\left(2-\frac{2}{d}\right)}(\ln N)\right.\nonumber\\
	&&\left.+\frac{1}{M}+M^{-\frac{2}{d}\left(1+\frac{2}{\ln\ln M}\right)}\ln^{-2}M \ln^{-\left(2-\frac{2}{d}\right)}(\ln M)\right).
	\label{eq:mmx1}
	\end{eqnarray}
\end{thm}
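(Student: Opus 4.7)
The plan is to match the two groups of terms in \eqref{eq:mmx1} separately: the parametric contributions $\Omega(1/N)+\Omega(1/M)$ via Le Cam's two-point method, and the nonparametric contributions via a Fano-type multi-hypothesis construction based on smooth bumps on a grid.

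\emph{Parametric part.} I would fix a small smooth zero-mean perturbation $\psi$ on $[0,1]^d$ and set $g = U + \psi$, where $U$ is the uniform density and $\psi$ is chosen small enough that $g\in\mathcal{S}_a$. For $\alpha\in[0,1/2]$ define $f_\alpha = U + \alpha\psi$, which also lies in $\mathcal{S}_a$. A direct Taylor expansion yields
$D(f_\alpha\|g) = \tfrac{1}{2}(\alpha-1)^2 \int\psi^2/U + O(\|\psi\|_\infty^3),$
so the first-order (in $\alpha$) contribution survives because $g\neq U$, giving $|D(f_1\|g) - D(f_0\|g)| = \Theta(\alpha)$, while the one-sample KL is $D(f_0\|f_\alpha) = \Theta(\alpha^2)$. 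Choosing $\alpha = c/\sqrt{N}$ keeps $N\,D(f_0\|f_\alpha)$ bounded and produces a functional gap of order $1/\sqrt N$, so Le Cam's inequality yields $R_a(N,M) = \Omega(1/N)$. Symmetrically, fixing a non-uniform $f$ and perturbing $g$ along an aligned direction gives $\Omega(1/M)$; the same Taylor computation, now with $M$ samples from $g$, produces an $\Omega(1/M)$ contribution.

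\emph{Nonparametric part.} I would use a packing-plus-Fano construction. Partition a fixed subcube of the support into $m = h^{-d}$ disjoint cubes of side $h$, and let $\phi$ be a smooth zero-mean bump on $[0,1]^d$ with bounded Hessian and $\int\phi^2>0$. For each $\epsilon\in\{0,1\}^m$, set $f_\epsilon = U + \sum_j \epsilon_j \phi_{j,h}$, where $\phi_{j,h}(x) = a_h\,\phi((x - x_j)/h)$ with $a_h = \Theta(h^2)$ forced by Assumption~\ref{ass:bounded}(f) because $\|\nabla^2\phi_{j,h}\|_\infty = h^{-2}\|\nabla^2\phi\|_\infty$. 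Keep $g$ equal to a suitable fixed non-uniform smooth density so that each cross term $\int \phi_{j,h}\,\ln(g/U)$ is nonvanishing. A Taylor expansion then shows that $D(f_\epsilon\|g)$ is an affine function of $\epsilon$ to leading order, with per-coordinate coefficient of order $a_h h^{d+1}$, while the one-sample KL $D(f_\epsilon\|f_{\epsilon'})$ grows like $a_h^2 h^d$ per Hamming-flipped coordinate. Applying the Varshamov--Gilbert lemma to extract $2^{\Omega(m)}$ configurations at pairwise Hamming distance $\Omega(m)$, and invoking Fano's inequality under the constraint that the pairwise sample KL is $O(m)$, one obtains a lower bound whose leading order is $N^{-2/d}$ after optimizing $h$. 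The $M^{-2/d}$ contribution follows symmetrically by perturbing $g$ on a grid while keeping $f$ fixed.

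\emph{Main obstacle.} The leading $N^{-2/d}$ rate falls out of a single-scale Fano argument, but matching the exact logarithmic structure in \eqref{eq:mmx1}—in particular the $(1+2/\ln\ln N)$ correction inside the exponent and the $\ln^{-(2-2/d)}(\ln N)$ factor—will require a more delicate construction. These corrections appear to arise from a multi-scale/iterated packing in which the bandwidth $h$ is chosen over a logarithmic range and the number of active bumps is tuned layer by layer, peeling off successive log factors in the trade-off between functional separation and distributional separation. Throughout, the constants $U_f, U_g, H_f, H_g$ must be taken sufficiently large and $L_f, L_g$ sufficiently small so that every constructed pair $(f_\epsilon,g)$ lies in $\mathcal{S}_a$, which is precisely the parameter restriction stated in the theorem.
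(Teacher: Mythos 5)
Your parametric step is sound and is essentially the paper's own argument (a two-point Le Cam bound with a perturbation of size $1/\sqrt{N}$, resp.\ $1/\sqrt{M}$; the paper uses piecewise-constant densities instead of a smooth $\psi$, but this is immaterial). The nonparametric step, however, has a genuine gap: a fixed-support, smoothness-limited bump construction cannot produce the $N^{-2/d}$-type rate in \eqref{eq:mmx1}. Work out your own numbers: the Hessian bound forces $a_h\asymp h^2$, so the per-bump contribution to the linear (cross-entropy) part is $O(a_h h^{d+1})=O(h^{d+3})$ and the total functional separation over $m\asymp h^{-d}$ bumps is $O(h^3)$, while the pairwise KL over a constant fraction of flipped coordinates is $\asymp m\,a_h^2h^d\asymp h^4$; Fano then forces $h\lesssim N^{-1/(d+4)}$, giving a squared-error bound of order $N^{-6/(d+4)}$, which is strictly weaker than $N^{-2/d}$ for every $d\ge 3$ (and for $d\le 2$ the nonparametric term is dominated by $1/N$ anyway). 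This is not a matter of optimizing the construction: all your hypotheses have the same support $[0,1]^d$ with densities bounded away from zero, and the paper explicitly notes (remark after Theorem \ref{thm:mmx1}) that with a known support the rate is faster; the $N^{-2/d}$ hardness comes from the \emph{unknown support} under the surface-area constraint of Assumption \ref{ass:bounded}(c). In the paper's family $\mathcal{F}_a$ in \eqref{eq:fa}, $f$ and $g$ are mixtures of $m\asymp D^{-(d-1)}$ disjoint balls of radius $D$ whose weights $u_i$ may be exactly $0$, so the support itself varies; the constraint $mD^{d-1}\asymp 1$ together with the Poisson-mixture total-variation bound yields $D\asymp(L/N)^{1/d}$ and $\alpha\asymp mD^d\asymp D$, which is where $N^{-2/d}$ comes from. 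A second, independent problem is that Varshamov--Gilbert plus Fano is ill-suited here: to leading order $D(f_\epsilon\|g)$ is affine in $\epsilon$ with identical per-coordinate coefficients, so codewords at Hamming distance $\Theta(m)$ can have the same $\sum_j\epsilon_j$ and hence nearly identical divergence values; Hamming separation does not give functional separation, so the reduction from estimation to multiple testing breaks down for this (essentially linear) functional.

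What the paper does instead is the method of two fuzzy hypotheses: after Poissonization and a normalization-relaxation step (Lemmas \ref{lem:1to2} and \ref{lem:2to3}), it places two priors $U,U'$ on the bump weights with matching moments up to degree $L$, bounds the total variation between the resulting Poisson mixtures by moment matching (Lemma \ref{lem:TV}), and obtains the functional separation $\Delta$ from the error of best degree-$L$ polynomial approximation of $\ln x$ on $[\eta,1]$ (Lemma \ref{lem:lecam}). The peculiar logarithmic structure in \eqref{eq:mmx1} --- the $(1+2/\ln\ln N)$ correction in the exponent and the $\ln^{-2}N\,\ln^{-(2-2/d)}(\ln N)$ factor --- comes precisely from choosing $L\asymp\ln\ln N$ and $D\asymp (L/N)^{1/d}N^{-1/(Ld)}$ in this moment-matching argument, not from any multi-scale or iterated packing as you conjecture. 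So to repair your proof you would need to (i) replace the fixed-support smooth bumps by support-varying atoms respecting the surface-area budget, and (ii) replace VG$+$Fano by a two-prior (or at least composite two-point) argument with moment matching and polynomial approximation.
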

\begin{proof} (Outline)
	The minimax lower bound of functional estimation can be bounded using Le Cam's method \cite{tsybakov2009introduction}. For the proof of Theorem \ref{thm:mmx1}, we use some techniques from \cite{wu2016minimax}, which derived the minimax bound of entropy estimation for discrete distributions. The main idea is to construct a subset of distributions that satisfy Assumptions \ref{ass:bounded} and \ref{ass:var}, and then conduct Poisson sampling. These operations can help us calculate the distance between two distributions in a more convenient way, which is important for using Le Cam's method. Details of the proof can be found in Appendix~\ref{sec:mmx1}.
\end{proof}
\eqref{eq:mmx1} can be simplified as
\begin{eqnarray}
R_a(N,M)=\Omega\left(\frac{1}{N}+\frac{1}{M}+N^{-\left(\frac{2}{d}+\delta\right)}+M^{-\left(\frac{2}{d}+\delta\right)}\right),
\label{eq:mmx1new}
\end{eqnarray} 
for arbitrarily small $\delta>0$. 

We remark that in Theorem \ref{thm:mmx1}, the support set $S_f$ and $S_g$ of pdfs $f$ and $g$ are unknown. If we assume that $S_f$ and $S_g$ are known, then with some boundary correction methods, such as the mirror reflection method proposed in \cite{liu2012exponential}, the convergence rate can be faster than that in \eqref{eq:mmx1}. However, in Theorem \ref{thm:mmx1}, instead of using fixed support sets, $\mathcal{S}_a$ contains distributions with a broad range of different support sets. These support sets are only restricted by Assumption \ref{ass:bounded} (c) and (d), which require that the surface area of all the elements in $\mathcal{S}_a$ are bounded by $H_f$ and $H_g$, and the diameters are bounded by $R$. As a result, the minimax convergence rate becomes slower. This result indicates the inherent difficulty caused by the boundary effect for distributions with densities bounded away from zero.

For the second case, the corresponding result is shown in Theorem \ref{thm:mmx2}.
\begin{thm}\label{thm:mmx2}
	Define $\mathcal{S}_b$ as set of pairs $(f,g)$ that satisfies Assumptions \ref{ass:unbounded} and \ref{ass:var}, and
	\begin{eqnarray}
	R_b(N,M):=\underset{\hat{D}}{\inf}\underset{(f,g)\in \mathcal{S}_b}{\sup}\mathbb{E}[(\hat{D}(N,M)-D(f||g))^2],
	\label{eq:Rb}	
	\end{eqnarray}
	then for sufficiently large $\mu,C_0,K$,
	\begin{eqnarray}
	R_b(N,M)=\Omega\left(\frac{1}{M}+\frac{1}{N}+M^{-\frac{4\gamma}{d+2}}(\ln M)^{-\frac{4d+8-4\gamma}{d+2}}+N^{-\frac{4\gamma}{d+2}}(\ln N)^{-\frac{4d+8-4\gamma}{d+2}}\right).
	\label{eq:mmx2}
	\end{eqnarray}
\end{thm}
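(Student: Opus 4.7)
The plan is to prove Theorem~\ref{thm:mmx2} by Le~Cam's two-point method augmented by a moment-matching construction, following the same high-level strategy as the proof of Theorem~\ref{thm:mmx1} and borrowing the ideas of \cite{wu2016minimax} that were already invoked there. The lower bound splits naturally into a parametric component $\Omega(1/N+1/M)$ and a functional-estimation component $\Omega(N^{-4\gamma/(d+2)}(\ln N)^{-(4d+8-4\gamma)/(d+2)})$ plus its $M$-analogue; I would treat these separately.

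For the parametric component I fix a smooth base density $\tilde g$ that satisfies Assumption~\ref{ass:unbounded} with strict slack in every constant, and construct two competing ``$f$'' densities $f_i=\tilde g(1+\epsilon\theta_i\psi)$ (suitably renormalized) where $\psi$ is a fixed smooth compactly supported function with $\int\psi\tilde g\,d\mathbf{x}=0$, $\theta_0=+1$, $\theta_1=-1$, and $\epsilon\asymp N^{-1/2}$. A second-order Taylor expansion of $x\mapsto x\ln x$ yields $|D(f_0\|\tilde g)-D(f_1\|\tilde g)|=\Theta(\epsilon)$, while a standard Hellinger calculation shows that the two joint product laws for the $N$ samples from $f$ (together with $M$ shared samples from $\tilde g$) remain at bounded total variation distance. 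Le~Cam's inequality produces the $\Omega(1/N)$ contribution, and the symmetric construction in which $g$ is perturbed and $f$ is fixed gives the $\Omega(1/M)$ contribution.

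For the functional component I Poissonize the two sample sizes, so that the data become independent Poisson point processes of intensities $Nf$ and $Mg$ on $\mathbb{R}^d$, and then use a bump construction supported in a ``tail shell''. Take a smooth positive base $\tilde g$ that saturates Assumption~\ref{ass:unbounded}(b) (so $\mathrm{P}(\tilde g(\mathbf{X})\leq t)\asymp t^\gamma$), pick a density level $\rho=\rho(N)$, and set $A_\rho=\{\mathbf{x}:\rho/2\leq\tilde g(\mathbf{x})\leq\rho\}$. By the tail condition and smoothness of $\tilde g$, the Lebesgue volume of $A_\rho$ is of order $\rho^{\gamma-1}$ up to logarithmic factors. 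Tile $A_\rho$ by $m\asymp\rho^{\gamma-1}/h^d$ disjoint balls $B(\mathbf{x}_i,h)$ and perturb $\tilde g$ by $\sum_{i=1}^m\alpha_i h^2\phi((\mathbf{x}-\mathbf{x}_i)/h)$, with $\phi$ a fixed $C^2$ bump and $\alpha_i$ random amplitudes of order one; the $h^2$ factor is precisely the scale at which the Hessian constraint $\|\nabla^2 f\|\leq C_0$ is saturated uniformly in $h$. Draw $(\alpha_i)$ from two product priors $\pi_0^{\otimes m},\pi_1^{\otimes m}$ whose first $L$ moments coincide. The matching annihilates the leading $L-1$ terms in the chi-squared expansion, so that $\chi^2(P_0^{\otimes N},P_1^{\otimes N})\lesssim Nm(h^d\rho)^L\alpha_{\max}^{2L}$, while the difference of expected KL divergences is $\Delta\asymp mh^{2L}\rho^{1-L}$ --- the first non-matching term in the Taylor expansion of $\int(\tilde g+\delta g)\ln(1+\delta g/\tilde g)\,d\mathbf{x}$. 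Choosing $L\asymp\ln N$, $h\asymp(\rho\ln N/N)^{1/d}$, and optimizing $\rho$ against the tail constraint produces $\Delta^2\asymp N^{-4\gamma/(d+2)}(\ln N)^{-(4d+8-4\gamma)/(d+2)}$, and Le~Cam yields the desired $\Omega$ bound; the mirror construction that perturbs $g$ instead of $f$ supplies the $M$-term.

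The main obstacle will be the last part of the functional construction: producing matched product priors $(\pi_0,\pi_1)$ on the amplitudes $(\alpha_i)$ whose supports are simultaneously consistent with non-negativity of the perturbed density, with the prescribed tail constant $\mu$ in Assumption~\ref{ass:unbounded}(b) (a random bump must not push $\tilde g$ across an unintended level set so as to increase the tail mass), and with the density-ratio bound $f\leq Cg$ of Assumption~\ref{ass:var}(d). The bookkeeping among the Hessian bound, the tail constant, and the density-ratio bound is what forces the precise exponent $(4d+8-4\gamma)/(d+2)$ of the logarithm in \eqref{eq:mmx2}: the log slack must absorb both the Poissonization error and the extra freedom required to carry out moment-matching to degree $L\asymp\ln N$, and a sharp analysis demands tracking how every prefactor in the chi-squared and KL-gap calculations depends on $L$.
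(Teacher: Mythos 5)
Your high-level plan (split off the $1/N+1/M$ parametric terms via a two-point Le~Cam argument, then get the functional terms by Poissonization plus moment-matched priors on bump amplitudes, mirroring the construction to obtain the $M$-terms) is the same strategy the paper follows, but two of your concrete constructions break down. First, the parametric part: you perturb $f$ around the \emph{same} density $\tilde g$ that serves as the second argument of the divergence, taking $f_i=\tilde g(1+\epsilon\theta_i\psi)$ with $\int\psi\tilde g\,d\mathbf{x}=0$ and claiming $|D(f_0\|\tilde g)-D(f_1\|\tilde g)|=\Theta(\epsilon)$. This cannot hold: the first variation of $f\mapsto D(f\|\tilde g)$ at $f=\tilde g$ is $\int\eta\ln(f/\tilde g)\,d\mathbf{x}=0$ for every centered direction $\eta$, so $D(f_i\|\tilde g)=\tfrac{\epsilon^2}{2}\int\tilde g\psi^2\,d\mathbf{x}+O(\epsilon^3)$ with the $\epsilon^2$ term identical for $\theta=\pm1$; the separation is only $O(\epsilon^3)=O(N^{-3/2})$, yielding a lower bound of order $N^{-3}$, not $1/N$. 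The fix, which is what the paper does, is to perturb $f$ around a base density \emph{different} from $g$ (two Gaussians with variances $\sigma_1=(1+\delta)\sigma_2$, $\sigma_2^2=1/2$, tested against a standard Gaussian $g$), so that the functional gap is genuinely first order in $\delta$ while $D(f_1\|f_2)\lesssim\delta^2$ keeps the two product laws close.

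Second, the functional part: your bumps of amplitude $\asymp h^2$ ride on top of a base of height $\asymp\rho$ inside a tail shell, and you match $L\asymp\ln N$ moments of the amplitude priors. With your own choice $h\asymp(\rho\ln N/N)^{1/d}$ the relative perturbation $h^2/\rho$ is an uncontrolled small quantity, and your stated gap $\Delta\asymp m h^{2L}\rho^{1-L}=m\rho\,(h^2/\rho)^L$ is then geometrically small in $L$, i.e.\ destroyed by the moment matching unless $h^2\asymp\rho$ --- a constraint you never impose and which is inconsistent with your choice of $h$ in general; as written, the ``optimization over $\rho$'' cannot produce $\Delta^2\asymp N^{-4\gamma/(d+2)}$ up to logarithms. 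The paper avoids this trap by making the perturbed bumps the \emph{only} mass in their balls (the classes $\mathcal{F}_b$ and $\mathcal{G}_b$ in the appendix): the divergence restricted to the bumps is exactly $\sum_i(u_i/m)\ln(u_i/\alpha)$, so the problem reduces to a discrete, Poisson-count functional estimation problem in which the best-polynomial-approximation gap ($x\ln x$ for the $N$-term, quoted from \cite{zhao2019analysis}; $\ln x$ on $[cL^{-2},1]$ for the $M$-term via Lemma \ref{lem:lecam2} and Lemma 5 of \cite{wu2016minimax}) costs only polylogarithmic factors, with the Hessian constraint entering solely through $u_i/(mD^{d+2})<1$ and the tail constraint through $1<mD^{d+2(1-\gamma)}<C_1$. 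Relatedly, for the $M$-term the paper must impose $v_i\geq C_2\alpha$ to preserve the density-ratio condition $f\leq Cg$ of Assumption \ref{ass:var}(d), and the one-dimensional approximation problem changes from $x\ln x$ to $\ln x$; you flag this bookkeeping as an obstacle but your sketch does not supply the mechanism that resolves it.
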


\begin{proof} (Outline)
	The minimax convergence rate of differential entropy estimation under similar assumptions was derived in \cite{zhao2019analysis}. We can extend the analysis to the minimax convergence rate of cross entropy estimation between $f$ and $g$. Combine the bound for entropy and cross entropy, we can then obtain the minimax lower bound of the mean square error of KL divergence estimation. The detailed proof is shown in Appendix \ref{sec:mmx2}.
\end{proof}

Comparing \eqref{eq:mmx1new} with \eqref{eq:mse1}, as well as \eqref{eq:mmx2} with \eqref{eq:mse2}, we observe that the convergence rate of the upper bound of mean square error of kNN based KL divergence estimator nearly matches the minimax lower bound for both cases. These results indicate that the kNN method with fixed $k$ is nearly minimax rate optimal. 

\section{Numerical Examples}\label{sec:numerical}
In this section, we provide numerical experiments to illustrate the theoretical results in this paper. In the simulation, we plot the curve of the estimated bias and variance over sample sizes. For illustration simplicity, we assume that the sample sizes for two distributions are equal, i.e. $M=N$. For each sample size, the bias and variance are estimated by repeating the simulation $T$ times, and then calculate the sample mean and the sample variance of all these trials. For low dimensional distributions, the bias is relatively small, therefore it is necessary to conduct more trials comparing with high dimensional distributions. In the following experiments, we repeat $T=100,000$ times if $d=1$, and $10,000$ times if $d>1$. In all of the figures, we use log-log plots with base $10$. In all of the trials, we fix $k=3$.


 Figure \ref{fig:A} shows the convergence rate of kNN based KL divergence estimator for two uniform distributions with different support. This case is an example that satisfies Assumption \ref{ass:bounded}. In Figure \ref{fig:B}, $f$ and $g$ are two Gaussian distributions with different mean but equal variance. In Figure \ref{fig:C}, $f$ and $g$ are two Gaussian distributions with the same mean but different variance. These two cases are examples that satisfy Assumption \ref{ass:unbounded}.

\begin{figure}[h!]
	\subfigure[Bias]{\includegraphics[width=0.49\linewidth]{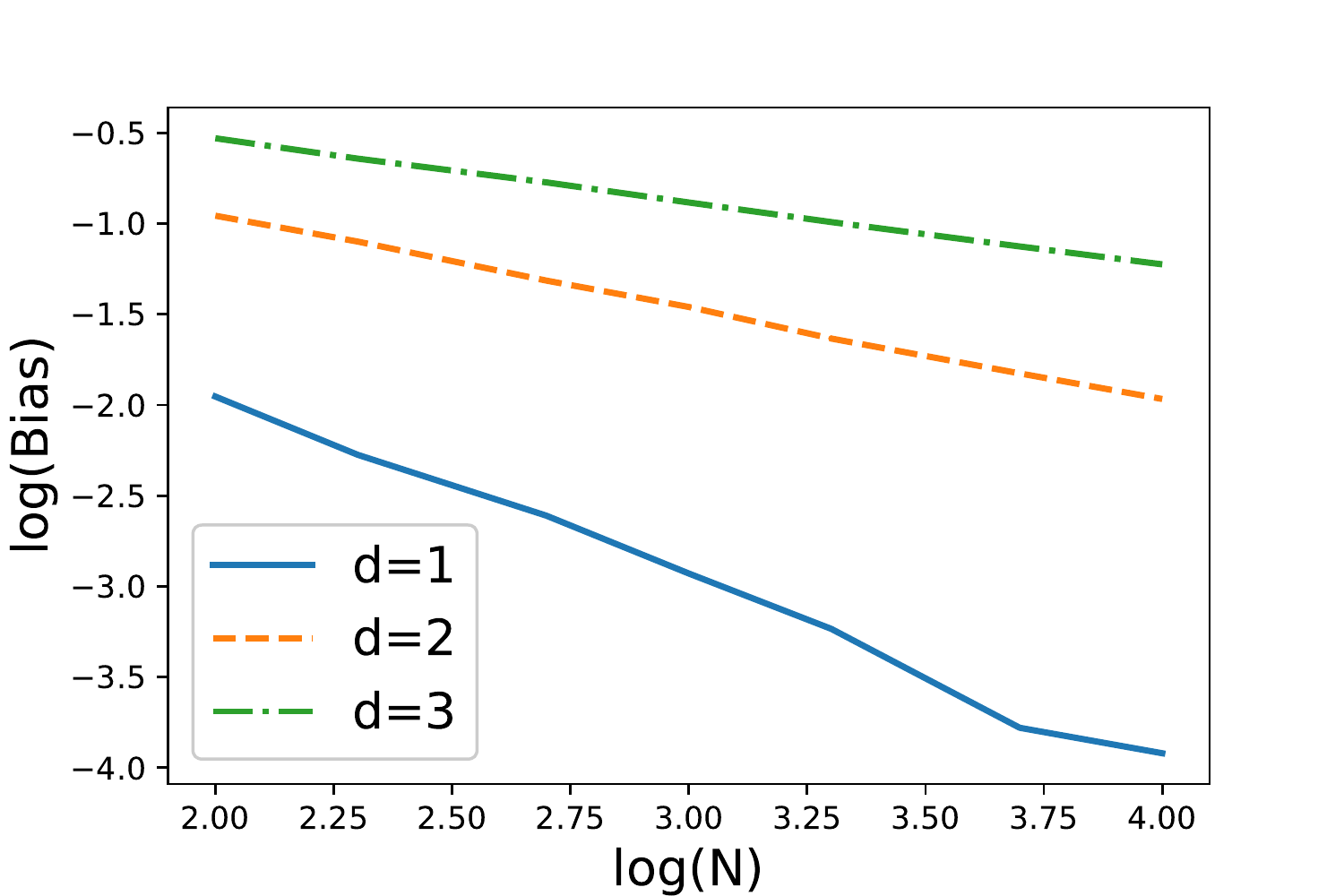}}	
	\subfigure[Variance]{\includegraphics[width=0.49\linewidth]{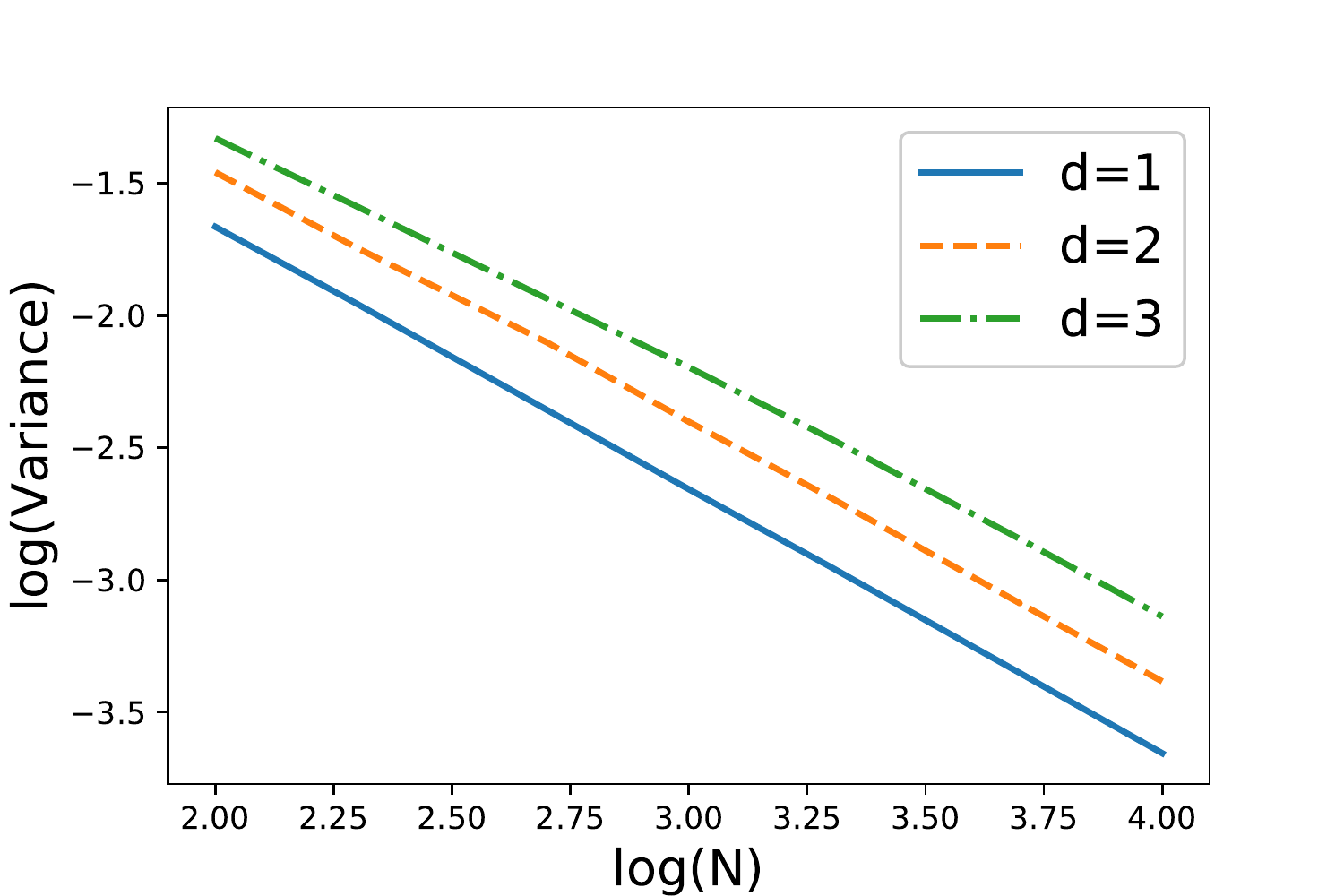}}							
	\caption{Convergence of bias and variance of kNN based KL divergence estimator for two uniform distributions with different support sets. $f=1$ in $[0.5,1.5]^d$, and $g=2^{-d}$ in $[0,2]^d$.}	\label{fig:A}
\end{figure}

\begin{figure}[h!]
	\subfigure[Bias]{\includegraphics[width=0.49\linewidth]{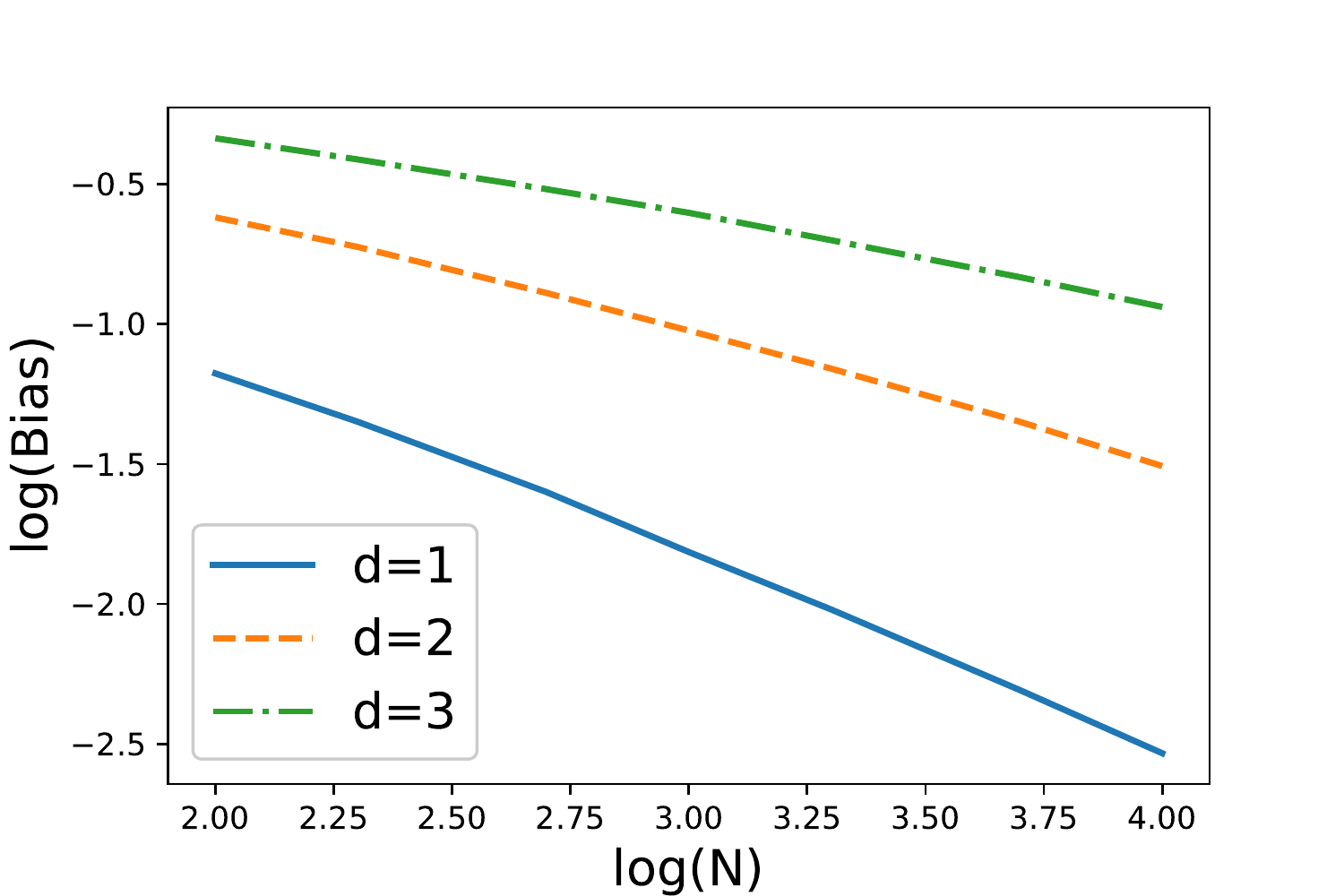}}	
	\subfigure[Variance]{\includegraphics[width=0.49\linewidth]{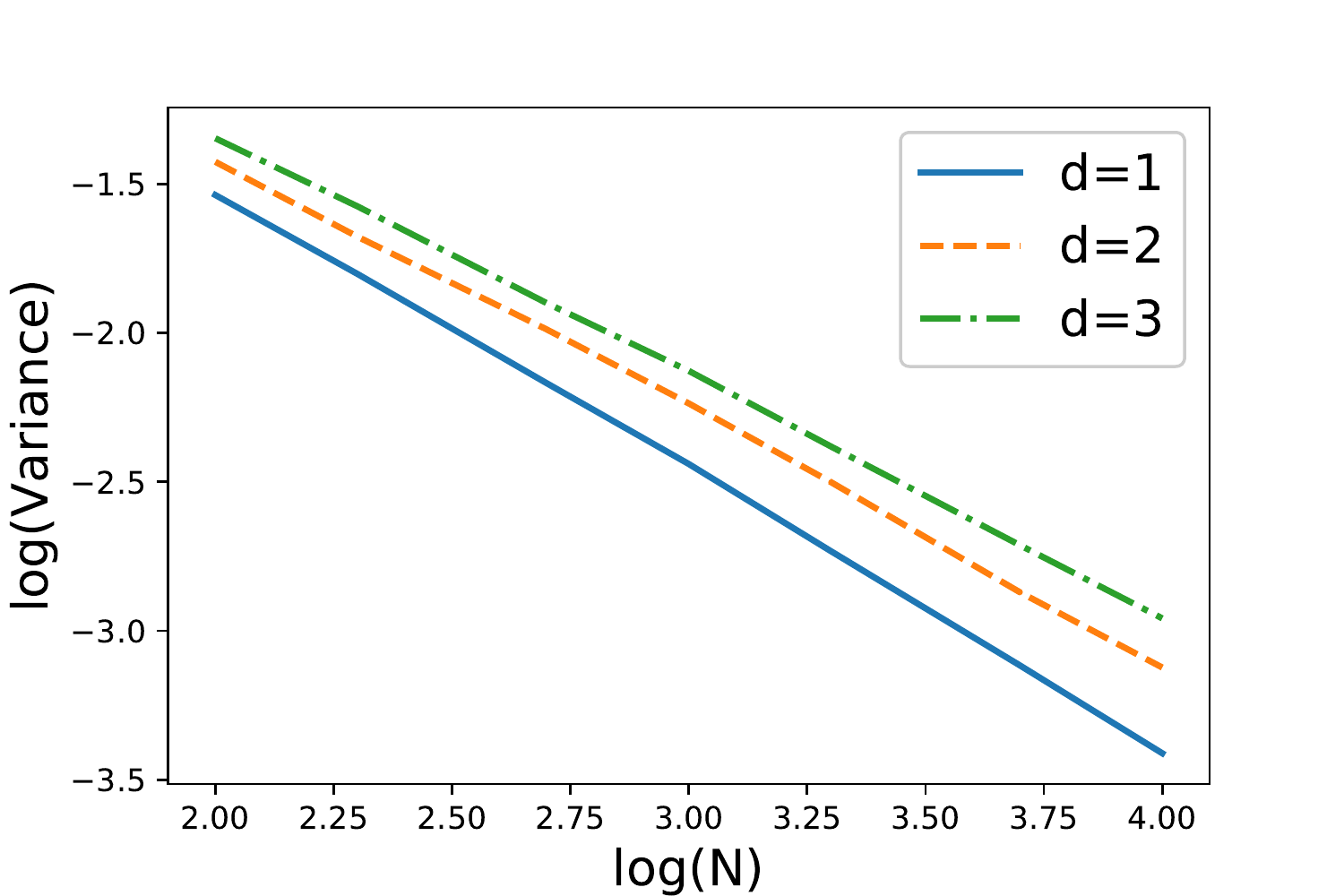}}							
	\caption{Convergence of bias and variance of kNN based KL divergence estimator for two Gaussian distributions with different means. $f$ is the pdf of $\mathcal{N}(\mathbf{0},\mathbf{I}_d)$, and $g$ is the pdf of $\mathcal{N}(\mathbf{1},\mathbf{I}_d)$, in which $\mathbf{I}_d$ denotes $d$ dimensional identity matrix, and $\mathbf{1}=(1,\ldots,1)$. }	\label{fig:B}
\end{figure}

\begin{figure}[h!]
	\subfigure[Bias]{\includegraphics[width=0.49\linewidth]{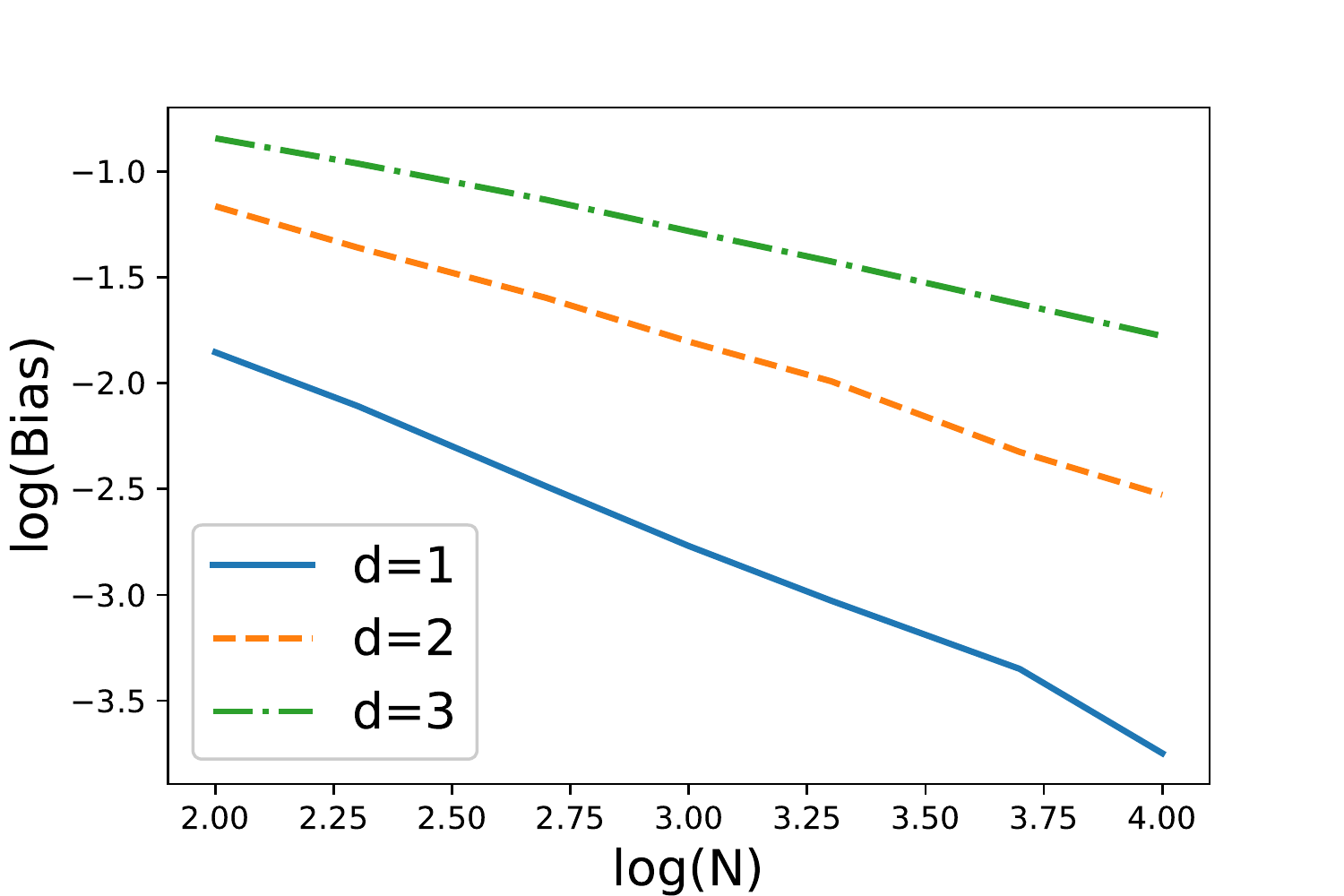}}	
	\subfigure[Variance]{\includegraphics[width=0.49\linewidth]{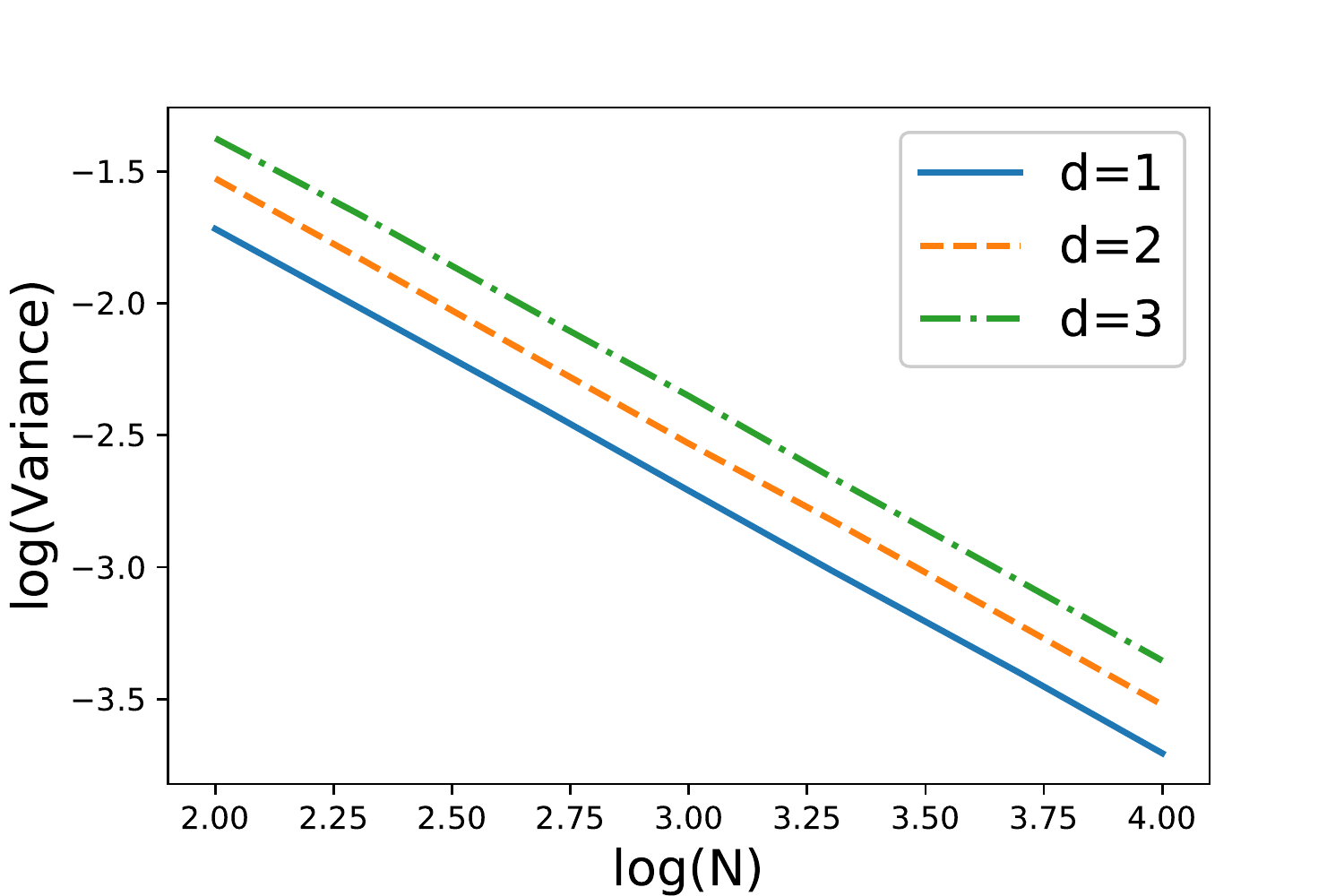}}							
	\caption{Convergence of bias and variance of kNN based KL divergence estimator for two Gaussian distributions with different variances. $f$ is the pdf of $\mathcal{N}(\mathbf{0},\mathbf{I}_d)$, and $g$ is the pdf of $\mathcal{N}(\mathbf{0},2\mathbf{I}_d)$.}	\label{fig:C}
\end{figure}

For all of these distributions above, we compare the empirical convergence rates of the bias and variance with the theoretical prediction. The empirical convergence rates are calculated by finding the negative slope of the curves in these figures by linear regression, while the theoretical ones come from Theorems \ref{thm:bounded}, \ref{thm:unbounded} and \ref{thm:var} respectively. The results are shown in Table \ref{tab:compare}. For the convenience of expression, we say that the theoretical convergence rate of bias or variance is $\beta$, if it decays with either $O(N^{-\beta})$ or $O(N^{-\beta+\delta})$ for arbitrarily small $\delta>0$, given the condition $M=N$.
  
\begin{table}[h]
	\caption{Theoretical and empirical convergence rate comparison}
	\label{tab:compare}
	\vskip 0.15in
	\begin{center}
		\begin{small}
			
			\begin{tabular}{|c|c|c|c|c|c|c|}
				\hline
				&\multicolumn{3}{|c|}{Bias, Empirical/Theoretical}  & \multicolumn{3}{|c|}{Variance, Empirical/Theoretical} \\
				\hline
				& $d=1$&$d=2$&$d=3$ &$d=1$&$d=2$&$d=3$\\
				\hline
				Fig.\ref{fig:A}  & 1.01/1.00&0.51/0.50 &0.34/0.33 &1.00/1.00&0.98/1.00&0.96/1.00 \\
				Fig.\ref{fig:B}&0.68/0.67 &0.47/0.50&0.36/0.40&0.94/--&0.85/--&0.81/--\\
				Fig.\ref{fig:C}&0.90/0.67 &0.68/0.50&0.45/0.40 &0.99/1.00&1.00/1.00&0.99/1.00 \\
				\hline
			\end{tabular}
			
		\end{small}
	\end{center}
	\vskip -0.1in
\end{table}

In Table \ref{tab:compare}, we observe that for the distribution used in Figure \ref{fig:A}, the empirical convergence rates of both bias and variance agree well with the theoretical prediction, in which the theoretical bound of bias comes from Theorem \ref{thm:bounded}, while the variance comes from Theorem \ref{thm:var}. 

For the distribution in Figure \ref{fig:B}, the empirical convergence of bias matches the theoretical prediction from Theorem \ref{thm:unbounded}. For Gaussian distributions with different mean, it can be shown that for any $\gamma<1$, there exists a constant $\mu$ such that Assumption \ref{ass:unbounded} (b) holds. Therefore, according to Theorem \ref{thm:unbounded}, the convergence rate of bias is $\mathcal{O}(N^{-\frac{2}{d+2}+\delta})$ for arbitrarily small $\delta>0$. Therefore, in the second line of Table \ref{tab:compare}, the theoretical rate of bias is $0.67$, $0.50$ and $0.40$, respectively. Now we discuss the convergence rate of variance. Note that the theoretical result about the variance is unknown, since $f/g$ can reach infinity, thus Assumption \ref{ass:var} (d) is not satisfied, and Theorem \ref{thm:var} does not hold here. We observe that the empirical convergence rate is slower than that in other cases. Such a result may indicate that it is harder to estimate the KL divergence if the density ratio is unbounded.

For the distribution in Figure \ref{fig:C}, the empirical and theoretical convergence rate of the variance matches well, while the empirical rate of bias is faster than the theoretical prediction. Note that the bound we have derived holds universally for all distributions that satisfy the assumptions. For certain specific distribution, the convergence rate can probably be faster. In particular, there is an uniform bound on the Hessian of $f$ and $g$ in Assumption \ref{ass:unbounded} (c). However, for Gaussian distributions, the Hessian is lower where the pdf value is small. Therefore, the local non-uniformity is not as serious as the worst case that satisfies the assumptions. 

\section{Conclusion}\label{sec:conclusion}

In this paper, we have analyzed the convergence rates of the bias and variance of the kNN based KL divergence estimator proposed in \cite{wang2009divergence}. For the bias, we have discussed two types of distributions depending on the main causes of the bias. In the first case, the distribution has bounded support, and the pdf is bounded away from zero. In the second case, the distribution is smooth everywhere and the pdf can approach zero arbitrarily close. For the variance, we have derived the convergence rate under a more general assumption. Furthermore, we have derived the minimax lower bound of KL divergence estimation. The bound holds for all possible estimators. We have shown that for both types of distributions, the kNN based KL divergence estimator is nearly minimax rate optimal. We have also used numerical experiments to illustrate that the practical performances of kNN based KL divergence estimator are consistent with our theoretical analysis.




\appendices
\section{Proof of Theorem \ref{thm:bounded}}\label{sec:bounded}
According to \eqref{eq:estimator},
\begin{eqnarray}
\mathbb{E}[\hat{D}(f||g)]-D(f||g)&=&\frac{d}{N}\mathbb{E}[\ln \nu-\ln \epsilon]+\ln \frac{M}{N-1}-\mathbb{E}[\ln f(\mathbf{X})]+\mathbb{E}[\ln g(\mathbf{X})]\nonumber\\
&=&-\left[-\psi(k)+\psi(N)+\ln c_d+d\mathbb{E}[\ln \epsilon]+\mathbb{E}[\ln f(\mathbf{X})]\right]\nonumber\\
&&+\left[-\psi(k)+\psi(M+1)+\ln c_d+d\mathbb{E}[\ln \nu] +\mathbb{E}[\ln g(\mathbf{X})]\right]\nonumber\\
&&+\ln M-\psi(M+1)-\ln(N-1)+\psi(N)\nonumber\\
&:=&-I_1+I_2+I_3,
\end{eqnarray}
in which
\begin{eqnarray}
	I_1&=&-\psi(k)+\psi(N)+\ln c_d+d\mathbb{E}[\ln \epsilon]+\mathbb{E}[\ln f(\mathbf{X})],\\
	I_2&=&-\psi(k)+\psi(M+1)+\ln c_d+d\mathbb{E}[\ln \nu] +\mathbb{E}[\ln g(\mathbf{X})],\\
	I_3&=&\ln M-\psi(M+1)-\ln(N-1)+\psi(N),
\end{eqnarray} and $c_d$ is the volume of unit ball. Here, we omit $i$, since $\mathbb{E}[\ln \epsilon(i)]$ and $\mathbb{E}[\ln \nu(i)]$ are the same for all $i$.

In the following, we provide details on how to bound $I_2$. $I_1$ can then be bounded using similar method.

To begin with, we denote $P_g(S)$ as the probability mass of $S$ under pdf $g$, i.e. $P_g(S)=\int_S g(\mathbf{x})d\mathbf{x}$. We have the following lemma.
\begin{lem}\label{lem:pdf}
	There exists a constant $C_1$, such that, if $B(\mathbf{x},r)\subset S_g$, we have $$|P_g(B(\mathbf{x},r))-c_dr^dg(\mathbf{x})|\leq C_1 r^{d+2}.$$
\end{lem}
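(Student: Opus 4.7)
The plan is to obtain the estimate by Taylor expanding $g$ around the center $\mathbf{x}$ of the ball and integrating termwise. Since $B(\mathbf{x},r)\subset S_g$, the integrand $g$ is $C^2$ on the entire ball (by the Hessian bound in Assumption \ref{ass:bounded}(f)), so for every $\mathbf{y}\in B(\mathbf{x},r)$ there exists $\boldsymbol{\xi}_{\mathbf{y}}$ on the segment from $\mathbf{x}$ to $\mathbf{y}$ with
\begin{equation*}
g(\mathbf{y})=g(\mathbf{x})+\nabla g(\mathbf{x})^{\top}(\mathbf{y}-\mathbf{x})+\tfrac{1}{2}(\mathbf{y}-\mathbf{x})^{\top}\nabla^{2}g(\boldsymbol{\xi}_{\mathbf{y}})(\mathbf{y}-\mathbf{x}).
\end{equation*}

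Integrating this over $B(\mathbf{x},r)$ gives three contributions. The constant term yields exactly $c_d r^d g(\mathbf{x})$, which is the quantity we want to subtract off. The linear term vanishes by symmetry: $\int_{B(\mathbf{x},r)}(\mathbf{y}-\mathbf{x})d\mathbf{y}=\mathbf{0}$ because the ball is centrally symmetric about $\mathbf{x}$ (this is where it matters that the whole ball lies inside $S_g$, so no boundary truncation breaks the symmetry). Thus
\begin{equation*}
P_g(B(\mathbf{x},r))-c_d r^d g(\mathbf{x})=\tfrac{1}{2}\int_{B(\mathbf{x},r)}(\mathbf{y}-\mathbf{x})^{\top}\nabla^{2}g(\boldsymbol{\xi}_{\mathbf{y}})(\mathbf{y}-\mathbf{x})d\mathbf{y}.
\end{equation*}

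Bounding the quadratic form by $\|\nabla^2 g(\boldsymbol{\xi}_{\mathbf{y}})\|\cdot\|\mathbf{y}-\mathbf{x}\|^{2}\leq C_0\|\mathbf{y}-\mathbf{x}\|^{2}$ and then integrating in spherical coordinates, $\int_{B(\mathbf{x},r)}\|\mathbf{y}-\mathbf{x}\|^{2}d\mathbf{y}=d c_d\int_{0}^{r}\rho^{d+1}d\rho=\frac{d}{d+2}c_d r^{d+2}$, we obtain
\begin{equation*}
\bigl|P_g(B(\mathbf{x},r))-c_d r^d g(\mathbf{x})\bigr|\leq \tfrac{1}{2}C_0\cdot\tfrac{d}{d+2}c_d\, r^{d+2},
\end{equation*}
so the claim follows with $C_1:=\frac{d\, c_d\, C_0}{2(d+2)}$.

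The argument is essentially a second-order Taylor remainder estimate; the only step requiring care is the hypothesis $B(\mathbf{x},r)\subset S_g$, which is what lets us (i) invoke the Hessian bound pointwise on the whole ball and (ii) exploit the exact symmetry of the ball to kill the first-order term. If instead the ball exceeded $S_g$, the linear term would no longer cancel and the error would only be $O(r^{d+1})$ rather than $O(r^{d+2})$; this is precisely the boundary-effect phenomenon that the main proof handles separately by splitting into central and boundary regions.
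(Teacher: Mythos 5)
Your proof is correct and follows essentially the same route as the paper's: a second-order Taylor expansion of $g$ about $\mathbf{x}$, cancellation of the first-order term by the central symmetry of the ball (which the paper uses implicitly), and the Hessian bound $C_0$ from Assumption \ref{ass:bounded}(f) to control the remainder. Your version is in fact slightly more careful, making the symmetry cancellation explicit and yielding the sharper constant $C_1=\frac{d\,c_d\,C_0}{2(d+2)}$ versus the paper's $C_0 c_d$.
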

\begin{proof}
	\begin{eqnarray}
	|P_g(B(\mathbf{x},r))-g(\mathbf{x})c_dr^d|&=&\left|\int_{B(\mathbf{x},r)}(g(\mathbf{u})-g(\mathbf{x}))d\mathbf{u}\right|\nonumber\\
	&\leq &\left|\int_{B(\mathbf{x},r)}\nabla g(\mathbf{x})(\mathbf{u}-\mathbf{x})d\mathbf{u}+\int_{B(\mathbf{x},r)}C_0(\mathbf{u}-\mathbf{x})^T(\mathbf{u}-\mathbf{x})d\mathbf{u}\right|\nonumber\\
	&\leq & C_0r^2V(B(\mathbf{x},r))\nonumber\\
	&=&C_0c_dr^{d+2},
	\end{eqnarray}
	in which the first inequality uses Assumption \ref{ass:bounded} (f).
\end{proof}

From order statistics \cite{david1970order}, $\mathbb{E}[\ln P_g(B(\mathbf{x},r))]=\psi(k)-\psi(M+1)$, therefore
\begin{eqnarray}
I_2=-\mathbb{E}\left[\ln \frac{P_g(B(\mathbf{X},\nu))}{c_d\nu^d g(\mathbf{X})}\right].
\label{eq:I2}
\end{eqnarray}

Define
\begin{eqnarray}
S_1&=&\{\mathbf{x}|B(\mathbf{x},a_M)\subset S_g \},\\
S_2&=&S_g\setminus S_1,
\end{eqnarray}
in which $a_M=A(\ln M/M)^{1/d}$, and $A=(2/(L_gc_d))^{1/d}$. From \eqref{eq:I2}, we observe that the bias is determined by the difference between the average pdf in $B(\mathbf{x},\nu)$ and the pdf at its center $g(\mathbf{x})$. $S_1$ is the region that is relatively far from the boundary. For all $\mathbf{x}\in S_1$, with high probability, $B(\mathbf{x},\nu)\subset S_g$. In this case, the bias is caused by the non-uniformity of density. With the increase of sample size, the effect of such non-uniformity will converge to zero. $S_2$ is the region near to the boundary, in which the probability that $B(\mathbf{x},\nu)\not\subset S$ is not negligible, hence $P(B(\mathbf{x},\nu))$ can deviate significantly comparing with $c_d\nu^d g(\mathbf{x})$. Therefore, the bias in this region will not converge to zero. However, we let the size of $S_2$ converge to zero, so that the overall bound of the bias converges.

For sufficiently large $M$,
\begin{eqnarray}
&&\left|\mathbb{E}\left[\left(\ln \frac{P_g(B(\mathbf{X},\nu))}{c_d\nu^d g(\mathbf{X})}\right)\mathbf{1}(\mathbf{X}\in S_1)\right]\right|\nonumber\\
&\leq& \left|\mathbb{E}\left[\left(\ln \frac{P_g(B(\mathbf{X},\nu))}{c_d\nu^d g(\mathbf{X})}\right)\mathbf{1}(\mathbf{X}\in S_1,\nu\leq a_M)\right]\right|
+\left|\mathbb{E}\left[\left(\ln \frac{P_g(B(\mathbf{X},\nu))}{c_d\nu^d g(\mathbf{X})}\right)\mathbf{1}(\mathbf{X}\in S_1,\nu > a_M)\right]\right|\nonumber\\
&\overset{(a)}{\leq}&\left|\mathbb{E}\left[\ln\left(1-\frac{C_1\nu^2}{c_dg(\mathbf{X})}\right)\mathbf{1}(\nu\leq a_M, \mathbf{X}\in S_1)\right]\right|+\ln \frac{U_g}{a L_g} \text{P}(\mathbf{X}\in S_1, \nu>a_M)\nonumber\\
&\overset{(b)}{\leq}&\frac{2C_1}{c_dL_g}a_M^2 +\ln \frac{U_g}{a L_g}\left(\frac{e}{k}\right)^k \frac{(2\ln M)^k}{M^2}\sim \left(\frac{\ln M}{M}\right)^\frac{2}{d}.
\label{eq:S1a}
\end{eqnarray}
In step (a), we use Lemma \ref{lem:pdf}, Assumption \ref{ass:bounded} (b) and Assumption \ref{ass:bounded} (e). In step (b), the first term uses the fact that for sufficiently large $M$, $a_M$ will be sufficiently small, hence $C_1\nu^2/(c_dg(\mathbf{x}))\leq C_1a_M^2/(c_dg(\mathbf{x}))<1/2$. The second term of step (b) comes from the Chernoff bound, which indicates that for all $\mathbf{x}\in S_1$ and sufficiently large $M$,
\begin{eqnarray}
\text{P}(\nu>a_M|\mathbf{x})&\leq& e^{-MP_g(B(\mathbf{x},a_M))}\left(\frac{eMP_g(B(\mathbf{x},a_M))}{k}\right)^k\nonumber\\
&\leq & e^{-ML_gc_da_M^d}\left(\frac{eML_gc_da_M^d}{k}\right)^k\nonumber\\
&=&\left(\frac{e}{k}\right)^k \frac{(2\ln M)^k}{M^2}.
\end{eqnarray}

Moreover, 
\begin{eqnarray}
\left|\mathbb{E}\left[\ln \frac{P_g(B(\mathbf{X},\nu))}{c_d\nu^d g(\mathbf{X})} \mathbf{1}(\mathbf{X}\in S_2)\right]\right|&\leq & \ln \frac{U_g}{a L_g} \text{P}(\mathbf{X}\in S_2)\nonumber\\
&\leq & \ln \frac{U_g}{a L_g} U_g V(S_2)\nonumber\\
&\leq & \ln \frac{U_g}{a L_g} U_g H_g a_M\nonumber\\
&\sim & \left(\frac{\ln M}{M}\right)^\frac{1}{d}.
\label{eq:S2a}
\end{eqnarray}
In this equation, $V(S_2)$ is the volume of $S_2$, and we use the fact that $V(S_2)\leq H_ga_M$ according to the definition of $S_2$ and Assumption \ref{ass:bounded} (c). Based on \eqref{eq:S1a} and \eqref{eq:S2a}, 
\begin{eqnarray}
|I_2|\lesssim \left(\frac{\ln M}{M}\right)^\frac{1}{d}.
\end{eqnarray}
Similarly, we have $|I_1|\lesssim (\ln N/N)^{(1/d)}$, and according to the definition of digamma function $\psi$, $|I_3|\lesssim 1/M+1/N$. Therefore
\begin{eqnarray}
|\mathbb{E}[\hat{D}(f||g)]-D(f||g)|\lesssim \left(\frac{\ln \min\{M,N \}}{\min\{M,N\}}\right)^\frac{1}{d}.
\end{eqnarray}
\section{Proof of Theorem \ref{thm:unbounded}}\label{sec:unbounded}
In this section, we derive the bound of the bias for distributions that satisfy Assumption \ref{ass:unbounded}. These distributions are smooth everywhere and the densities can approach zero. We begin with the following lemmas, whose proofs can be found in Appendix \ref{sec:ub}, \ref{sec:log}, and \ref{sec:tail}, respectively.
\begin{lem}\label{lem:ub}
	There exist constants $U_f$ and $U_g$ such that $f(\mathbf{x})\leq U_f$ and $g(\mathbf{x})\leq U_g$ for all $\mathbf{x}$.
\end{lem}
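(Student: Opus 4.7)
The plan is to combine the Hessian bound in Assumption \ref{ass:unbounded}(c) with the fact that $f$ and $g$ are probability densities (non-negative, integrating to one) to derive a uniform pointwise upper bound through a Taylor expansion plus integration argument. Note that Assumptions \ref{ass:unbounded}(a), (b), (d) play no role in this lemma.

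Fix any $\mathbf{x}_0 \in \mathbb{R}^d$ and any radius $r>0$ (to be optimized later). By Taylor's theorem with Lagrange remainder, for every $\mathbf{x} \in \mathbb{R}^d$ there is a point $\mathbf{z}$ on the segment from $\mathbf{x}_0$ to $\mathbf{x}$ with
\begin{equation*}
f(\mathbf{x}) = f(\mathbf{x}_0) + \nabla f(\mathbf{x}_0)^T(\mathbf{x}-\mathbf{x}_0) + \tfrac{1}{2}(\mathbf{x}-\mathbf{x}_0)^T\nabla^2 f(\mathbf{z})(\mathbf{x}-\mathbf{x}_0),
\end{equation*}
and Assumption \ref{ass:unbounded}(c) bounds the quadratic term by $\tfrac{C_0}{2}\|\mathbf{x}-\mathbf{x}_0\|^2$ in absolute value. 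Integrating this identity over $B(\mathbf{x}_0,r)$ makes the linear term vanish, since $\int_{B(\mathbf{x}_0,r)}(\mathbf{x}-\mathbf{x}_0)d\mathbf{x} = \mathbf{0}$ by symmetry of the ball, and the integrated remainder is bounded by $\tfrac{C_0}{2} c_d r^{d+2}$. Combined with $f \geq 0$ and $\int f = 1$, this gives
\begin{equation*}
c_d r^d f(\mathbf{x}_0) \leq \int_{B(\mathbf{x}_0,r)} f(\mathbf{x}) d\mathbf{x} + \tfrac{C_0}{2} c_d r^{d+2} \leq 1 + \tfrac{C_0}{2} c_d r^{d+2}.
\end{equation*}

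Rearranging yields $f(\mathbf{x}_0) \leq 1/(c_d r^d) + (C_0/2)r^2$ for every $r>0$, with a right-hand side that is independent of $\mathbf{x}_0$. Minimizing over $r$ (the optimum balances the two terms at $r \propto (1/C_0)^{1/(d+2)}$) produces a finite constant depending only on $d$ and $C_0$; take this constant as $U_f$. The identical argument applied to $g$ gives $U_g$, since Assumption \ref{ass:unbounded}(c) supplies the same Hessian bound on $g$ and $g$ is also a probability density. There is no real obstacle: the only subtlety worth flagging is that integration over a centered ball kills the gradient contribution by symmetry, which is what allows the single integrability constraint to be converted into a pointwise upper bound.
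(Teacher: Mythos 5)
Your argument is correct and is essentially the paper's proof: the paper invokes Lemma \ref{lem:pdf} (itself obtained by the same Taylor expansion with the gradient term vanishing by ball symmetry and the Hessian bound controlling the remainder), combines it with $P_g(B(\mathbf{x},r))\leq 1$, and picks a specific radius $r$, whereas you simply inline that expansion and optimize over $r$. The only cosmetic difference is your sharper constant $C_0/2$ and the explicit minimization, which changes nothing substantive.
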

\begin{lem}\label{lem:log}
	There exists a constant $C_2$, such that $$\mathbb{E}[|\ln \norm{\mathbf{X}}|\mathbf{1}(g(\mathbf{X})\leq t)]\leq C_2t^\gamma \ln(1/t)$$ for sufficiently small $t$, in which $\mathbf{X}$ follows a distribution with pdf $f$.
\end{lem}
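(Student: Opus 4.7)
The plan is to decompose the expectation into three contributions according to the magnitude of $\norm{\mathbf{X}}$, and bound each piece separately. I would introduce two cutoff radii $r_1=t^{\gamma/d}$ and $r_2=t^{-2\gamma/s}$, where $s>0$ is the moment exponent provided by Assumption \ref{ass:unbounded}(d), and write
\begin{equation*}
\mathbb{E}[|\ln \norm{\mathbf{X}}|\mathbf{1}(g(\mathbf{X})\leq t)] = A_1+A_2+A_3,
\end{equation*}
where $A_1,A_2,A_3$ are the contributions from the regions $\norm{\mathbf{X}}\leq r_1$, $r_1<\norm{\mathbf{X}}\leq r_2$, and $\norm{\mathbf{X}}>r_2$ respectively. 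For sufficiently small $t$ one has $r_1<1<r_2$, so $|\ln\norm{\mathbf{x}}|$ can be controlled by a different tool in each of the three regimes.

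For $A_1$, I would drop the indicator $\mathbf{1}(g(\mathbf{X})\leq t)$ and use the uniform upper bound $f(\mathbf{x})\leq U_f$ from Lemma \ref{lem:ub}; a polar-coordinate calculation then gives $A_1\leq U_f\int_{\norm{\mathbf{x}}\leq r_1}|\ln\norm{\mathbf{x}}|d\mathbf{x}=O(r_1^d\ln(1/r_1))=O(t^\gamma\ln(1/t))$. For $A_2$, I would uniformly bound $|\ln\norm{\mathbf{X}}|$ on the annulus by $\max\{\ln(1/r_1),\ln r_2\}=O(\ln(1/t))$, and then apply Assumption \ref{ass:unbounded}(b) to obtain $A_2\leq O(\ln(1/t))\cdot \mu t^\gamma$. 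For $A_3$, again drop the indicator and use the elementary inequality $\ln x\leq (2/s)x^{s/2}$ valid for $x\geq 1$, followed by Cauchy-Schwarz together with the moment bound $\mathbb{E}[\norm{\mathbf{X}}^s]\leq K$ and Markov's inequality, so that $A_3\leq (2/s)\mathbb{E}[\norm{\mathbf{X}}^{s/2}\mathbf{1}(\norm{\mathbf{X}}>r_2)]\leq (2K/s)r_2^{-s/2}=O(t^\gamma)$. Summing the three estimates yields the claimed bound $C_2 t^\gamma\ln(1/t)$.

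The main delicate step is the simultaneous calibration of the two cutoff radii so that no single piece exceeds $t^\gamma\ln(1/t)$: the small-radius cutoff $r_1$ is chosen by balancing $\int_0^{r_1}|\ln r|r^{d-1}dr$ against $t^\gamma$, while the large-radius cutoff $r_2$ is set so that the tail moment estimate for $A_3$ matches the same $t^\gamma$ order. Notably, beyond these three ingredients---the uniform pdf bound from Lemma \ref{lem:ub}, the probability tail bound in Assumption \ref{ass:unbounded}(b), and the moment bound in Assumption \ref{ass:unbounded}(d)---no other properties of $f$ or $g$ are needed, and neither the smoothness assumption nor any distributional structure of $g(\mathbf{X})$ beyond its left-tail probability is invoked.
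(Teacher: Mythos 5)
Your proposal is correct, and it takes a genuinely different route from the paper. The paper's proof applies H\"older's inequality in a single step, $\mathbb{E}[|\ln \norm{\mathbf{X}}|\mathbf{1}(g(\mathbf{X})\leq t)]\leq (\mathbb{E}[|\ln\norm{\mathbf{X}}|^p])^{1/p}(\mu t^\gamma)^{1-1/p}$, then bounds all moments $\mathbb{E}[|\ln\norm{\mathbf{X}}|^p]\lesssim C^p\, p!$ by splitting $\text{P}(|\ln\norm{\mathbf{X}}|^p>u)$ into a large-norm tail (Assumption \ref{ass:unbounded}(d) via Markov) and a small-norm piece (density upper bound times ball volume), and finally invokes Stirling and tunes $p=\ln(1/t)$, which is exactly where the extra $\ln(1/t)$ factor appears. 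You instead localize in $\norm{\mathbf{X}}$ with two calibrated radii $r_1=t^{\gamma/d}$ and $r_2=t^{-2\gamma/s}$: the inner ball is handled by the density bound and the integrability of the log singularity (giving $r_1^d\ln(1/r_1)\sim t^\gamma\ln(1/t)$), the annulus by the crude bound $|\ln\norm{\mathbf{x}}|\lesssim\ln(1/t)$ together with Assumption \ref{ass:unbounded}(b), and the outer region by $\ln x\leq (2/s)x^{s/2}$ plus the $s$-th moment bound, so that $r_2^{-s/2}=t^\gamma$; all three pieces check out and sum to $O(t^\gamma\ln(1/t))$. The ingredients are the same as the paper's (density upper bound from Lemma \ref{lem:ub}, the left-tail bound on $g(\mathbf{X})$, and the $s$-th moment bound), but your argument is more elementary—no optimization over H\"older exponents, no factorial moment growth or Stirling—at the cost of carrying two cutoffs; the paper's version is more compact once the moment estimate is in hand and makes the $t$-dependent exponent choice do the balancing that your radii do explicitly. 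One small caveat on your closing remark: smoothness is not invoked directly, but Lemma \ref{lem:ub} itself is proved from the Hessian bound, so the smoothness assumption enters indirectly in both your proof and the paper's.
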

\begin{lem}\label{lem:tail}
	For sufficiently small $t$,
	\begin{eqnarray}
	\int_{g(\mathbf{x})>t}\frac{f(\mathbf{x})}{g(\mathbf{x})}d\mathbf{x}\leq \left\{
	\begin{array}{ccc}
	\mu \left(1+\ln \frac{1}{\mu t}\right) &\text{if} & \gamma=1\\
	\frac{\mu}{1-\gamma} t^{\gamma-1}&\text{if} & \gamma<1.
	\end{array}
	\right.
	\end{eqnarray}
\end{lem}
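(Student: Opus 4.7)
}
The plan is to rewrite the integral as an expectation under $f$ and apply a layer-cake style identity. Specifically, note that
\begin{eqnarray}
\int_{g(\mathbf{x})>t}\frac{f(\mathbf{x})}{g(\mathbf{x})}d\mathbf{x}=\mathbb{E}_{\mathbf{X}\sim f}\left[\frac{1}{g(\mathbf{X})}\mathbf{1}(g(\mathbf{X})>t)\right].
\end{eqnarray}
For any $a>t>0$, the elementary identity $\frac{1}{a}=\int_a^\infty s^{-2}ds$ gives $\frac{1}{a}\mathbf{1}(a>t)=\int_t^\infty s^{-2}\mathbf{1}(t<a<s)\,ds$. Applying this with $a=g(\mathbf{X})$ and exchanging expectation and integration by Fubini yields
\begin{eqnarray}
\mathbb{E}\left[\frac{1}{g(\mathbf{X})}\mathbf{1}(g(\mathbf{X})>t)\right]=\int_t^\infty \frac{1}{s^2}\,\text{P}(t<g(\mathbf{X})<s)\,ds\leq \int_t^\infty \frac{1}{s^2}\,\text{P}(g(\mathbf{X})<s)\,ds.
\end{eqnarray}

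Next I would plug in the tail bound from Assumption~\ref{ass:unbounded}(b), namely $\text{P}(g(\mathbf{X})\leq s)\leq \mu s^\gamma$, together with the trivial bound $\text{P}(g(\mathbf{X})<s)\leq 1$, so the integrand is controlled by $s^{-2}\min(1,\mu s^\gamma)$. The two cases then follow by direct evaluation. For $\gamma<1$, the factor $s^{\gamma-2}$ is integrable on $[t,\infty)$, and using $\text{P}(g(\mathbf{X})<s)\leq \mu s^\gamma$ throughout gives
\begin{eqnarray}
\int_t^\infty \mu s^{\gamma-2}\,ds=\frac{\mu}{1-\gamma}t^{\gamma-1},
\end{eqnarray}
which matches the claim. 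For $\gamma=1$ this direct estimate diverges logarithmically, so one must split the integral at the crossover point $s=1/\mu$ (assuming $t<1/\mu$, which holds for $t$ sufficiently small): use $\text{P}(g(\mathbf{X})<s)\leq \mu s$ on $[t,1/\mu]$ and $\text{P}(g(\mathbf{X})<s)\leq 1$ on $[1/\mu,\infty)$. The first piece gives $\mu\ln(1/(\mu t))$ and the second gives $\mu$, yielding $\mu(1+\ln(1/(\mu t)))$.

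The steps are entirely routine; the only mild subtlety is choosing the truncation so that the $\gamma=1$ integral does not blow up, which is handled cleanly by comparing $\mu s^\gamma$ with $1$ and splitting at their intersection. Nothing in the argument requires the smoothness or moment assumptions of Assumption~\ref{ass:unbounded}; only part (b) is used.
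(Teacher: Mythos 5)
Your proposal is correct and essentially reproduces the paper's argument: the paper also writes the integral as $\mathbb{E}\bigl[\tfrac{1}{g(\mathbf{X})}\mathbf{1}(g(\mathbf{X})>t)\bigr]$ and applies the layer-cake formula $\mathbb{E}[Z]=\int_0^\infty \text{P}(Z>u)\,du$, which after the substitution $u=1/s$ is exactly your identity $\tfrac1a=\int_a^\infty s^{-2}ds$ plus Fubini, and then bounds the resulting CDF integral by Assumption~\ref{ass:unbounded}(b), with the same split at the crossover with the trivial bound $1$ in the $\gamma=1$ case. No gaps; your observation that only part (b) of the assumption is needed is also consistent with the paper's proof.
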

Similar to the proof of Theorem \ref{thm:bounded}, we decompose the bias as $\mathbb{E}[\hat{D}(f||g)]-D(f||g)=-I_1+I_2+I_3$. Then
\begin{eqnarray}
|I_2|=\left|\mathbb{E}\left[\ln \frac{P_g(B(\mathbf{X},\nu))}{c_d\nu^dg(\mathbf{X})}\right]\right|.
\end{eqnarray}
Divide $S_g$ into two parts.
\begin{eqnarray}
S_1&=&\left\{\mathbf{x}|g(\mathbf{x})>\frac{2C_1}{c_d}a_M^2 \right\},\label{eq:S1def}\\
S_2&=&S_g\setminus S_1,
\end{eqnarray}
in which $a_M=AM^{-\beta}$, $A=(k/C_1)^{(1/(d+2))}$. $\beta$ will be determined later. $C_1$ is the constant in Lemma \ref{lem:pdf}.

We first consider the region $S_1$. 
\begin{eqnarray}
\left|\mathbb{E}\left[\ln \frac{P_g(B(\mathbf{x},\nu))}{c_d\nu^d g(\mathbf{X})} \mathbf{1}(\mathbf{X}\in S_1,\nu\leq a_M)\right]\right|&\overset{(a)}{\leq}&\left|\mathbb{E}\left[\ln\left(1-\frac{C_1a_M^2}{c_dg(\mathbf{X})}\right)\mathbf{1}(\mathbf{X}\in S_1,\nu\leq a_M)\right]\right|\nonumber\\
&\overset{(b)}{\leq}&\left|\mathbb{E}\left[\frac{2C_1a_M^2}{c_dg(\mathbf{X})}\mathbf{1}(\mathbf{X}\in S_1)\right]\right|\nonumber\\
&\lesssim& a_M^2\int_{g(\mathbf{x})>\frac{2C_1}{c_d}a_M^2} \frac{f(\mathbf{x})}{g(\mathbf{x})}d\mathbf{x}\nonumber\\
&\overset{(c)}{\lesssim} &\left\{
\begin{array}{ccc}
M^{-2\beta \gamma} &\text{if} &\gamma<1 \\
M^{-2\beta}\ln M &\text{if} & \gamma=1,
\end{array}
\right.
\label{eq:I21}
\end{eqnarray}
in which (a) comes from Lemma \ref{lem:pdf}. For (b), note that according to \eqref{eq:S1def}, $C_1a_M^2/(c_dg(\mathbf{x}))<1/2$ for $\mathbf{x}\in S_1$, and $|\ln (1-u)|\leq 2u$ for any $0<u\leq 1/2$. (c) uses Lemma \ref{lem:tail}. 

For $\nu>a_M$, note that according to Lemma \ref{lem:pdf},
\begin{eqnarray}
P_g(B(\mathbf{x},a_M))\geq c_da_M^dg(\mathbf{x})-C_1a_M^{d+2}\geq \frac{1}{2}c_da_M^dg(\mathbf{x}).
\label{eq:masslb}
\end{eqnarray} 

Based on this fact, if $\beta\leq 1/(d+2)$, we show the following two lemmas:
\begin{lem}\label{lem:largeradius}
	There exists a constant $C_3$, such that
	\begin{eqnarray}
	\text{P}(\nu>a_M,\mathbf{X}\in S_1)\leq C_3M^{-\gamma(1-\beta d)}.
	\end{eqnarray}
\end{lem}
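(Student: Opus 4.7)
The plan is to condition on $\mathbf{X}=\mathbf{x}$ and observe that $\nu>a_M$ is exactly the event that at most $k-1$ of the i.i.d.\ samples $\mathbf{Y}_1,\ldots,\mathbf{Y}_M$ fall in the ball $B(\mathbf{x},a_M)$. Writing $p(\mathbf{x}):=P_g(B(\mathbf{x},a_M))$, the conditional probability is therefore a Binomial lower tail:
\begin{eqnarray*}
\text{P}(\nu>a_M\mid \mathbf{X}=\mathbf{x})=\sum_{j=0}^{k-1}\binom{M}{j}p(\mathbf{x})^j(1-p(\mathbf{x}))^{M-j}.
\end{eqnarray*}
For $\mathbf{x}\in S_1$, inequality \eqref{eq:masslb} gives $p(\mathbf{x})\geq \tfrac{1}{2}c_d a_M^d g(\mathbf{x})=\tfrac{1}{2}c_d A^d M^{-\beta d}g(\mathbf{x})$, so $Mp(\mathbf{x})\geq b\,M^{1-\beta d}g(\mathbf{x})$ with $b:=c_dA^d/2$. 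Combined with the choice $A=(k/C_1)^{1/(d+2)}$ and the hypothesis $\beta\leq 1/(d+2)$, a short calculation on the defining condition of $S_1$ shows $Mp(\mathbf{x})\geq kM^{1-\beta(d+2)}\geq k$ uniformly on $S_1$, placing us in the regime where Chernoff-type bounds are effective.

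Next, I will invoke the standard binomial tail estimate
\begin{eqnarray*}
\text{P}(\mathrm{Bin}(M,p)\leq k-1)\leq C_k(Mp)^{k-1}e^{-Mp},
\end{eqnarray*}
valid whenever $Mp\geq k$, obtained by keeping only the dominant $j=k-1$ summand and using $(1-p)^{M-k+1}\leq e^{-(M-k+1)p}$. To prevent the polynomial prefactor from spoiling the final rate, I will absorb it into the exponential via the elementary inequality $s^{k-1}e^{-s}\leq C_k'e^{-s/2}$ for $s\geq 0$ (whose maximum is reached at $s=2(k-1)$), obtaining the clean conditional bound
\begin{eqnarray*}
\text{P}(\nu>a_M\mid\mathbf{X}=\mathbf{x})\leq C\,e^{-b'M^{1-\beta d}g(\mathbf{x})},\qquad \mathbf{x}\in S_1,
\end{eqnarray*}
with $b':=b/2$ and $C$ a constant depending only on $k$.

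Integrating over $\mathbf{X}\sim f$ and setting $\lambda:=b'M^{1-\beta d}$ reduces the task to bounding $\int f(\mathbf{x})e^{-\lambda g(\mathbf{x})}d\mathbf{x}$. Integration by parts, using $F(t):=\text{P}(g(\mathbf{X})\leq t)$ and $F(0)=0$ (guaranteed by Assumption \ref{ass:unbounded}(a)), gives
\begin{eqnarray*}
\int f(\mathbf{x})e^{-\lambda g(\mathbf{x})}d\mathbf{x}=\lambda\int_0^\infty e^{-\lambda t}F(t)\,dt\leq \mu\lambda\int_0^\infty e^{-\lambda t}t^\gamma\,dt=\mu\,\Gamma(\gamma+1)\,\lambda^{-\gamma},
\end{eqnarray*}
where the inequality uses Assumption \ref{ass:unbounded}(b). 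Substituting $\lambda=b'M^{1-\beta d}$ yields $\text{P}(\nu>a_M,\mathbf{X}\in S_1)\leq C_3 M^{-\gamma(1-\beta d)}$ as claimed.

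The main obstacle is handling the polynomial prefactor $(Mp)^{k-1}$ in the binomial tail: a naive substitution would leave an uncancelled factor $M^{(k-1)(1-\beta d)}$, destroying the rate. The inequality $s^{k-1}e^{-s}\leq C_ke^{-s/2}$ is the key move, shifting this polynomial cost into a mild reduction in the exponential rate so that the subsequent integration against the $\gamma$-tail produces exactly $M^{-\gamma(1-\beta d)}$. A secondary check, routine but worth noting, is that the boundary case $\beta=1/(d+2)$ is covered: the constant $A=(k/C_1)^{1/(d+2)}$ was chosen precisely so that $Mp(\mathbf{x})\geq k$ persists up to this endpoint on $S_1$.
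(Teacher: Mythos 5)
Your proposal is correct and follows essentially the same route as the paper: a conditional binomial/Chernoff lower-tail bound combined with the mass lower bound $P_g(B(\mathbf{x},a_M))\geq \tfrac{1}{2}c_d a_M^d g(\mathbf{x})$ on $S_1$, absorption of the polynomial prefactor into half the exponent, and then integration of $e^{-cM^{1-\beta d}g(\mathbf{X})}$ against $f$ via the tail condition of Assumption \ref{ass:unbounded}(b), yielding $\mu\Gamma(\gamma+1)$ times $M^{-\gamma(1-\beta d)}$ exactly as in the paper's argument. The only differences (the elementary $j=k-1$-term binomial estimate in place of the paper's Chernoff form, and halving the exponent before rather than after substituting the lower bound on $Mp$) are cosmetic.
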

\begin{proof}
	Please see Appendix \ref{sec:largeradius} for detailed proof.
\end{proof}
\begin{lem}\label{lem:lognu}
	There exists a constant $C_4$, such that
	\begin{eqnarray}
	\mathbb{E}\left[\ln \frac{\nu}{a_M}\mathbf{1}(\nu>a_M,\mathbf{X}\in S_1)\right]\leq C_4 M^{-\gamma(1-\beta d)}\ln M.
	\end{eqnarray}
\end{lem}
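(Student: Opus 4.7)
The plan is to convert the expectation into a tail integral and then split the range of integration into two regimes: a ``bulk'' regime where Lemma~\ref{lem:largeradius} already gives a sufficiently small tail probability, and a ``far tail'' regime where the moment assumption~(d) of Assumption~\ref{ass:unbounded} forces $\nu$ to be small with high probability. First I would use the layer-cake identity
\begin{eqnarray}
\mathbb{E}\left[\ln\frac{\nu}{a_M}\mathbf{1}(\nu>a_M,\mathbf{X}\in S_1)\right]=\int_0^\infty \text{P}(\nu>a_Me^t,\mathbf{X}\in S_1)\,dt,
\end{eqnarray}
which converts the problem into uniformly bounding $\text{P}(\nu>a_Me^t,\mathbf{X}\in S_1)$ as a function of $t$.

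Next I would choose a cut-off $T=C_5\ln M$ with $C_5$ to be fixed. On $[0,T]$ I would use monotonicity, $\{\nu>a_Me^t\}\subseteq\{\nu>a_M\}$, and apply Lemma~\ref{lem:largeradius} to obtain $\text{P}(\nu>a_Me^t,\mathbf{X}\in S_1)\leq C_3M^{-\gamma(1-\beta d)}$ for every $t\in[0,T]$. Integrating over $[0,T]$ contributes at most $C_3C_5M^{-\gamma(1-\beta d)}\ln M$, which is already of the desired order.

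For $t>T$, where $r:=a_Me^t$ is large, I would argue that $\nu>r$ is very unlikely regardless of $\mathbf{x}$. The key observation is that if $\|\mathbf{X}\|\leq r/2$, then $B(\mathbf{X},r)\supseteq B(\mathbf{0},r/2)$, so $\nu>r$ forces fewer than $k$ of the $\mathbf{Y}_i$ to lie in $B(\mathbf{0},r/2)$; by Assumption~\ref{ass:unbounded}(d) and Markov's inequality, $\text{P}(\|\mathbf{Y}\|>r/2)\leq 2^sK r^{-s}$, so the expected count is at least $M/2$ once $r$ is moderately large, and a Chernoff bound renders $\text{P}(\text{count}<k)$ super-polynomially small. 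The remaining event $\{\|\mathbf{X}\|>r/2\}$ is controlled by another application of Markov's inequality using $\mathbb{E}[\|\mathbf{X}\|^s]\leq K$. Combining,
\begin{eqnarray}
\text{P}(\nu>a_Me^t,\mathbf{X}\in S_1)\leq 2^sKa_M^{-s}e^{-st}+M^{-\omega(1)},
\end{eqnarray}
and integrating over $t>T$ yields $O(a_M^{-s}e^{-sT})=O(M^{\beta s}e^{-sC_5\ln M})$. Choosing $C_5$ large enough so that $sC_5>\beta s+\gamma(1-\beta d)$ makes this tail contribution negligible compared with $M^{-\gamma(1-\beta d)}\ln M$, completing the bound.

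The main obstacle will be step three: verifying that the elementary decomposition via $\|\mathbf{X}\|\leq r/2$ versus $\|\mathbf{X}\|>r/2$ really does give a tail probability that decays polynomially in $r$ with only the weak moment assumption $\mathbb{E}[\|\mathbf{Y}\|^s]\leq K$, and then tracking the exponents carefully so that $\beta$ (which is constrained by $\beta<1/(d+2)$ from Lemma~\ref{lem:largeradius}) and $s$ can be traded against $C_5$ to produce exactly the advertised $M^{-\gamma(1-\beta d)}\ln M$ rate without an additional polylogarithmic factor.
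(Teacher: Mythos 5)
Your proposal is correct and reaches the bound by the same basic route as the paper: a tail-integral (layer-cake) representation of $\mathbb{E}[\ln(\nu/a_M)\mathbf{1}(\nu>a_M,\mathbf{X}\in S_1)]$, a logarithmic cut-off, the probability bound of Lemma~\ref{lem:largeradius} in the bulk, and Assumption~\ref{ass:unbounded}(d) plus a Chernoff bound on the number of $\mathbf{Y}_i$ in a ball around the origin in the far tail. The organization differs, though: the paper conditions on $\mathbf{X}=\mathbf{x}$ and uses an $\mathbf{x}$-dependent threshold $t_0(\mathbf{x})=\max\{\ln(2\norm{\mathbf{x}}/a_M),(1/s)\ln(2^{1+s}eK/a_M^s)\}$, which leaves the product term $\mathbb{E}[\phi(\mathbf{X})t_0]$ and forces a second truncation at $T=(1/s)\ln M$ together with a separate estimate of $\mathbb{E}[t_0\mathbf{1}(t_0>T)]$; you instead integrate the joint probability $\text{P}(\nu>a_Me^t,\mathbf{X}\in S_1)$, take one deterministic cut-off $T=C_5\ln M$, apply Lemma~\ref{lem:largeradius} directly on $[0,T]$, and for $t>T$ split on $\{\norm{\mathbf{X}}\le r/2\}$ versus $\{\norm{\mathbf{X}}>r/2\}$ with Markov's inequality handling the latter. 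This avoids the paper's extra bookkeeping and is, if anything, slightly cleaner, while resting on exactly the same two ingredients. One small point to repair when you write it out: in your displayed estimate the second term must retain its $t$-dependence, since a $t$-independent $M^{-\omega(1)}$ cannot be integrated over the unbounded interval $(T,\infty)$; the Chernoff bound actually gives at most $\bigl(2^{1+s}eK\,a_M^{-s}e^{-st}\bigr)^{M/2}$, which decays exponentially in $t$ at rate $sM/2$, so its integral over $(T,\infty)$ is super-polynomially small once $C_5>\beta$, and the remaining contribution is $O(a_M^{-s}e^{-sT})$, which your condition $sC_5>\beta s+\gamma(1-\beta d)$ renders negligible, completing the proof as claimed.
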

\begin{proof}
	Please see Appendix \ref{sec:lognu} for detailed proof.
\end{proof}
Then
\begin{eqnarray}
&&\left|\mathbb{E}\left[\ln \frac{P_g(B(\mathbf{X},\nu))}{c_d\nu^d g(\mathbf{X})}\mathbf{1}(\mathbf{X}\in S_1,\nu>a_M)\right]\right|\nonumber\\
&\leq& |\mathbb{E}[\ln P_g(B(\mathbf{X},a_M))\mathbf{1}(\mathbf{X}\in S_1,\nu>a_M)]|+|\mathbb{E}[\ln(c_da_M^d)\mathbf{1}(\mathbf{X}\in S_1,\nu>a_M)]|\nonumber\\
&&\hspace{1cm}+|\mathbb{E}[\ln g(\mathbf{X})\mathbf{1}(\mathbf{X}\in S_1,\nu>a_M)]|+d\left|\mathbb{E}\left[\ln \frac{\nu}{a_M}\mathbf{1}(\nu>a_M,\mathbf{X}\in S_1)\right]\right|.
\label{eq:largenu}
\end{eqnarray}
Note that
\begin{eqnarray}
1\geq P_g(B(\mathbf{x},a_M))\geq c_da_M^d g(\mathbf{x})-C_1a_M^{d+2}\geq C_1a_M^{d+2}=C_1A^{d+2}M^{-\beta(d+2)}.
\end{eqnarray}
Therefore 
\begin{eqnarray}
|\mathbb{E}[\ln P_g(B(\mathbf{X},a_M))\mathbf{1}(\mathbf{X}\in S_1, \nu>a_M)]|\lesssim M^{-\gamma(1-\beta d)}\ln M.
\end{eqnarray}
The second and the third terms in \eqref{eq:largenu} satisfy the same bound. The last term can be bounded using Lemma \ref{lem:lognu}. Hence
\begin{eqnarray}
\left|\mathbb{E}\left[\ln \frac{P_g(B(\mathbf{X},\nu))}{c_d\nu^d g(\mathbf{X})}\mathbf{1}(\mathbf{X}\in S_1,\nu>a_M)\right]\right|\lesssim M^{-\gamma(1-\beta d)}\ln M.
\label{eq:I22}
\end{eqnarray}
Now we consider $\mathbf{x}\in S_2$.
\begin{eqnarray}
\left|\mathbb{E}\left[\ln \frac{P_g(B(\mathbf{X},\nu))}{c_d\nu^d g(\mathbf{X})}\mathbf{1}(\mathbf{X}\in S_2)\right]\right|&\leq& |\mathbb{E}[\ln P_g(B(\mathbf{X},\nu))\mathbf{1}(\mathbf{X}\in S_2)]|+|\mathbb{E}[\ln g(\mathbf{X})\mathbf{1}(\mathbf{X}\in S_2)]|\nonumber\\
&&+|\ln c_d|\text{P}(\mathbf{X}\in S_2)+d|\mathbb{E}[\ln \nu \mathbf{1}(\mathbf{X}\in S_2)]|.
\label{eq:S2}
\end{eqnarray}
From order statistics \cite{david1970order}, $|\mathbb{E}[\ln P_g(B(\mathbf{x},\nu))|\mathbf{x}]|=|\psi(k)-\psi(M)|\leq \ln M$. According to Assumption~\ref{ass:unbounded} (b), the first three terms in \eqref{eq:S2} can be bounded by:
\begin{eqnarray}
|\mathbb{E}[\ln P_g(B(\mathbf{X},r))\mathbf{1}(\mathbf{X}\in S_2)]|\lesssim \ln M \text{P}(\mathbf{X}\in S_2)\sim \ln M a_M^{2\gamma}\sim M^{-2\beta \gamma}\ln M,
\end{eqnarray}
\begin{eqnarray}
|\mathbb{E}[\ln g(\mathbf{X})\mathbf{1}(\mathbf{X}\in S_2)]&=&\mathbb{E}\left[\ln \frac{1}{g(\mathbf{X})}\mathbf{1}\left(g(\mathbf{X})\leq \frac{2C_1}{c_d}a_M^2\right)\right]\nonumber\\
&=&\int_0^\infty \text{P}\left(\ln \frac{1}{g(\mathbf{X})}\mathbf{1}\left(g(\mathbf{X})\leq \frac{2C_1}{c_d}a_M^2\right)>t\right)dt\nonumber\\
&\leq & \int_0^{\ln \frac{c_d}{2C_1a_M^2}} \text{P}\left(g(\mathbf{X})\leq \frac{2C_1}{c_d}a_M^2\right)dt+\int_{\ln \frac{c_d}{2C_1a_M^2}}^\infty \text{P}\left(g(\mathbf{X})<e^{-t}\right)dt\nonumber\\
&\leq &\mu \left(\frac{2C_1}{c_d}a_M^2\right)^\gamma\ln \frac{c_d}{2C_1a_M^2}+\int_{\ln \frac{c_d}{2C_1a_M^2}}^\infty \mu e^{-\gamma t}dt\nonumber\\
&=&\mu \left(\frac{2C_1}{c_d}a_M^2\right)^\gamma\left(\ln \frac{c_d}{2C_1a_M^2}+\frac{1}{\gamma}\right)\nonumber\\
&\sim & M^{-2\beta \gamma}\ln M,
\end{eqnarray}
and
\begin{eqnarray}
|\ln c_d|\text{P}(\mathbf{X}\in S_2)|\lesssim M^{-2\beta \gamma}.
\end{eqnarray}

The last term in \eqref{eq:S2} can be bounded using the following lemma, whose proof can be found in Appendix~\ref{sec:log2}.
\begin{lem}\label{lem:log2}
	There exist two constants $C_5$ and $C_6$, such that for sufficiently large $M$,
	\begin{eqnarray}
	|\mathbb{E}[\ln \nu |\mathbf{x}]|\leq C_5\ln M+C_6|\ln \norm{\mathbf{x}}|.
	\end{eqnarray}
\end{lem}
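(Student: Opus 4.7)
The plan is to split $\ln\nu=(\ln\nu)^+-(\ln\nu)^-$ (positive and negative parts) and bound the two contributions to $\mathbb{E}[\ln\nu|\mathbf{x}]$ separately. Only the positive part (large $\nu$) will require the $|\ln\norm{\mathbf{x}}|$ term, because when $\mathbf{x}$ is very far from the origin, even a single $\mathbf{Y}_j$ close to the origin forces $\nu\gtrsim\norm{\mathbf{x}}$; the negative part (small $\nu$) can be controlled uniformly in $\mathbf{x}$ using only the uniform upper bound on $g$ provided by Lemma~\ref{lem:ub}.

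For the positive part, I would start from the trivial geometric bound $\nu\le\norm{\mathbf{x}}+\max_{j\le M}\norm{\mathbf{Y}_j}$ (valid because $\nu$ is at most the distance to the farthest $\mathbf{Y}_j$), and apply $\ln(1+a+b)\le \ln(1+a)+\ln(1+b)$ to get $(\ln\nu)^+\le\ln(1+\nu)\le \ln(1+\norm{\mathbf{x}})+\ln(1+\max_j\norm{\mathbf{Y}_j})$. The first term is at most $\ln 2+|\ln\norm{\mathbf{x}}|$ for all $\mathbf{x}\ne\mathbf{0}$. To take the expectation of the second term using only the weak moment hypothesis $\mathbb{E}[\norm{\mathbf{Y}}^s]\le K$ from Assumption~\ref{ass:unbounded}(d), I would invoke the elementary inequality $\ln(1+t)\le C(1+\ln(1+t^s))$, which is easy to verify on $t\ge 0$ for any fixed $s>0$, then combine Jensen's inequality for the concave map $t\mapsto\ln(1+t)$ with the crude union bound $\mathbb{E}[\max_j\norm{\mathbf{Y}_j}^s]\le\sum_j\mathbb{E}[\norm{\mathbf{Y}_j}^s]\le MK$. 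This yields $\mathbb{E}[\ln(1+\max_j\norm{\mathbf{Y}_j})]\le C(1+\ln(1+MK))=O(\ln M)$.

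For the negative part, I would write $\mathbb{E}[(\ln\nu)^-|\mathbf{x}]=\int_0^\infty \text{P}(\nu<e^{-t}|\mathbf{x})\,dt$ and apply the order-statistic tail bound $\text{P}(\nu<r|\mathbf{x})\le \binom{M}{k}P_g(B(\mathbf{x},r))^k$, which follows from a union bound over $k$-subsets of $\{\mathbf{Y}_1,\dots,\mathbf{Y}_M\}$. Using the uniform upper bound $g\le U_g$ from Lemma~\ref{lem:ub} gives $\text{P}(\nu<r|\mathbf{x})\le AM^kr^{dk}$ with $A=(U_gc_d)^k/k!$. Splitting the integral at the threshold $t^\star=\tfrac{1}{d}\ln(A^{1/k}M)$ where the bound crosses $1$, using $1$ below $t^\star$ and the tail estimate $AM^ke^{-dkt}$ above, produces $\mathbb{E}[(\ln\nu)^-|\mathbf{x}]\le t^\star+\tfrac{1}{dk}=O(\ln M)$ uniformly in $\mathbf{x}$. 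Adding the two halves and absorbing the $\ln 2$ into $C_5$ yields the claimed $|\mathbb{E}[\ln\nu|\mathbf{x}]|\le C_5\ln M+C_6|\ln\norm{\mathbf{x}}|$.

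The main obstacle, and the whole reason a $|\ln\norm{\mathbf{x}}|$ term must appear on the right, is the weakness of Assumption~\ref{ass:unbounded}(d): it only supplies a finite $s$-th moment with $s$ possibly less than $1$, so $\mathbb{E}[\max_j\norm{\mathbf{Y}_j}]$ may be infinite and a direct Jensen argument on $\ln(1+\max_j\norm{\mathbf{Y}_j})$ fails. The detour through $t\mapsto t^s$ using the inequality $\ln(1+t)\le C(1+\ln(1+t^s))$ is what rescues the $O(\ln M)$ rate on the upper-tail side; everything else is routine. This is exactly the form needed for the subsequent integration against $f(\mathbf{x})\mathbf{1}(\mathbf{x}\in S_2)$ in the proof of Theorem~\ref{thm:unbounded}, where Lemma~\ref{lem:log} will handle the $|\ln\norm{\mathbf{x}}|$ factor.
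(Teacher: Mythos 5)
Your proof is correct, and it splits the problem the same way the paper does (small $\nu$ versus large $\nu$, i.e.\ the negative and positive parts of $\ln\nu$); the small-$\nu$ half is essentially the paper's argument, since your union bound $\text{P}(\nu<r\mid\mathbf{x})\le\binom{M}{k}\bigl(U_gc_dr^d\bigr)^k$ and the paper's Chernoff bound both rest on Lemma~\ref{lem:ub} and both yield the uniform $\mathcal{O}(\ln M)$ term after integrating and splitting at the crossover threshold. Where you genuinely diverge is the large-$\nu$ half. The paper reuses the high-probability estimate \eqref{eq:nut}, $\text{P}(\nu>r\mid\mathbf{x})\le\bigl(2eK/(r-\norm{\mathbf{x}})^s\bigr)^{M/2}$, which says that with overwhelming probability at least half of the $\mathbf{Y}_j$ lie within a fixed radius of the origin; integrating this gives a bound of the form $|\ln(2\norm{\mathbf{x}})|+\text{const}+\mathcal{O}(1/M)$, with no $\ln M$ contribution from this side. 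You instead use the pathwise bound $\nu\le\norm{\mathbf{x}}+\max_j\norm{\mathbf{Y}_j}$, the elementary inequality $\ln(1+t)\le C(1+\ln(1+t^s))$, and Jensen with $\mathbb{E}[\max_j\norm{\mathbf{Y}_j}^s]\le MK$, which is more elementary (no exponential tail estimate needed) but pays an extra $\mathcal{O}(\ln M)$ on the upper-tail side. Since the negative part already costs $\mathcal{O}(\ln M)$ and the lemma (and its use against $\text{P}(\mathbf{X}\in S_2)$ in the proof of Theorem~\ref{thm:unbounded}) only requires $C_5\ln M+C_6|\ln\norm{\mathbf{x}}|$, this loss is harmless; the paper's sharper treatment of the positive part buys nothing for this lemma, though it is the estimate that also powers Lemma~\ref{lem:lognu}, where the finer control is actually needed.
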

Using this lemma, we have 
\begin{eqnarray}
|\mathbb{E}[\ln \nu \mathbf{1}(\mathbf{X}\in S_2)]\leq |\mathbb{E}[(C_5\ln M+C_6|\ln \norm{\mathbf{X}}|)\mathbf{1}(\mathbf{X}\in S_2)]|\lesssim a_M^{2\gamma}\ln \frac{1}{a_M}\sim M^{-2\beta \gamma}\ln M.
\end{eqnarray}

Therefore
\begin{eqnarray}
\left|\mathbb{E}\left[\ln \frac{P_g(B(\mathbf{X},\nu))}{c_d\nu^d g(\mathbf{X})} \mathbf{1}(\mathbf{X}\in S_2)\right]\right|\lesssim M^{-2\beta \gamma}\ln M.
\label{eq:I23}
\end{eqnarray}

Combining \eqref{eq:I21}, \eqref{eq:I22} and \eqref{eq:I23}, we get
\begin{eqnarray}
|I_2|\lesssim M^{-2\beta \gamma}\ln M+M^{-\gamma(1-\beta d)}\ln M.
\end{eqnarray}

Since the above bound holds for arbitrary $\beta\leq 1/(d+2)$, we just let $\beta=1/(d+2)$, then 
\begin{eqnarray}
|I_2|\lesssim M^{-\frac{2\gamma}{d+2}}\ln M.
\end{eqnarray}
Similarly, we have $|I_1|\lesssim N^{-\frac{2\gamma}{d+2}}\ln N$, and according to the definition of digamma function, $|I_3|\lesssim 1/M+1/N$. Hence
\begin{eqnarray}
|\mathbb{E}[\hat{D}(f||g)]-D(f||g)|\lesssim \left(\min\{M,N\}\right)^{-\frac{2\gamma}{d+2}}\ln \min\{M,N\}.
\end{eqnarray}

\subsection{Proof of Lemma \ref{lem:ub}}\label{sec:ub}
We only show that there exists a constant $U_g$ such that $g(\mathbf{x})\leq U_g$ holds for all $\mathbf{x}$. The proof of the upper bound $U_f$ of density $f$ will be exactly the same. From Lemma \ref{lem:pdf},
\begin{eqnarray}
P_g(B(\mathbf{x},r))\geq g(\mathbf{x})c_dr^d-C_1r^{d+2}.
\end{eqnarray}
Since $P_g(B(\mathbf{x},r))\leq 1$, we have 
\begin{eqnarray}
g(\mathbf{x})\leq \frac{1+C_1r^{d+2}}{c_dr^d}
\label{eq:gub}
\end{eqnarray}
for all $r>0$. Define $U_g$ as the right hand side of \eqref{eq:gub} given $r=(d/(2C_1))^{1/(d+2)}$, i.e.
\begin{eqnarray}
U_g=\frac{1+\frac{d}{2}}{c_d\left(\frac{d}{2C_1}\right)^\frac{d}{d+2}},
\end{eqnarray} 
then $g(\mathbf{x})\leq U_g$ for all $\mathbf{x}$.

\subsection{Proof of Lemma \ref{lem:log}}\label{sec:log}
From H{\"o}lder inequality, For any $p$, $q$ such that $p>1$, $q>1$, and $1/p+1/q=1$,
\begin{eqnarray}
\mathbb{E}\left[\ln \norm{\mathbf{x}}|\mathbf{1}(g(\mathbf{X})\leq t)\right]\leq \left(\mathbb{E}\left[|\ln \norm{\mathbf{x}}|^p\right]\right)^\frac{1}{p}\left(\mathbb{E}\left[\mathbf{1}(g(\mathbf{X})\leq t)^q\right]\right)^\frac{1}{q}.
\end{eqnarray}
From Assumption \ref{ass:unbounded} (b),
\begin{eqnarray}
\mathbb{E}[\mathbf{1}(g(\mathbf{X})\leq t)^q]=\text{P}(g(\mathbf{X})\leq t)\leq \mu t^\gamma.
\end{eqnarray}
Moreover, from Assumption \ref{ass:unbounded} (d), $\text{P}(\norm{\mathbf{X}}>t)\leq K/t^s$, then
\begin{eqnarray}
\mathbb{E}[|\ln\norm{\mathbf{X}}|^p]&=&\int_0^\infty \text{P}\left(|\ln \norm{\mathbf{X}}|^p>u\right)du\nonumber\\
&=&\int_0^\infty \left[\text{P}\left(\norm{\mathbf{X}}>e^{u^\frac{1}{p}}\right)+\text{P}\left(\norm{\mathbf{X}}<e^{-u^\frac{1}{p}}\right)\right]du\nonumber\\
&\leq & \int_0^\infty Ke^{-su^\frac{1}{p}}du+\int_0^\infty U_gc_de^{-du^\frac{1}{p}}du\nonumber\\
&\overset{v=su^\frac{1}{p}}{=}&\frac{1}{s^p}\int_0^\infty Kpe^{-v} v^{p-1}dv+\int_0^\infty U_gc_dpe^{-dv}v^{p-1} dv\nonumber\\
&=&\left(\frac{K}{s^p}+\frac{U_gc_d}{d^p}\right)p!.
\label{eq:lnp}
\end{eqnarray}
Using Stirling's formula $p!\leq ep^{p+1/2}e^{-p}$, we have
\begin{eqnarray}
\mathbb{E}[\ln\norm{\mathbf{X}}\mathbf{1}(g(\mathbf{X})\leq t)]&\leq & e^\frac{1}{p} p^{1+\frac{1}{2p}}e^{-1} \left(\frac{K}{s^p}+\frac{U_gc_d}{d^p}\right)^\frac{1}{p}(\mu t^\gamma)^{1-\frac{1}{p}}\nonumber\\
&\lesssim& pt^{\gamma\left(1-\frac{1}{p}\right)},
\label{eq:logpf}
\end{eqnarray}
which holds for all $p>1$. For sufficiently small $t$, let $p=\ln(1/t)$, then the right hand side of \eqref{eq:logpf} becomes $et^\gamma \ln(1/t)$.
\subsection{Proof of Lemma \ref{lem:tail}}\label{sec:tail}
\begin{eqnarray}
\int_{g(\mathbf{x})>t}\frac{f(\mathbf{x})}{g(\mathbf{x})}d\mathbf{x}&=&\mathbb{E}\left[\frac{1}{g(\mathbf{X})}\mathbf{1}(g(\mathbf{X})>t)\right]\nonumber\\
&=&\int_0^\infty \text{P}\left(\frac{1}{g(\mathbf{X})}\mathbf{1}(g(\mathbf{X})>t)>u\right) du\nonumber\\
&=&\int_0^\frac{1}{t} \text{P}\left(g(\mathbf{X})<\frac{1}{u}\right)d\mathbf{u}\nonumber\\
&\leq &\left\{
\begin{array}{ccc}
\frac{\mu}{1-\gamma} t^{\gamma-1} &\text{if} & \gamma<1\\
\mu+\mu\ln \frac{1}{\mu t} &\text{if} & \gamma=1.
\end{array}
\right.
\end{eqnarray}
\subsection{Proof of Lemma \ref{lem:largeradius}}\label{sec:largeradius}
For all $\mathbf{x}\in S_1$,
\begin{eqnarray}
P_g(B(\mathbf{x},a_M))\geq g(\mathbf{x})c_da_M^d-C_1a_M^{d+2}\geq C_1a_M^{d+2}=C_1A^{d+2}M^{-\beta(d+2)}=kM^{-\beta(d+2)}\geq \frac{k}{M},
\end{eqnarray}
in which we used \eqref{eq:S1def} and Lemma \ref{lem:pdf}. Hence, according to \eqref{eq:masslb} and Chernoff inequality, 
\begin{eqnarray}
\text{P}(\nu>a_M|\mathbf{x})&\leq & e^{-MP_g(B(\mathbf{x},a_M))}\left(\frac{eMP_g(B(\mathbf{x},a_M))}{k}\right)^k\nonumber\\
&\leq &e^{-\frac{1}{2}Mg(\mathbf{x})c_da_M^d}\left(\frac{eMg(\mathbf{x})c_da_M^d}{2k}\right)^k\nonumber\\
&:=&\phi(\mathbf{x}).
\label{eq:phidef}
\end{eqnarray}
Moreover, define $a=Mc_da_M^d/2$, then
\begin{eqnarray}
\text{P}(\nu>a_M,\mathbf{X}\in S_1)&=&\left(\frac{e}{k}\right)^k \mathbb{E}\left[e^{-ag(\mathbf{X})}(ag(\mathbf{X}))^k\right]\nonumber\\
&\leq & \left(\frac{e}{k}\right)^k \mathbb{E}\left[e^{-\frac{1}{2}ag(\mathbf{X})}\right]\underset{t>0}{\sup}e^{-\frac{1}{2}t} t^k\nonumber\\
&=&2^k \mathbb{E}\left[e^{-\frac{1}{2}ag(\mathbf{X})}\right]\nonumber\\
&=&2^k \int_0^\infty \text{P}\left(e^{-\frac{1}{2} ag(\mathbf{X})}>u\right)du\nonumber\\
&=&2^k \int_0^\infty \text{P}\left(g(\mathbf{X})<\frac{2}{a}\ln \frac{1}{u}\right)du\nonumber\\
&=&2^{k+\gamma}\mu \int_0^1 \left(\ln \frac{1}{u}\right)^\gamma du\nonumber\\
&=&2^{k+\gamma} \mu \Gamma(\gamma+1)\left(\frac{1}{2}Mc_dA^dM^{-\beta d}\right)^{-\gamma}.
\end{eqnarray}
The proof is complete.
\subsection{Proof of Lemma \ref{lem:lognu}}\label{sec:lognu}
From Assumption \ref{ass:unbounded} (d), $\text{P}(\norm{Y}>r)\leq K/r^s$. Hence $P_g(B^c(\mathbf{0},r))\leq K/r^s$, in which $B^c(\mathbf{0},r)=\mathbb{R}^d\setminus B(\mathbf{0},r)$. Denote $\nu_0$ as the kNN distance of $\mathbf{x}=0$ among $\mathbf{Y}_1,\ldots, \mathbf{Y}_M$. Then for sufficiently large $M$ and $r>(2K)^{1/s}$, we have $P_g(B^c(\mathbf{0},r))\geq 1/2$, hence
\begin{eqnarray}
\text{P}(\nu_0>r)&=&\text{P}\left(n(B^c(\mathbf{0},r))>M-k\right)\nonumber\\
&\leq &\text{P}\left(n(B^c(\mathbf{0},r))>\frac{1}{2}M\right)\nonumber\\
&\leq & e^{-M\frac{K}{r^s}}\left(\frac{eM\frac{K}{r^s}}{\frac{1}{2}M}\right)^{\frac{1}{2}M}\nonumber\\
&\leq & \left(\frac{2eK}{r^s}\right)^{\frac{1}{2}M}.
\end{eqnarray}
Denote $n_Y(S)$ as the number of samples from $\{\mathbf{Y}_1,\ldots, \mathbf{Y}_M \}$ that are in $S$. Then for any given $\mathbf{x}$, and $r \geq (2K)^{1/s}+\norm{\mathbf{x}}$, since $n_Y(B(\mathbf{x},t))\geq n_Y(B(\mathbf{0},t-\norm{\mathbf{x}}))$,
\begin{eqnarray}
\text{P}(\nu>r|\mathbf{x})\leq \left(\frac{2eK}{(r-\norm{\mathbf{x}})^s}\right)^{\frac{1}{2}M}.
\label{eq:nut}
\end{eqnarray}
Let
\begin{eqnarray}
t_0=\max\left\{\ln \frac{2\norm{\mathbf{x}}}{a_M},\frac{1}{s}\ln \frac{2^{1+s}eK}{a_M^s} \right\}.
\end{eqnarray}
It can be checked that $a_Me^{t_0}\geq (2K)^{1/s}+\norm{\mathbf{x}}$, therefore
\begin{eqnarray}
\mathbb{E}\left[\ln \frac{\nu}{a_M}\mathbf{1}(\nu>a_M)|\mathbf{x}\right]&=&\int_0^\infty \text{P}(\nu>a_Me^t|\mathbf{x})dt\nonumber\\
&= &\int_0^{t_0}\text{P}(\nu>a_Me^t|\mathbf{x})dt+\int_{t_0}^\infty \text{P}(\nu>a_Me^t|\mathbf{x})dt\nonumber\\
&\leq &\int_0^{t_0}\text{P}(\nu>a_M|\mathbf{x})dt+\int_{t_0}^\infty \left(\frac{2eK}{(a_Me^t-\norm{\mathbf{x}})^s}\right)^{\frac{1}{2}M}dt\nonumber\\
&\overset{(a)}{\leq}&\phi(\mathbf{x})t_0 +\int_{t_0}^\infty \left(\frac{2^{1+s}eK}{a_M^se^{st}}\right)^{\frac{1}{2}M} dt\nonumber\\
&=&\phi(\mathbf{x})t_0+\left(\frac{2^{1+s}eK}{a_M^s}\right)^{\frac{1}{2}M}\frac{2}{M}e^{-\frac{1}{2}sMt_0}\nonumber\\
&\overset{(b)}{\leq}& \phi(\mathbf{x})t_0+\frac{2}{M}.
\end{eqnarray}
In (a), we use \eqref{eq:phidef} and the definition of $t_0$, which implies that $\norm{\mathbf{x}}\leq a_Me^t/2$. (b) uses the fact that $e^{st_0}\geq 2^{1+s}eK/a_M^s$. Hence
\begin{eqnarray}
\mathbb{E}\left[\ln \frac{\nu}{a_M}\mathbf{1}(\nu>a_M,\mathbf{X}\in S_1)\right]\leq \mathbb{E}[\phi(\mathbf{X})t_0]+\frac{2}{M}.
\end{eqnarray}
It remains to bound $\mathbb{E}[\phi(\mathbf{X})t_0]$. For any $T>0$,
\begin{eqnarray}
\mathbb{E}[\phi(\mathbf{X})t_0]&\leq& \mathbb{E}[\phi(\mathbf{X})t_0\mathbf{1}(t_0\leq T)]+\mathbb{E}[\phi(\mathbf{X})t_0\mathbf{1}(t_0>T)]\nonumber\\
&\leq & T\mathbb{E}[\phi(\mathbf{X})]+\mathbb{E}[t_0\mathbf{1}(t_0>T)].
\end{eqnarray}
In Lemma \ref{lem:largeradius}, we have shown that $\mathbb{E}[\phi(\mathbf{X})]\leq C_3M^{-\gamma(1-\beta d)}$. For the second term,
\begin{eqnarray}
\mathbb{E}[t_0\mathbf{1}(t_0>T)]&\leq& \mathbb{E}\left[\left(\ln \frac{2\norm{\mathbf{X}}}{a_M}+\frac{1}{s}\ln \frac{2^{1+s}eK}{a_M^s}\right)\mathbf{1}\left(\norm{\mathbf{X}}>\frac{1}{2}a_M e^T\right)\right]\nonumber\\
&\leq & \int_0^\infty \text{P}\left(\ln \frac{2\norm{\mathbf{X}}}{a_M}\mathbf{1}\left(\mathbf{X}>\frac{1}{2}a_M e^T\right)>u\right) du+\frac{1}{s}\ln \frac{2^{1+s}eK}{a_M^s}\text{P}\left(\norm{\mathbf{X}}>\frac{1}{2}a_M e^T\right)\nonumber\\
&\leq & \int_0^{T}\text{P}\left(\norm{\mathbf{X}}>\frac{1}{2}a_M e^T\right) du+\int_{T}^\infty \text{P}\left(\norm{\mathbf{X}}>\frac{1}{2}he^u\right) du+\frac{2^sK}{a_M^se^{sT}}\ln \frac{2^{1+s}eK}{a_M^s}\nonumber\\
&\leq & \frac{2^sK}{a_M^se^{sT}s}\left[sT+1+\ln \frac{2^{1+s}eK}{a_M^s}\right].
\end{eqnarray}
Let $T=(1/s)\ln M$, then
\begin{eqnarray}
\mathbb{E}[\phi(\mathbf{X})t_0]\lesssim M^{-\gamma(1-\beta d)}\ln M.
\end{eqnarray}
Hence
\begin{eqnarray}
\mathbb{E}\left[\ln \frac{\nu}{a_M}\mathbf{1}(\nu>a_M,\mathbf{X}\in S_1)\right] \lesssim M^{-\gamma(1-\beta d)}\ln M.
\end{eqnarray}
\subsection{Proof of Lemma \ref{lem:log2}}\label{sec:log2}
\begin{eqnarray}
|\mathbb{E}[\ln \nu \mathbf{1}(\nu<1)|\mathbf{x}]|&=&\int_0^\infty \text{P}(\nu<e^{-t}|\mathbf{x})dt\nonumber\\
&\overset{(a)}{\leq}&\int_0^\infty \text{P}\left(P_g(B(\mathbf{x},\nu))<U_gc_de^{-dt}\right) dt\nonumber\\
&\overset{(b)}{\leq}& \int_0^{\frac{1}{d}\ln \frac{M}{k}}dt+\int_{\frac{1}{d}\ln \frac{M}{k}}^\infty \left(\frac{eMU_gc_de^{-dt}}{k}\right)^k dt\nonumber\\
&=&\frac{1}{d}\ln \frac{M}{k}+\frac{(eU_gc_d)^k}{kd}.
\label{eq:log21}
\end{eqnarray}
In (a), we use Lemma \ref{lem:ub}. (b) uses Chernoff bound. Moreover, let $t_0=\max\{\ln (2\norm{\mathbf{x}}),(1/s)\ln(2^{1+s}eK),0 \}$, then
\begin{eqnarray}
\mathbb{E}[\ln \nu \mathbf{1}(\nu>1)|\mathbf{x}]&=&\int_0^\infty \text{P}(\nu>e^t|\mathbf{x})dt\nonumber\\
&\leq & \int_0^{t_0}dt+\int_{t_0}^\infty \left(\frac{2eK}{(e^t-\norm{\mathbf{X}})^s}\right)^{\frac{1}{2}M} dt\nonumber\\
&=& t_0+\int_{t_0}^\infty \left(\frac{2^{1+s}eK}{e^{st}}\right)^{\frac{1}{2}M} dt\nonumber\\
&=&t_0+(2^{1+s}eK)^{\frac{1}{2}M}\frac{2}{sM}e^{-\frac{1}{2}sMt_0}\nonumber\\
&\leq &\max\left\{\ln (2\norm{\mathbf{x}}),\frac{1}{s}\ln(2^{1+s}eK),0 \right\}+\frac{2}{sM}\nonumber\\
&\leq & |\ln (2\norm{\mathbf{x}})|+\frac{1}{s}|\ln(2^{1+s}eK)|+\frac{2}{sM}.
\label{eq:log22}
\end{eqnarray}
Combining \eqref{eq:log21} and \eqref{eq:log22}, the proof is complete.
\section{Proof of Theorem \ref{thm:var}}\label{sec:var}
From \eqref{eq:estimator}, we have
\begin{eqnarray}
\Var[\hat{D}(f||g)]&=&\Var\left[\frac{d}{N}\sum_{i=1}^N \ln \nu_i-\frac{d}{N}\sum_{i=1}^N \ln \epsilon_i\right]\nonumber\\
&\leq &2\Var\left[\frac{d}{N}\sum_{i=1}^N\ln \epsilon_i\right]+2\Var\left[\frac{d}{N}\sum_{i=1}^N\ln \nu_i\right]\nonumber\\
&:=&2I_1+2I_2.
\end{eqnarray}
We bound $I_1$ and $I_2$ separately.

\textbf{Bound of $I_1$}. $I_1$ is the variance of Kozachenko-Leonenko entropy estimator \cite{kozachenko1987sample}, which estimates $h(f)=-\int f(\mathbf{x})\ln f(\mathbf{x})dx$. Here we use similar proof procedure as was already used in the proof of Theorem 2 in our recent work \cite{zhao2019analysis}. \cite{zhao2019analysis} has analyzed a truncated KL entropy estimator, which means that $\epsilon_i$ is truncated by an upper bound $a_N$. The variance of this estimator is actually equal to $\Var[(d/N)\sum_{i=1}^N \ln \rho_i]$, in which $\rho_i=\min\{\epsilon,a_N \}$. It was shown in \cite{zhao2019analysis} that if $a_N\sim N^{-\beta}$ with $0<\beta<1/d$, then $\Var[(d/N)\sum_{i=1}^N \ln \rho_i]=\mathcal{O}(N^{-1})$. In this section, we prove the same convergence bound for the estimator without truncation, i.e. $\Var[(d/N)\sum_{i=1}^N \ln \epsilon_i]$.

Let $\mathbf{X}_1'$ be a sample that is i.i.d with $\mathbf{X}_1,\mathbf{X}_2,\ldots, \mathbf{X}_N$. Recall that $\epsilon_i$ is the $k$-th nearest neighbor distance of $\mathbf{X}_i$ among $\mathbf{X}_1,\mathbf{X}_2,\ldots,\mathbf{X}_N$. If we replace $\mathbf{X}_1$ with $\mathbf{X}_1'$, then the kNN distances will change. Denote $\epsilon_i'$ as the $k$-th nearest neighbor distance based on $\mathbf{X}_1',\mathbf{X}_2,\ldots, \mathbf{X}_N$. Then use Efron-Stein inequality \cite{tsybakov2009introduction}, 
\begin{eqnarray}
\Var\left[\frac{d}{N}\sum_{i=1}^N\ln \epsilon_i\right]\leq \frac{N}{2}\left[\left(\frac{d}{N}\sum_{i=1}^N \ln \epsilon_i-\frac{d}{N}\sum_{i=1}^N \ln \epsilon_i'\right)^2\right].
\end{eqnarray}
Define $U_i=\ln(Nc_d \epsilon_i^d)$ and $U_i'=\ln(Nc_d (\epsilon_i')^d)$ for $i=1,\ldots,N$. Moreover, define $\epsilon_i''$ as the $k$ nearest neighbor distances based on $\mathbf{X}_2,\ldots, \mathbf{X}_N$, and $U_i''=\ln(Nc_d (\epsilon_i'')^d)$, $i=2,\ldots,N$. Follow the steps in Appendix C of \cite{zhao2019analysis}, we have
\begin{eqnarray}
\Var\left[\frac{d}{N}\sum_{i=1}^N \ln \epsilon_i\right]\leq \frac{2}{N}(2k\gamma_d+1)\left[(k+1)\mathbb{E}[U_1^2]+k\mathbb{E}[(U_1'')^2]\right],
\label{eq:I11}
\end{eqnarray}
in which $\gamma_d$ is a constant that depends on dimension $d$ and the norm we use. For example, if we use $\ell_2$ norm, then $\gamma_d$ is the minimum number of cones with angle $\pi/6$ that cover $\mathbb{R}^d$. 

Now we bound $\mathbb{E}[U_1^2]$ and $\mathbb{E}[(U_1'')^2]$.  Define $\rho=\min\{\epsilon,a_N \}$, in which $a_N\sim N^{-\beta}$, $0<\beta<1/d$. Note that we truncate the estimator for the convenience of analysis, although we are now analyzing an estimator without truncation. The deviation caused by such truncation will be bounded later. In the following proof, we omit the index for convenience. $\mathbb{E}[U^2]$ can be bounded by
\begin{eqnarray}
\mathbb{E}[U^2]&=&\mathbb{E}[(\ln (N\epsilon^d c_d))^2]\nonumber\\
&=&\mathbb{E}\left[\left(\ln (NP_f(B(\mathbf{X},\epsilon)))-\ln \frac{P_f(B(\mathbf{X},\epsilon))}{f(\mathbf{X})c_d\rho^d}+d\ln \frac{\epsilon}{\rho}-\ln f(\mathbf{X})\right)^2\right]\nonumber\\
&\leq & 4\mathbb{E}[(\ln (NP_f(B(\mathbf{X},\epsilon))))^2]+4\mathbb{E}\left[\left(\ln \frac{P_f(B(\mathbf{X},\epsilon))}{f(\mathbf{X})c_d\rho^d}\right)^2\right]+4d^2\mathbb{E}\left[\left(\ln \frac{\epsilon}{\rho}\right)^2\right]+4\mathbb{E}[(\ln f(\mathbf{X}))^2],\nonumber\\
\label{eq:EU2}
\end{eqnarray}
in which $P_f(S)$ is the probability mass of $S$ under a distribution with pdf $f$, i.e. $P_f(S)=\int_S f(\mathbf{x})d\mathbf{x}$.

According to Assumption \ref{ass:var} (b), $\mathbb{E}[(\ln f(\mathbf{X}))^2]=\int f(\mathbf{x})\ln^2 f(\mathbf{x})d\mathbf{x}<\infty$. Moreover, Lemma 6 and Lemma 7 in \cite{zhao2019analysis} have shown that
\begin{eqnarray}
\underset{N\rightarrow \infty }{\lim}\mathbb{E}[(\ln (NP_f(B(\mathbf{X},\epsilon))))^2]=\psi'(k)+\psi^2(k),
\end{eqnarray}
and
\begin{eqnarray}
\underset{N\rightarrow\infty}{\lim}\mathbb{E}\left[\left(\ln \frac{P_f(B(\mathbf{X},\epsilon))}{f(\mathbf{X})c_d\rho^d}\right)^2\right]=0.
\end{eqnarray}
It remains to show that $\mathbb{E}[\ln^2(\epsilon/\rho)]\rightarrow 0$:
\begin{eqnarray}
\mathbb{E}\left[\left(\ln \frac{\epsilon}{\rho}\right)^2\right]&=& \mathbb{E}\left[\left(\ln \frac{\epsilon}{a_N} \right)^2 \mathbf{1}(\epsilon>a_N)\right]\nonumber\\
&\leq & 2\mathbb{E}[\ln^2\epsilon \mathbf{1}(\epsilon>a_N)]+2\mathbb{E}[\ln^2 a_N \mathbf{1}(\epsilon>a_N)]\nonumber\\
&\leq & 2\mathbb{E}[\ln^2\epsilon\mathbf{1}(a_N<\epsilon\leq 1)]+2\mathbb{E}[\ln^2\epsilon \mathbf{1}(\epsilon>1)]+2\ln^2 a_N\text{P}(\epsilon>a_N)\nonumber\\
&\leq & 4\ln^2 a_N\text{P}(\epsilon>a_N)+2\mathbb{E}[\ln^2\epsilon \mathbf{1}(\epsilon>1)].
\label{eq:log3}
\end{eqnarray}
For sufficiently large $N$, $a_N<r_0$. From Assumption \ref{ass:var} (b), for sufficiently small $t$,
\begin{eqnarray}
\text{P}(\tilde{f}(\mathbf{x},a_N)<t)\leq \text{P}\left(\left(\ln \underset{r<r_0}{\inf}\tilde{f}(\mathbf{x},r)\right)^2>\ln^2 t\right)=o\left(\frac{1}{\ln^2 t}\right),
\label{eq:smalltildef}
\end{eqnarray}
in which we use small $o$ notation, since for any variable $U$ such that $U\geq 0$ and $\mathbb{E}[U]<\infty$, $uP(U>u)\rightarrow 0$ as $u\rightarrow \infty$. Since $\beta<1/d$, pick $\delta$ such that $0<\delta<1-\beta d$, then
\begin{eqnarray}
\text{P}(\epsilon>a_N)&\leq& \text{P}\left(P_f(B(\mathbf{X},a_N))<\frac{2k}{N^{1-\delta}}\right)+\text{P}\left(P_f(B(\mathbf{X},\epsilon))\geq \frac{2k}{N^{1-\delta}},\epsilon>a_N\right)\nonumber\\
&\overset{(a)}{\leq} & \text{P}\left(\tilde{f}(\mathbf{x},a_N)<\frac{2k}{N^{1-\delta}c_d a_N^d}\right)+e^{-2kN^\delta}\left(\frac{2ekN^\delta}{k}\right)^k\nonumber\\
&\overset{(b)}{=}& o\left(\frac{1}{(\ln N)^2}\right).
\label{eq:largeeps}
\end{eqnarray}
In (a), we use the definition of $\tilde{f}$ in \eqref{eq:tildedef} for the first term, and use Chernoff inequality for the second term. (b) holds because $N^{1-\delta}a_N^d\sim N^{1-\delta-d\beta}$. $1-\delta-\beta d>0$, thus $N^{1-\delta-d\beta}\rightarrow\infty$. Then we can get \eqref{eq:largeeps} using \eqref{eq:smalltildef}.

Moreover, we can show the following Lemma:
\begin{lem}\label{lem:log1}
	\begin{eqnarray}
	\underset{N\rightarrow 0}{\lim} \mathbb{E}[\ln^2 \epsilon\mathbf{1}(\epsilon>1)]=0.
	\end{eqnarray}
\end{lem}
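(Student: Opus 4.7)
My plan is to prove the lemma by a monotone coupling and dominated convergence. Let us couple the samples to a single i.i.d.\ infinite sequence $\mathbf{X}_1,\mathbf{X}_2,\ldots$ drawn from $f$, and for each $N\geq k+1$ let $\epsilon^{(N)}$ denote the $k$-th nearest neighbor distance from $\mathbf{X}_1$ to $\{\mathbf{X}_2,\ldots,\mathbf{X}_N\}$. Because enlarging a finite point set can only shrink an order statistic of the pairwise distances, $\epsilon^{(N)}$ is non-increasing in $N$. The function $x\mapsto \ln^2 x\,\mathbf{1}(x>1)$ is non-decreasing on $(0,\infty)$ (zero on $(0,1]$ and equal to the increasing $\ln^2 x$ on $(1,\infty)$), so
\begin{equation*}
\ln^2\epsilon^{(N)}\mathbf{1}\!\left(\epsilon^{(N)}>1\right)\;\le\;\ln^2\epsilon^{(k+1)}\mathbf{1}\!\left(\epsilon^{(k+1)}>1\right)\quad\text{for every }N\geq k+1.
\end{equation*}

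Next I would show that the right-hand side is integrable; this will serve as the dominating function. At $N=k+1$ the $k$-NN distance is simply the maximum, so
\begin{equation*}
\epsilon^{(k+1)}=\max_{2\leq j\leq k+1}\|\mathbf{X}_1-\mathbf{X}_j\|\;\le\;\|\mathbf{X}_1\|+\max_{2\leq j\leq k+1}\|\mathbf{X}_j\|.
\end{equation*}
Set $s^\ast=\min\{s,1\}$. By subadditivity of $t\mapsto t^{s^\ast}$ and Assumption~\ref{ass:var}(c) (combined with Jensen's inequality to downgrade from $s$ to $s^\ast$ if needed), $\mathbb{E}[(\epsilon^{(k+1)})^{s^\ast}]\leq (k+1)\cdot \mathrm{const}<\infty$. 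The elementary inequality $\ln x\leq (2/s^\ast)x^{s^\ast/2}$ for $x\geq 1$ (check by comparing derivatives at $x=1$) squares to $\ln^2 x\leq (4/(s^\ast)^2)\,x^{s^\ast}$, and therefore $\mathbb{E}[\ln^2\epsilon^{(k+1)}\mathbf{1}(\epsilon^{(k+1)}>1)]<\infty$.

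Finally I would establish the pointwise limit. By Assumption~\ref{ass:var}(a), $f$ is continuous almost everywhere, so for $f$-a.e.\ realization of $\mathbf{X}_1$ every open ball $B(\mathbf{X}_1,r)$ has strictly positive $f$-mass. The strong law of large numbers applied to the i.i.d.\ indicators $\mathbf{1}(\mathbf{X}_j\in B(\mathbf{X}_1,r))$ ($j\geq 2$) then shows that the number of neighbors inside $B(\mathbf{X}_1,r)$ diverges, which forces $\epsilon^{(N)}\to 0$ almost surely. In particular $\mathbf{1}(\epsilon^{(N)}>1)$ vanishes eventually, so $\ln^2\epsilon^{(N)}\mathbf{1}(\epsilon^{(N)}>1)\to 0$ a.s. Dominated convergence, with the integrable dominator from the previous paragraph, yields the conclusion $\mathbb{E}[\ln^2\epsilon^{(N)}\mathbf{1}(\epsilon^{(N)}>1)]\to 0$.

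The main technical obstacle is the finiteness of the dominating expectation: one must convert the very weak moment hypothesis $\mathbb{E}[\|\mathbf{X}\|^s]\leq K$ (with $s>0$ possibly tiny) into a moment bound for $\epsilon^{(k+1)}$ and then absorb the logarithm by the power bound $\ln^2 x\leq C_s\, x^{s^\ast}$. Everything else is either the trivial monotone coupling of nearest neighbor distances or the classical consistency of kNN inside the support, and thus amounts to standard manipulations.
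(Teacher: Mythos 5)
Your proof is correct, but it takes a genuinely different route from the paper. The paper argues quantitatively: it derives a Chernoff-type conditional tail bound $\text{P}(\epsilon>t\,|\,\mathbf{x})\leq\bigl(2eK/(t-\norm{\mathbf{x}})^{s}\bigr)^{(N-1)/2}$ for $t\geq(2K)^{1/s}+\norm{\mathbf{x}}$, writes $\mathbb{E}[\ln^2\epsilon\,\mathbf{1}(\epsilon>1)]=\int_0^\infty \text{P}(\epsilon>e^{\sqrt{t}})\,dt$, splits on whether $\norm{\mathbf{X}}$ exceeds $\tfrac12 e^{\sqrt{t}}$ to get an explicit bound $2^sKe^{-st/2}+(2^{1+s}eK)^{(N-1)/2}e^{-s(N-1)t/2}$, and then builds an $N$-independent integrable envelope $\phi(t)$ so that dominated convergence can be applied in the $t$-integral. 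You instead use a soft argument: a monotone coupling to one infinite i.i.d. sequence so that $\epsilon^{(N)}$ is pathwise non-increasing in $N$, domination by the $N=k+1$ case (where the $k$-NN distance is a maximum over $k$ points, whose $s^\ast$-moment is finite by subadditivity, Jensen, and Assumption \ref{ass:var}(c), with the logarithm absorbed via $\ln^2 x\leq (4/(s^\ast)^2)x^{s^\ast}$ for $x\geq 1$), almost-sure convergence $\epsilon^{(N)}\to 0$ via the strong law applied to ball-indicator counts, and dominated convergence with a random (but fixed) dominator. Your approach is more elementary and avoids the Chernoff computation entirely, at the cost of being purely qualitative; the paper's computation yields explicit, exponentially decaying tail bounds in the same style as the bounds it reuses for $\nu$ elsewhere (e.g., the analogues of \eqref{eq:nut} and Lemma \ref{lem:log4}). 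One small remark: your appeal to Assumption \ref{ass:var}(a) (continuity a.e.) to claim that every ball around $\mathbf{X}_1$ has positive $f$-mass is not really the right justification, but the claim itself is the standard fact that a Borel probability measure on $\mathbb{R}^d$ assigns full mass to its support, so this is a cosmetic rather than substantive issue.
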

\begin{proof}
	Please see Appendix~\ref{sec:log1}.
\end{proof}
Based on \eqref{eq:log3}, \eqref{eq:largeeps} and Lemma \ref{lem:log1}, $\mathbb{E}[\ln^2(\epsilon/\rho)]\rightarrow 0$. Therefore \eqref{eq:EU2} becomes
\begin{eqnarray}
\underset{N\rightarrow\infty}{\lim} \mathbb{E}[U^2]\leq 4\left[\psi'(k)+\psi^2(k)+\int f(\mathbf{x})\ln^2 f(\mathbf{x})d\mathbf{x}\right].
\label{eq:EU2lim}
\end{eqnarray}

Similar results hold for $\mathbb{E}[(U'')^2]$. Hence \eqref{eq:I11} becomes
\begin{eqnarray}
\Var\left[\frac{d}{N}\sum_{i=1}^N \ln \epsilon_i\right]=\mathcal{O}\left(\frac{1}{N}\right).
\end{eqnarray}

\textbf{Bound of $I_2$}. Let $\mathbf{Y}_1'$ be a sample that is i.i.d with $\mathbf{Y}_1,\ldots, \mathbf{Y}_M$. Define $\nu_i'$ as the $k$-th nearest neighbor distance of $\mathbf{X}_i$ among $\{\mathbf{Y}_1', \mathbf{Y}_2,\ldots, \mathbf{Y}_M \}$ for $i=1,\ldots,N$. Let $\mathbf{X}_1'$ be a sample that is i.i.d with $\mathbf{X}_1,\ldots, \mathbf{X}_N$, and define $\nu_1''$ as the $k$-th nearest neighbor distance of $\mathbf{X}_1'$ among $\{\mathbf{Y}_1,\ldots, \mathbf{Y}_M \}$. Then from Efron-Stein inequality,
\begin{eqnarray}
I_2&=&\Var\left[\frac{d}{N}\sum_{i=1}^N \ln \nu_i\right]\nonumber\\
&\leq & \frac{M}{2}\mathbb{E}\left[\left(\frac{d}{N}\sum_{i=1}^N \ln \nu_i-\frac{d}{N}\sum_{i=1}^N \ln \nu_i'\right)^2\right]+\frac{N}{2}\mathbb{E}\left[\left(\frac{d}{N}\ln \nu_1-\frac{d}{N}\ln\nu_1''\right)^2\right]\nonumber\\
&=&\frac{Md^2}{2N^2}\mathbb{E}\left[\left(\sum_{i=1}^N(\ln \nu_i-\ln \nu_i')\right)^2\right]+\frac{d^2}{2N}\mathbb{E}[(\ln \nu_1-\ln \nu_1'')^2]\nonumber\\
&:=&I_{21}+I_{22}.
\label{eq:I20}
\end{eqnarray}
To bound the right hand side of \eqref{eq:I20}, we first make the following definitions:
\begin{defi}
	Define two sets $S_1\subset \mathbb{R}^d$, $S_1'\subset \mathbb{R}^d$:
	\begin{eqnarray}
	S_1&:=&\left\{\mathbf{x}|\mathbf{Y}_1 \text{ is among the $k$ neighbors of $\mathbf{x}$ in }\{\mathbf{Y}_1,\ldots, \mathbf{Y}_M\} \right\},\\
	S_1'&:=&\left\{\mathbf{x}|\mathbf{Y}_1' \text{ is among the $k$ neighbors of $\mathbf{x}$ in }\{\mathbf{Y}_1,\ldots, \mathbf{Y}_M\} \right\}.	
	\end{eqnarray}
\end{defi}
\begin{defi}
	Define three events:
	\begin{eqnarray}
	E_1&:&\max\left\{ \underset{i\in [N]}{\max}\norm{\mathbf{X}_i}, \underset{i\in [M]}{\max}\norm{\mathbf{Y}_i}, \norm{\mathbf{Y}_1'} \right\}>(M+N+1)^\frac{5}{s};\\
	E_2&:&\min\left\{\underset{i\in [N]}{\min}\nu_i, \underset{i\in [M]}{\min}\nu_i' \right\}<(M+N)^{-\frac{k+5}{dk}};\label{eq:E2}\\
	E_3&:&\max\{n_X(S_1),n_X(S_1') \}>\frac{2k\gamma_dCN\ln^2 M}{M},
	\end{eqnarray}
	in which $C$ is the constant in Assumption \ref{ass:var} (d). Here, for any set $S$, $n_X(S)=\sum_{i=1}^N \mathbf{1}(\mathbf{X}_i\in S)$ is the number of points from $\mathbf{X}_1,\ldots, \mathbf{X}_N$ that are in $S$, and $\gamma_d$ is the same constant used in \eqref{eq:I11}. 
\end{defi}
We also denote $E=E_1\cup E_2\cup E_3$.

The following lemma shows that all of these three events happen with low probability.
\begin{lem}\label{lem:pe}
	The probabilities of $E_1$, $E_2$ and $E_3$ are bounded by:
	\begin{eqnarray}
	\text{P}(E_1)&\leq& \frac{k}{(M+N+1)^4};\label{eq:pe1}\\
	\text{P}(E_2)&\leq & \left(\frac{eU_gc_d}{k}\right)^k (M+N)^{-4};\label{eq:pe2}\\
	\text{P}(E_3)&\leq & 2\gamma_de^{-k\ln^2 M}(e\ln^2 M)^k+2\exp\left[-(2\ln 2-1)\frac{k\gamma_dCN\ln^2 M}{M}\right].\label{eq:pe3}
	\end{eqnarray}
\end{lem}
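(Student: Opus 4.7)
My plan is to handle the three events separately, using progressively more sophisticated techniques. The bounds on $\text{P}(E_1)$ and $\text{P}(E_2)$ will fall to standard moment bounds and union bounds, while the bound on $\text{P}(E_3)$ requires a geometric cone-covering argument combined with binomial tail bounds and the density-ratio assumption (Assumption \ref{ass:var}(d)).

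For $E_1$, Markov's inequality applied to Assumption \ref{ass:var}(c) gives $\text{P}(\norm{\mathbf{X}_i}>t)\leq K/t^s$, and similarly for the $\mathbf{Y}_j$'s and $\mathbf{Y}_1'$; a union bound over the $M+N+1$ samples at $t=(M+N+1)^{5/s}$ yields \eqref{eq:pe1} after absorbing the constant $K$ into the prefactor. For $E_2$, I would invoke Lemma \ref{lem:ub} to get $P_g(B(\mathbf{x},r))\leq U_gc_dr^d$ for all $\mathbf{x},r$. The event $\{\nu_i<r\}$ requires at least $k$ of the $\mathbf{Y}_j$'s to lie inside $B(\mathbf{X}_i,r)$, so conditioning on $\mathbf{X}_i$ and summing over $k$-subsets gives $\text{P}(\nu_i<r\mid \mathbf{X}_i)\leq \binom{M}{k}(U_gc_dr^d)^k\leq (eMU_gc_dr^d/k)^k$. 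Substituting $r=(M+N)^{-(k+5)/(dk)}$ and union-bounding over the kNN distances yields \eqref{eq:pe2}.

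The main challenge is $E_3$; I would handle $n_X(S_1)$ and $n_X(S_1')$ in parallel since their bounds are structurally identical, and sum them to obtain the factors of $2$ in \eqref{eq:pe3}. For $n_X(S_1)$, the plan is a two-stage argument. First, I bound $P_g(S_1)$ with high probability (over $\mathbf{Y}_1,\ldots,\mathbf{Y}_M$) via the standard cone covering by $\gamma_d$ cones $C_1,\ldots,C_{\gamma_d}$ with apex $\mathbf{Y}_1$, whose defining geometric property is that for $\mathbf{x},\mathbf{y}\in C_j$ with $\norm{\mathbf{y}-\mathbf{Y}_1}\leq\norm{\mathbf{x}-\mathbf{Y}_1}$ one also has $\norm{\mathbf{y}-\mathbf{x}}\leq\norm{\mathbf{x}-\mathbf{Y}_1}$. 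This forces the containment $S_1\cap C_j\subseteq B(\mathbf{Y}_1,R_j)$, where $R_j$ is the $k$-th nearest neighbor distance from $\mathbf{Y}_1$ to $\{\mathbf{Y}_2,\ldots,\mathbf{Y}_M\}\cap C_j$: any $\mathbf{Y}_i\in C_j$ that is closer to $\mathbf{Y}_1$ than $\mathbf{x}$ would be closer to $\mathbf{x}$ than $\mathbf{Y}_1$ is, so the presence of $k$ such $\mathbf{Y}_i$'s would eject $\mathbf{Y}_1$ from the $k$-NN list of $\mathbf{x}$.

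Writing $p_j(r)=P_g(B(\mathbf{Y}_1,r)\cap C_j)$, the event $\{P_g(S_1\cap C_j)>k\ln^2 M/M\}$ implies $\{p_j(R_j)>k\ln^2 M/M\}$, which in turn means that fewer than $k$ of $\mathbf{Y}_2,\ldots,\mathbf{Y}_M$ lie in the region $B(\mathbf{Y}_1,r_*)\cap C_j$ of $g$-mass $k\ln^2 M/M$, where $r_*=p_j^{-1}(k\ln^2 M/M)$. Conditional on $\mathbf{Y}_1$, this count is distributed as $\text{Binomial}(M-1,k\ln^2 M/M)$, and Chernoff's lower-tail bound gives probability $\lesssim(e\ln^2 M)^k e^{-k\ln^2 M}$; a union bound over the $\gamma_d$ cones produces the first term of \eqref{eq:pe3}. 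On the complementary event $\{P_g(S_1)\leq k\gamma_d\ln^2 M/M\}$, Assumption \ref{ass:var}(d) yields $P_f(S_1)\leq Ck\gamma_d\ln^2 M/M$, so $n_X(S_1)\sim\text{Binomial}(N,P_f(S_1))$ has mean at most $\mu=Ck\gamma_dN\ln^2 M/M$, and Chernoff's upper tail $\text{P}(X\geq 2\mu)\leq e^{-(2\ln 2-1)\mu}$ produces the second term. The analogous treatment of $n_X(S_1')$ (swapping $\mathbf{Y}_1$ for $\mathbf{Y}_1'$ throughout) is identical. The hardest part will be the cone-geometry step and the careful translation of the tail of $R_j$ into a binomial lower-tail probability for $P_g(S_1\cap C_j)$; everything else reduces to routine Chernoff computation.
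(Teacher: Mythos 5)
Your proposal is correct and follows essentially the same route as the paper's proof: Markov plus a union bound for $E_1$, the bound $P_g(B(\mathbf{x},r))\leq U_gc_dr^d$ with a binomial/Chernoff tail and union bound for $E_2$, and for $E_3$ the same $\gamma_d$-cone covering with the law-of-cosines ejection argument, yielding $P_g(S_1)\lesssim k\gamma_d\ln^2 M/M$ with high probability, then Assumption \ref{ass:var} (d) and a Chernoff upper tail for $n_X(S_1)$, doubled for $S_1'$. The only difference is bookkeeping (you work with the random $k$-NN radius $R_j$ inside each cone and a Binomial$(M-1,k\ln^2M/M)$ lower tail, while the paper fixes deterministic radii $r_j$ of $g$-mass $k\ln^2 M/M$), which changes nothing of substance.
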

\begin{proof}
	Please see Appendix \ref{sec:pe}.
\end{proof}
These three bounds show that $\text{P}(E)\lesssim (M+N)^{-4}$ as long as $N\ln M/M\rightarrow \infty$. Moreover, we show the following lemma:
\begin{lem}\label{lem:log4}
	There exists a constant $C_1$ such that for sufficiently large $M$ we have
	\begin{eqnarray}
	\mathbb{E}[\ln^4 \nu]<C_1\ln^4 M.
	\end{eqnarray}
\end{lem}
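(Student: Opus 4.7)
The plan is to condition on $\mathbf{X}=\mathbf{x}$ and reuse the split from the proof of Lemma \ref{lem:log2}, but with the layer-cake identity
\[
\mathbb{E}[(\ln\nu)^4\mathbf{1}(A)\mid\mathbf{x}] \;=\; \int_0^\infty 4u^3\,\text{P}(|\ln\nu|>u,\,A\mid\mathbf{x})\,du,
\]
applied separately to $A=\{\nu\leq 1\}$ and $A=\{\nu>1\}$. After obtaining a uniform bound for the first piece and an $\mathbf{x}$-dependent bound of the form $C(1+|\ln\norm{\mathbf{x}}|^4)$ for the second, I would take expectation over $\mathbf{X}\sim f$ and show that $\mathbb{E}[|\ln\norm{\mathbf{X}}|^4]$ is a finite constant.

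For the small-$\nu$ piece, Lemma \ref{lem:ub} yields $P_g(B(\mathbf{x},r))\leq U_g c_d r^d$, and a Chernoff-type bound on $\text{Binomial}(M,P_g(B(\mathbf{x},r)))$ (as in part (a) of the proof of Lemma \ref{lem:log2}) gives
\[
\text{P}(\nu<e^{-u}\mid\mathbf{x}) \;\leq\; \left(\tfrac{eMU_gc_d}{k}\right)^{k} e^{-dku}
\]
whenever the right-hand side is below $1$, i.e.\ for $u\geq u_0:=(1/d)\ln(eMU_gc_d/k)=O(\ln M)$. Splitting $\int_0^\infty 4u^3\text{P}(\nu<e^{-u}\mid\mathbf{x})\,du$ at $u_0$, the piece on $[0,u_0]$ is bounded by $u_0^4=O(\ln^4 M)$ and the piece on $[u_0,\infty)$ is $\int_0^\infty 4(v+u_0)^3 e^{-dkv}\,dv = O(\ln^3 M)$, yielding an $\mathbf{x}$-uniform bound $\mathbb{E}[(\ln\nu)^4\mathbf{1}(\nu\leq 1)\mid\mathbf{x}]=O(\ln^4 M)$.

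For the large-$\nu$ piece, I reuse the derivation of \eqref{eq:nut}: Assumption \ref{ass:var} (c) gives $P_g(B^c(\mathbf{0},r))\leq K/r^s$, so Chernoff with $r\geq(2K)^{1/s}+\norm{\mathbf{x}}$ gives $\text{P}(\nu>r\mid\mathbf{x})\leq(2eK/(r-\norm{\mathbf{x}})^s)^{M/2}$. Setting $u_1:=\max\{\ln(2\norm{\mathbf{x}}),\,(1/s)\ln(2^{1+s}eK),\,0\}$, the integral $\int_0^\infty 4u^3\text{P}(\nu>e^u\mid\mathbf{x})\,du$ is bounded by $u_1^4+(2^{1+s}eK)^{M/2}\int_{u_1}^\infty 4u^3 e^{-sMu/2}\,du$; the second term is $O(M^{-c})$ for large $M$, so this piece is at most $C(1+|\ln\norm{\mathbf{x}}|^4)$.

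The remaining step is to verify that $\mathbb{E}[|\ln\norm{\mathbf{X}}|^4]<\infty$. This is exactly the $p=4$ case of the calculation in \eqref{eq:lnp}: the upper tail $\text{P}(\norm{\mathbf{X}}>t)\leq K/t^s$ comes from Assumption \ref{ass:var} (c), while the lower tail $\text{P}(\norm{\mathbf{X}}<t)\leq U_f c_d t^d$ comes from the bound $f\leq U_f$ (Lemma \ref{lem:ub}); together they give $\mathbb{E}[|\ln\norm{\mathbf{X}}|^4]\leq 24(K/s^4 + U_f c_d/d^4)$, a finite constant independent of $M$. Combining the three estimates yields $\mathbb{E}[\ln^4\nu]\leq C_1\ln^4 M$ for some $C_1$ and all sufficiently large $M$. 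The main obstacle is the upper-tail piece: because $\nu$ can be pushed out arbitrarily when $\mathbf{X}$ lies far from the bulk of $g$, the conditional bound must retain the additive $|\ln\norm{\mathbf{x}}|^4$ term, and closing the argument depends on first extracting finiteness of the fourth logarithmic moment of $\norm{\mathbf{X}}$ from the weak polynomial-moment assumption on $\mathbf{X}$.
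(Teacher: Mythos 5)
Your proposal is correct and follows essentially the same route as the paper's proof: both split $\mathbb{E}[\ln^4\nu\,|\,\mathbf{x}]$ via the layer-cake formula into the small-$\nu$ part, bounded by $O(\ln^4 M)$ uniformly in $\mathbf{x}$ using $g\leq U_g$ plus a Chernoff bound, and the large-$\nu$ part, bounded by $C(1+\ln^4\norm{\mathbf{x}})$ via the tail estimate \eqref{eq:nut}, and then integrate over $\mathbf{X}\sim f$ using the finiteness of $\mathbb{E}[\ln^4\norm{\mathbf{X}}]$ exactly as in \eqref{eq:lnp}. The only cosmetic remark is that in the setting of Theorem \ref{thm:var} the needed upper bound on $f$ for the lower tail of $\norm{\mathbf{X}}$ follows directly from Assumption \ref{ass:var} (d) (namely $f\leq CU_g$) rather than from Lemma \ref{lem:ub}, which was established under Assumption \ref{ass:unbounded}; this does not affect the argument.
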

\begin{proof}
	Please see Appendix~\ref{sec:log4}.
\end{proof}
Based on Lemma \ref{lem:pe} and Lemma \ref{lem:log4},
\begin{eqnarray}
\mathbb{E}\left[\left(\sum_{i=1}^N(\ln \nu_i-\ln \nu_i')\right)^2\mathbf{1}(E)\right]&\leq & N\mathbb{E}\left[\left(\sum_{i=1}^N(\ln \nu_i-\ln \nu_i')^2\right)\mathbf{1}(E)\right]\nonumber\\
&\leq & 2N\mathbb{E}\left[\sum_{i=1}^N (\ln^2 \nu_i+\ln^2\nu_i')\mathbf{1}(E)\right]\nonumber\\
&= & 4N^2\mathbb{E}[\ln^2\nu \mathbf{1}(E)]\nonumber\\
&\leq &4N^2\sqrt{\mathbb{E}[\ln^4 \nu]\text{P}(E)}\nonumber\\
&\lesssim & \frac{N^2\ln^2 M}{(M+N)^2}.
\label{eq:E}
\end{eqnarray}

If $E$ does not happen, then $\norm{\mathbf{X}_i}$, $\norm{\mathbf{Y}_i}$, $\norm{\mathbf{Y}_1'}$ are all upper bounded by $(M+N+1)^{(5/s)}$. Thus $\nu_i$ and $\nu_i'$ are all upper bounded by $2(M+N+1)^{(5/s)}$. Besides, from \eqref{eq:E2}, they are both lower bounded by $(M+N)^{-\frac{k+5}{dk}}$. There are at most $n_X(S_1)+n_X(S_1')$ points such that $\nu_i\neq \nu_i'$. Hence
\begin{eqnarray}
\mathbb{E}\left[\left(\sum_{i=1}^N(\ln \nu_i-\ln \nu_i')\right)^2\mathbf{1}(E^c)\right]&\leq& \left[\frac{4k\gamma_dCN\ln^2 M}{M}\left(\frac{5}{s}\ln((2(M+N+1)))+\frac{k+5}{dk}\ln(M+N)  \right)\right]^2\nonumber\\
&\lesssim &\frac{N^2}{M^2}\ln^4 M\ln^2(M+N).
\label{eq:Ec}
\end{eqnarray}
Combining \eqref{eq:E} and \eqref{eq:Ec}, we have
\begin{eqnarray}
I_{21}\lesssim \frac{\ln^4 M\ln^2(M+N)}{M}.
\end{eqnarray}

Then $I_{22}$ can be bounded by:
\begin{eqnarray}
I_{22}&=&\frac{1}{2N}\mathbb{E}[(\ln(Nc_d\nu_1^d)-\ln(Nc_d(\nu_1')^d))^2]\nonumber\\
&\leq & \frac{1}{N}\left[\mathbb{E}[(\ln(Mc_d\nu_1^d))^2]+\mathbb{E}[(\ln(Mc_d(\nu_1')^d))^2]\right]\nonumber\\
&=&\frac{2}{N}\mathbb{E}[(\ln(Mc_d\nu_1^d))^2].
\end{eqnarray}
Similar to the analysis from \eqref{eq:EU2} to \eqref{eq:EU2lim}, we can show that the limit of $\mathbb{E}[(\ln(Mc_d\nu_1^d))^2]$ can also be bounded by the right hand side of \eqref{eq:EU2lim}. Therefore
\begin{eqnarray}
I_{22}\lesssim \frac{1}{N},
\end{eqnarray}
\begin{eqnarray}
I_2=I_{21}+I_{22}\lesssim \frac{1}{N}+\frac{\ln^4 M\ln^2(M+N)}{M},
\end{eqnarray}
and
\begin{eqnarray}
\Var[\hat{D}(f||g)]\leq 2I_1+2I_2\lesssim \frac{1}{N}+\frac{\ln^4 M\ln^2(M+N)}{M}.
\end{eqnarray}

\subsection{Proof of Lemma \ref{lem:log1}}\label{sec:log1}
Similar to \eqref{eq:nut}, we can show that for any given $\mathbf{x}$, and $t\geq(2K)^{1/s}+\norm{\mathbf{x}}$,
\begin{eqnarray}
\text{P}(\epsilon>t|\mathbf{x})\leq \left(\frac{2eK}{(t-\norm{\mathbf{x}})^s}\right)^{\frac{1}{2}(N-1)}.
\label{eq:epst}
\end{eqnarray}
Then
\begin{eqnarray}
\mathbb{E}[\ln^2 \epsilon\mathbf{1}(\epsilon>1)]&=&\int_0^\infty \text{P}\left(\ln^2\epsilon \mathbf{1}(\epsilon>1)>t\right)dt\nonumber\\
&=&\int_0^\infty \text{P}(\epsilon>e^{\sqrt{t}})dt.
\end{eqnarray}
Therefore if $(1/2)e^{\sqrt{t}}\geq (2K)^{1/s}$,
\begin{eqnarray}
\text{P}(\epsilon>e^{\sqrt{t}})&\leq& \text{P}\left(\norm{\mathbf{X}}>\frac{1}{2}e^{\sqrt{t}}\right)+\text{P}\left(\norm{\mathbf{X}}<\frac{1}{2}e^{\sqrt{t}}, \epsilon>e^{\sqrt{t}}\right)\nonumber\\
&\leq &\frac{k}{\left(\frac{1}{2}e^{\sqrt{t}}\right)^s}+\left(\frac{2eK}{\left(e^{\sqrt{t}}-\frac{1}{2}e^{\sqrt{t}}\right)^s}\right)^{\frac{1}{2}(N-1)}\nonumber\\
&=&2^s Ke^{-\frac{1}{2} st}+(2^{1+s}eK)^{\frac{1}{2}(N-1)}e^{-\frac{1}{2} s(N-1) t}.
\end{eqnarray}
Define
\begin{eqnarray}
\phi(t)=\left\{
\begin{array}{ccc}
1 &\text{if}& t\leq \max\left\{\ln^2(2^{1+\frac{1}{s}}K^\frac{1}{s}),\frac{2}{s}\ln(2^{1+s}eK) \right\} \\
2^sKe^{-\frac{1}{2}st}+e^{-\frac{1}{4}st} &\text{if} & t>\max\left\{\ln^2(2^{1+\frac{1}{s}}K^\frac{1}{s}),\frac{2}{s}\ln(2^{1+s}eK) \right\}.
\end{array}
\right.
\end{eqnarray}
It can be shown that $\text{P}(\epsilon>e^{\sqrt{t}})\leq \phi(t)$. Since $\phi(t)$ is integrable in $(0,\infty)$, according to Lebesgue dominated convergence theorem, 
\begin{eqnarray}
\underset{N\rightarrow\infty}{\lim}\mathbb{E}[\ln^2\epsilon\mathbf{1}(\epsilon>1)]=\int_0^\infty \underset{N\rightarrow\infty}{\lim}\text{P}(\epsilon>e^{\sqrt{t}})dt=0.
\end{eqnarray}

\subsection{Proof of Lemma \ref{lem:pe}}\label{sec:pe}
\textbf{Proof of \eqref{eq:pe1}}. According to Assumption \ref{ass:var} (c), for $i=1,\ldots,N$,
\begin{eqnarray}
\text{P}(\norm{\mathbf{X}_i}>t)\leq \frac{\mathbb{E}[\norm{\mathbf{X}_i}^s]}{t^s}\leq \frac{K}{t^s}.
\end{eqnarray}
Similar bound holds for $\norm{\mathbf{X}_1'}$ and $\mathbf{Y}_i$, $i=1,\ldots,M$. Let $t=(M+N+1)^{(5/s)}$, and using the union bound, we get \eqref{eq:pe1}.

\textbf{Proof of \eqref{eq:pe2}}. Since $g$ is bounded by $U_g$, we have $P_g(B(\mathbf{x},r))\leq U_gc_dr^d$ for any $\mathbf{x}$ and $r>0$. Let $r_0=(M+N)^{-\frac{k+5}{dk}}$, then for sufficiently large $M$,
\begin{eqnarray}
U_gc_dr_0^d<\frac{k}{M}.
\end{eqnarray}
Hence from Chernoff inequality,
\begin{eqnarray}
\text{P}(\nu_i<r_0)\leq \exp[-MU_gc_dr_0^d]\left(\frac{eMU_gc_dr_0^d}{k}\right)^k\leq \left(\frac{eMU_gc_dr_0^d}{k}\right)^k.
\end{eqnarray}
Then \eqref{eq:pe2} can be obtained by calculating the union bound.

\textbf{Proof of \eqref{eq:pe3}}. We first prove \eqref{eq:pe3} under the condition that we are using $\ell_2$ norm first. We will then generalize the result to the case with arbitrary norm. Define
\begin{eqnarray}
S(\mathbf{y})&:=&\left\{\mathbf{x}|\mathbf{y} \text{ is among the $k$ neighbors of $\mathbf{x}$ in }\{\mathbf{Y}_1,\ldots, \mathbf{Y}_M\} \right\},
\end{eqnarray}
then $S_1=S(\mathbf{Y}_1)$, $S_1'=S(\mathbf{Y}_1')$. 

Recall that $\gamma_d$ is defined as the minimum number of cones with angle $\pi/6$ that can cover $\mathbb{R}^d$. Now we pick any $\mathbf{y}\in \mathbb{R}^d$, and divide $\mathbb{R}^d$ into $\gamma_d$ cones with angle $\pi/6$, such that $\mathbf{y}$ is the vertex of all the cones. These cones are named as $C_j$, $j=1,\ldots, \gamma_d$, and then $\cup_{j=1}^{\gamma_d} C_j=\mathbb{R}^d$. Define $r_j$ such that
\begin{eqnarray}
P_g(B(\mathbf{y},r_j)\cap C_j)=\frac{k\ln^2 M}{M}, \forall j=1,\ldots, \gamma_d.
\end{eqnarray}
Define $n_Y(S)=\sum_{i=1}^M \mathbf{1}(\mathbf{Y_i}\in S)$ as the number of points from $\{\mathbf{Y}_1,\ldots, \mathbf{Y}_M \}$ that are in $S$. Moreover, define
\begin{eqnarray}
S_0(\mathbf{y})=\cup_{j=1}^{\gamma_d} B(\mathbf{y},r_j)\cap C_j.
\end{eqnarray}
Then from Chernoff inequality,
\begin{eqnarray}
\text{P}(n_Y(B(\mathbf{y},r_j)\cap C_j)<k)\leq e^{-k\ln^2 M}(e\ln^2 M)^k,
\end{eqnarray} 
and
\begin{eqnarray}
\text{P}\left(\cup_{j=1}^{\gamma_d}n_Y(B(\mathbf{y},r_j)\cap C_j)<k\right)\leq \gamma_de^{-k\ln^2 M}(e\ln^2 M)^k.
\end{eqnarray}
This result indicates that with probability at least $1-\gamma_de^{-k\ln^2 M}(e\ln^2 M)^k$, there are at least $k$ points in $B(\mathbf{y}, r_j)\cap C_j$ for $j=1,\ldots, \gamma_d$. 

Under this condition, we can show that $S(\mathbf{y})\subset S_0(\mathbf{y})$. For any $\mathbf{x}\notin S_0(\mathbf{y})$, since $\cup_{j=1}^{\gamma_d}C_j=\mathbb{R}^d$, $\mathbf{x}\in C_j$ for some $j\in \{1,\ldots,\gamma_d \}$. In $B(\mathbf{y},r_j)\cap C_j$, there are already at least $k$ points, $\mathbf{Y}_{i_l}$, $l=1,\ldots, k$, among $\mathbf{Y}_1,\ldots, \mathbf{Y}_M$. Then $\norm{\mathbf{Y}_{i_l}-\mathbf{y}}<r_j$ for $\l=1,\ldots, k$, while $\norm{\mathbf{x}-\mathbf{y}}\geq r_j$. Denote $\theta$ as the angle between vector $\mathbf{Y}_{i_l}-\mathbf{y}$ and $\mathbf{x}-\mathbf{y}$. Since $\mathbf{Y}_{i_l}\in C_j$ and $\mathbf{x}\in C_j$, we have $\theta<\pi/3$, and thus
\begin{eqnarray}
\norm{\mathbf{Y}_{i_l}-\mathbf{x}}^2&=&\norm{\mathbf{x}-\mathbf{y}}^2+\norm{\mathbf{Y}_{i_l}-\mathbf{y}}^2-2\norm{\mathbf{x}-\mathbf{y}}\norm{\mathbf{Y}_{i_l}-\mathbf{y}}\cos \theta\nonumber\\
&<&\norm{\mathbf{x}-\mathbf{y}}^2+\norm{\mathbf{Y}_{i_l}-\mathbf{y}}^2-\norm{\mathbf{x}-\mathbf{y}}\norm{\mathbf{Y}_{i_l}-\mathbf{y}}\nonumber\\
&<&\norm{\mathbf{x}-\mathbf{y}},
\end{eqnarray}
which indicates that $\norm{\mathbf{y}-\mathbf{x}}>\norm{\mathbf{Y}_{i_l}-\mathbf{x}}$ for $l=1,\ldots, k$. $\mathbf{Y}_{i_l}$, $l=1,\ldots, k$ are all closer to $\mathbf{x}$ than $\mathbf{y}$, therefore $\mathbf{y}$ can not be one of the $k$ nearest neighbors of $\mathbf{x}$, i.e. $\mathbf{x}\notin S(\mathbf{y})$. Recall that $\mathbf{x}$ is arbitrarily picked outside $S_0(\mathbf{y})$, thus $S(\mathbf{y})\subset S_0(\mathbf{y})$.  Therefore with probability at least $1-\gamma_de^{-k\ln^2 M}(e\ln^2 M)^k$,
\begin{eqnarray}
P_g(S(\mathbf{y}))\leq P_g(S_0(\mathbf{y}))=P_g\left(\cup_{j=1}^{\gamma_d} B(\mathbf{y},r_j)\cap C_j\right)\leq \frac{k\gamma_d \ln^2 M}{M}.
\end{eqnarray}

Using Assumption \ref{ass:var} (d), $P_f(S)\leq CP_g(S)$ for any $S\subset \mathbb{R}^d$. If both $S(\mathbf{Y}_1)\subset S_0(\mathbf{Y}_1)$ and $S(\mathbf{Y}_1)\subset S_0(\mathbf{Y}_1)$ hold, then
\begin{eqnarray}
\max\{P_f(S_1),P_f(S_1') \}\leq \frac{k\gamma_dC\ln^2M}{M}.
\end{eqnarray}
Using Chernoff inequality again,
\begin{eqnarray}
\text{P}\left(n_X(S_1)>\frac{2k\gamma_d CN\ln^2M}{M}|S(\mathbf{Y}_1)\subset S_0(\mathbf{Y}_1)\right)&\leq& e^{-NP_f(S_1)}\left(\frac{eNP_f(S_1)}{2k\gamma_dCN\ln^2 M\frac{1}{M}}\right)^\frac{2k\gamma_dCN\ln^2M}{M}\nonumber\\
&\leq &\exp\left[-(2\ln 2-1)k\gamma_d C\ln^2 M\frac{N}{M}\right].
\end{eqnarray}
Therefore
\begin{eqnarray}
\text{P}(E_3)&\leq& \text{P}(S(\mathbf{Y}_1)\not\subset S_0(\mathbf{Y}_1))+\text{P}\left(n_X(S_1)>\frac{2k\gamma_d CN\ln^2M}{M}|S(\mathbf{Y}_1)\subset S_0(\mathbf{Y}_1)\right)\nonumber\\
&&\hspace{-3mm}+\text{P}(S(\mathbf{Y}_1')\not\subset S_0(\mathbf{Y}_1'))+\text{P}\left(n_X(S_1')>\frac{2k\gamma_d CN\ln^2M}{M}|S(\mathbf{Y}_1')\subset S_0(\mathbf{Y}_1')\right)\nonumber\\
&\leq & 2\gamma_de^{-k\ln^2 M}(e\ln^2 M)^k+2\exp\left[-(2\ln 2-1)\frac{k\gamma_dCN\ln^2 M}{M}\right].
\end{eqnarray}
The proof is complete.
\subsection{Proof of Lemma \ref{lem:log4}}\label{sec:log4}
Define
\begin{eqnarray}
t_1=\max\left\{\ln^4 (2\norm{\mathbf{x}}), \frac{16}{s^4}\ln^4(2^{1+s}eK) \right\},
\label{eq:t1def}
\end{eqnarray}
and
\begin{eqnarray}
t_2=\left(\frac{2}{d}\ln \frac{MU_gc_d}{k}\right)^4,
\label{eq:t2def}
\end{eqnarray}
then
\begin{eqnarray}
\mathbb{E}[\ln^4\nu|\mathbf{x}]&=&\int_0^\infty \text{P}\left(\ln^4 \nu>t|\mathbf{x}\right)dt\nonumber\\
&=& \int_0^\infty \text{P}\left(\nu>e^{t^\frac{1}{4}}|\mathbf{x}\right) dt+\int_0^\infty \text{P}\left(\nu<e^{-t^\frac{1}{4}}|\mathbf{x}\right) dt.
\label{eq:log4-1}
\end{eqnarray}
\begin{eqnarray}
\int_0^\infty \text{P}\left(\nu>e^{t^\frac{1}{4}}|\mathbf{x}\right) dt&\leq & \int_0^{t_1} dt+\int_{t_1}^\infty \left(\frac{2eK}{(e^{t^\frac{1}{4}}-\norm{\mathbf{x}})^s}\right)^{\frac{1}{2}M} dt\nonumber\\
&\overset{(a)}{\leq} & t_1+\int_{t_1}^\infty \left(\frac{2^{1+s}eK}{e^{st^\frac{1}{4}}}\right)^{\frac{1}{2}M} dt\nonumber\\
&\overset{u=t^\frac{1}{4}}{=}& t_1+(2^{1+s}eK)^{\frac{1}{2}M}\int_{t_1^\frac{1}{4}}^\infty e^{-\frac{1}{2}sMu} 4u^3 du \nonumber\\
&=&t_1+(2^{1+s}eK)^{\frac{1}{2}M}\frac{256}{s^3M^3}\left(\underset{u}{\sup}\left(\frac{1}{4}sMu\right)^3 e^{-\frac{1}{4}sMu}\right)\int_{t_1^\frac{1}{4}}^\infty e^{-\frac{1}{4}Mu} du\nonumber\\
&=&t_1+(2^{1+s}eK)^{\frac{1}{2}M} \frac{27648e^{-3}}{s^4M^4}\exp\left[-\frac{1}{4}sMt_1^\frac{1}{4}\right]\nonumber\\
&\overset{(b)}{\leq} & \ln^4 (2\norm{\mathbf{x}})+ \frac{16}{s^4}\ln^4(2^{1+s}eK)+\frac{27648e^{-3}}{s^4M^4}\nonumber\\
&\lesssim &\ln^4 \norm{\mathbf{x}}+1.
\end{eqnarray}
In (a), we use $\norm{\mathbf{x}}<e^{t^\frac{1}{4}}/2$. This is true because of the definition of $t_1$ in \eqref{eq:t1def}. (b) holds because according to \eqref{eq:t1def}, $(2^{1+s}eK)^{\frac{1}{2}M} \exp\left[-\frac{1}{4}sMt_1^\frac{1}{4}\right]<1$.

Now we bound the second term in \eqref{eq:log4-1}. Using Chernoff inequality,
\begin{eqnarray}
\int_0^\infty \text{P}\left(\nu<e^{-t^\frac{1}{4}} \right)dt&\leq& \int_0^\infty \text{P}\left(P(B(\mathbf{x},\nu))<U_gc_de^{-dt^\frac{1}{4}}\right) dt\nonumber\\
&=&\int_0^{t_2} dt+\int_{t_2}^\infty \left(\frac{eMU_gc_d\exp[-dt^\frac{1}{4}]}{k}\right)^k dt \nonumber\\
&\lesssim & \ln^4 M.
\end{eqnarray}
Thus
\begin{eqnarray}
\mathbb{E}[\ln^4 \nu]\lesssim 1+\ln^4 M+\mathbb{E}[\ln^4\norm{\mathbf{X}}]\sim \ln^4 M,
\end{eqnarray}
in which the last step uses \eqref{eq:lnp}.

\section{Proof of Theorem \ref{thm:mmx1}}\label{sec:mmx1}
In this section, we show the minimax convergence rate of KL divergence estimator for distributions with bounded support and densities bounded away from zero. The proof can be divided into proving the following three bounds separately:

\begin{eqnarray}
R_a(N,M)&\gtrsim& \frac{1}{M}+\frac{1}{N},\label{eq:Ra1}\\
R_a(N,M)&\gtrsim & N^{-\frac{2}{d}\left(1+\frac{2}{\ln\ln N}\right)}\ln^{-2}N \ln^{-\left(2-\frac{2}{d}\right)}(\ln N),\label{eq:Ra2}\\
R_a(N,M)&\gtrsim & M^{-\frac{2}{d}\left(1+\frac{2}{\ln\ln M}\right)}\ln^{-2}M \ln^{-\left(2-\frac{2}{d}\right)}(\ln M).\label{eq:Ra3}
\end{eqnarray}

\textbf{Proof of \eqref{eq:Ra1}}.

Let $\mathbf{X}$ be supported on $[0,1]^d$, and
\begin{eqnarray}
f_1(\mathbf{x})=\left\{
\begin{array}{ccc}
\frac{3}{2} &\text{if} & 0\leq x_1\leq \frac{1}{2}\\
\frac{1}{2} &\text{if} & \frac{1}{2}<x_1\leq 1,
\end{array}
\right.
f_2(\mathbf{x})=\left\{
\begin{array}{ccc}
\frac{3}{2}+\delta &\text{if} & 0\leq x_1\leq \frac{1}{2}\\
\frac{1}{2}-\delta &\text{if} & \frac{1}{2}<x_1\leq 1,
\end{array}
\right.
\end{eqnarray}
and $g(\mathbf{x})=1$. Then 
\begin{eqnarray}
D(f_1||g)=\int f_1(\mathbf{x})\ln \frac{f_1(\mathbf{x})}{g(\mathbf{x})}d\mathbf{x}=\frac{3}{4}\ln \frac{3}{2}+\frac{1}{4}\ln \frac{1}{2},
\end{eqnarray}
\begin{eqnarray}
D(f_2||g)&=&\left(\frac{3}{4}+\frac{1}{2}\delta\right)\ln \left(\frac{3}{2}+\delta\right)+\left(\frac{1}{4}-\frac{1}{2}\delta\right)\ln \left(\frac{1}{2}-\delta\right)\nonumber\\
&=&\frac{3}{4}\ln \frac{3}{2}+\frac{1}{4}\ln \frac{1}{2}+\left(\frac{1}{2}\ln 3\right)\delta +\mathcal{O}(\delta^2).
\end{eqnarray}
Therefore, for sufficiently small $\delta$, $D(f_2||g)-D(f_1||g)\geq (\ln 3)\delta/4$. Moreover,
\begin{eqnarray}
D(f_1||f_2)=-\frac{3}{4}\ln \left(1+\frac{2}{3}\delta\right)-\frac{1}{4}\ln (1-2\delta).
\end{eqnarray}

By Taylor expansion, it can be shown that $\ln (1+2\delta/3)\geq 2\delta/3-\delta^2/9$, and $\ln (1-2\delta)\geq -2\delta+2\delta^2$, thus
\begin{eqnarray}
D(f_1||f_2)\leq \frac{2}{3}\delta^2.
\end{eqnarray}

Therefore, from Le Cam's lemma \cite{tsybakov2009introduction},
\begin{eqnarray}
R_a(N,M)&\geq& \frac{1}{4}\left(D(f_1||g)-D(f_2||g)\right)^2\exp[-ND(f_1||f_2)]\nonumber\\
&\geq &\frac{1}{4}\left(\frac{1}{4}\ln 3\right)^2\delta^2\exp\left[-\frac{2}{3}N\delta^2\right].
\end{eqnarray}

Let $\delta=1/\sqrt{N}$, then 
\begin{eqnarray}
R_a(N,M)\gtrsim \frac{1}{N}.
\end{eqnarray}

Similarly, let
\begin{eqnarray}
g_1(\mathbf{x})=\left\{
\begin{array}{ccc}
\frac{3}{2} &\text{if} & 0\leq x_1\leq \frac{1}{2}\\
\frac{1}{2} &\text{if} & \frac{1}{2}<x_1\leq 1,
\end{array}
\right.
g_2(\mathbf{x})=\left\{
\begin{array}{ccc}
\frac{3}{2}+\delta &\text{if} & 0\leq x_1\leq \frac{1}{2}\\
\frac{1}{2}-\delta &\text{if} & \frac{1}{2}<x_1\leq 1,
\end{array}
\right.
f(\mathbf{x})=1,
\end{eqnarray}
for $\mathbf{x}\in [0,1]^d$. Then it can be shown that
\begin{eqnarray}
R_a(N,M)\gtrsim \frac{1}{M}.
\end{eqnarray}
The proof of \eqref{eq:Ra1} is complete.

\textbf{Proof of \eqref{eq:Ra2}}.

The proof has similar idea with \cite{wu2016minimax} and \cite{zhao2019analysis}. To begin with, define
\begin{eqnarray}
\mathcal{F}_a&&=\left\{(f,g)|f(\mathbf{x})=(1-\alpha)Q_a(\mathbf{x})+\sum_{i=1}^m \frac{u_i}{mD^d}Q_a\left(\frac{\mathbf{x}-\mathbf{a}_i}{D}\right)\right.,\nonumber\\  &&\hspace{18mm}g(\mathbf{x})=(1-\alpha)Q_a(\mathbf{x})
+\sum_{i=1}^m \frac{\alpha}{mD^d}Q_a\left(\frac{\mathbf{x}-\mathbf{a}_i}{D}\right),\nonumber\\
&&\hspace{18mm}\left.\frac{1}{m}\sum_{i=1}^m u_i=\alpha, 1<mD^{d-1}<C_1, \frac{u_i}{mD^d}\in \{0 \}\cup (c,1)\right\},\nonumber\\
\label{eq:fa}
\end{eqnarray}
in which $Q_a(\mathbf{x})=1/v_d$ for $\mathbf{x}\in B(\mathbf{0},1)$, $v_d$ is the unit ball volume, thus $\int Q_a(\mathbf{x})d\mathbf{x}=1$. $C_1$ and $c$ are two constants. $\alpha\in (0,1)$ and $D$ decrease with $N$, while $m$ increases with $N$. $\mathbf{a}_i$, $i=1,\ldots,n$ are selected such that $\norm{\mathbf{a}_i-\mathbf{a}_j}>2D$ for all $i,j\in \{1,\ldots,m\}$ and $i\neq j$. It can be checked that both $f$ and $g$ integrate to $1$. The condition $u_i/(mD^d)\in \{0\}\cup(c,1)$ is designed such that the density in the support is bounded away from zero, i.e. if $f(\mathbf{x})>0$, then $f(\mathbf{x})\geq c$. Moreover, the surface area of the support is $s_d(1+mD^{d-1})$, in which $s_d$ is the surface area of unit ball, and $s_d=dv_d$. With the condition $1<mD^{d-1}<C_1$, the surface area of the supports of $f$ and $g$ are both upper bounded by $s_dC_1$. Therefore, for sufficiently large $H_f$, $H_g$, $U_f$, $U_g$ and sufficiently small $L_f$ and $L_g$, $\mathcal{F}_a\in \mathcal{S}_a$. Define
\begin{eqnarray}
R_{a1}(N,M)=\underset{\hat{D}}{\inf}\underset{(f,g)\in \mathcal{F}_a}{\sup}\mathbb{E}\left[(\hat{D}(N,M)-D(f||g))^2 \right].
\end{eqnarray}
Recall that $R_a(N,M)$ is defined as the minimax mean square error over $\mathcal{S}_a$, hence
\begin{eqnarray}
R_a(N,M)\geq R_{a1}(N,M).
\label{eq:0to1}
\end{eqnarray}

To derive a lower bound of $R_{a1}(N,M)$, we use Le Cam's method again, with Poisson sampling. Define
\begin{eqnarray}
R_{a2}=\underset{\hat{D}}{\inf}\underset{(f,g)\in \mathcal{F}_a}{\sup}\mathbb{E}\left[(\hat{D}(N',M)-D(f||g))^2 \right],
\end{eqnarray}
in which $N'\sim \text{Poi}(N)$, Poi is the Poisson distribution. Then we have the following lemma:
\begin{lem}\label{lem:1to2}
	\begin{eqnarray}
	R_{a1}(N,M)\geq R_{a2}(2N,M)-\frac{1}{4}\exp[-(1-\ln 2)N].
	\label{eq:1to2}
	\end{eqnarray}
\end{lem}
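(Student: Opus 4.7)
The plan is a standard Poissonization reduction: I would show that any estimator in the fixed-sample regime can be converted into a Poisson-sample estimator at the cost of the claimed exponentially small additive term, which immediately gives the stated inequality between minimax risks. First I would fix an arbitrary fixed-sample estimator $\hat{D}(N,M)$ and construct a Poisson-sample estimator $\tilde{D}(N',M)$, with $N'\sim\text{Poi}(2N)$, by a truncation trick: if $N'\geq N$, apply $\hat{D}$ to the first $N$ of the $N'$ samples drawn from $f$ together with the $M$ samples from $g$; if $N'<N$, return a deterministic default value $d_0$, taken to be the midpoint of the range $\{D(f||g):(f,g)\in\mathcal{F}_a\}$. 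The key observation is that, conditional on $\{N'\geq N\}$, the first $N$ Poisson-sampled points from $f$ remain i.i.d.\ from $f$, so the conditional risk of $\hat{D}$ coincides with its fixed-sample risk.

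Next I would decompose the Poisson-sample risk as
\begin{eqnarray}
\mathbb{E}[(\tilde{D}-D(f||g))^2]=\mathbb{E}[(\hat{D}(N,M)-D(f||g))^2]\,\text{P}(N'\geq N)+(d_0-D(f||g))^2\,\text{P}(N'<N),\nonumber
\end{eqnarray}
upper-bound the first summand by the full fixed-sample MSE, and handle the Poisson lower tail by a one-line Chernoff calculation: for $N'\sim\text{Poi}(2N)$, $\text{P}(N'<N)\leq 2^N e^{-N}=\exp[-(1-\ln 2)N]$. Taking the supremum over $(f,g)\in\mathcal{F}_a$ of both sides and then the infimum over $\hat{D}$ yields
\begin{eqnarray}
R_{a2}(2N,M)\leq R_{a1}(N,M)+\frac{1}{4}\exp[-(1-\ln 2)N],\nonumber
\end{eqnarray}
which rearranges to the statement of the lemma. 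The constant $\frac{1}{4}$ comes from the fact that, with $d_0$ at the midpoint of the range of $D(f||g)$, $\sup_{(f,g)\in\mathcal{F}_a}(d_0-D(f||g))^2$ equals one quarter of the squared diameter of that range, and this diameter is at most $1$ under the parametrization used to define $\mathcal{F}_a$.

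The main obstacle will be verifying this diameter bound, which relies on the structural constraints of \eqref{eq:fa} — in particular, the averaging condition $(1/m)\sum_i u_i=\alpha$ and the smallness of the perturbation amplitudes $u_i/(mD^d)$ — and on checking that $\alpha$ and $D$ are chosen small enough that all admissible pairs $(f,g)$ produce $D(f||g)$ values within a unit-length interval. This is a routine but careful computation using the explicit forms of $f$ and $g$, and it is what pins down the specific constant $\frac{1}{4}$ rather than some other absolute constant.
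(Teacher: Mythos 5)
Your argument is correct and is essentially the paper's own proof: the paper phrases the reduction as $R_{a2}(2N,M)\le\mathbb{E}[R_{a1}(N',M)]$ and then uses monotonicity of $R_{a1}$ in the sample size together with the trivial midpoint estimator on the event $\{N'<N\}$, which is exactly your truncation-plus-default construction, and it uses the same Chernoff bound $\text{P}(N'<N)\le\exp[-(1-\ln 2)N]$ for $N'\sim\text{Poi}(2N)$. The diameter computation you defer is precisely what the paper carries out: since $\int f(\mathbf{x})\ln g(\mathbf{x})d\mathbf{x}$ is identical for all $(f,g)\in\mathcal{F}_a$, the range of $D(f||g)$ equals the range of $h(f)$, which is $\alpha|\ln\alpha|\le 1/e<1$, giving the constant $\tfrac{1}{4}$.
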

\begin{proof}
	Please refer to Appendix~\ref{sec:1to2} for details.
\end{proof}
Furthermore, define
\begin{eqnarray}
\mathcal{F}_a'&=&\left\{(f,g)|f(\mathbf{x})=(1-\alpha)Q_a(\mathbf{x})+\sum_{i=1}^m \frac{u_i}{mD^d}Q_a\left(\frac{\mathbf{x}-\mathbf{a}_i}{D}\right),\right.\nonumber\\
&&\hspace{15mm} g(\mathbf{x})=(1-\alpha)Q_a(\mathbf{x})
+\sum_{i=1}^m \frac{\alpha}{mD^d}Q_a\left(\frac{\mathbf{x}-\mathbf{a}_i}{D}\right),\\&&\hspace{15mm}\left.\left|\frac{1}{m}\sum_{i=1}^m u_i-\alpha\right|<\epsilon, 1<mD^{d-1}<C_1, \frac{u_i}{mD^d}\in \{0 \}\cup (c(1+\epsilon),1-\epsilon) \right\}.\nonumber\\
\label{eq:fa2}
\end{eqnarray}
Comparing with the definition of $\mathcal{F}_a$ in \eqref{eq:fa}, the only difference is that we now allow $(1/m)\sum_{i=1}^m u_i$ to deviate slightly from $\alpha$. As a result, $f$ is not necessarily a pdf, since it is not normalized. However, we extend the definition of KL divergence $D(f||g)=\int f(\mathbf{x})\ln (f(\mathbf{x})/g(\mathbf{x}))d\mathbf{x}$ here. Define
\begin{eqnarray}
R_{a3}(N,M,\epsilon)=\underset{D}{\inf}\underset{(f,g)\in \mathcal{F}_a'}{\sup}\mathbb{E}[(\hat{D}(N',M)-D(f||g))^2],
\end{eqnarray}
in which $N'\sim \text{Poi}(N\int f(\mathbf{x})d\mathbf{x})$. Then the number of samples falling on any two disjoint intervals are mutually independent. $R_{a2}$ can be lower bounded by $R_{a3}$ with the following lemma:
\begin{lem}\label{lem:2to3}
	If $\epsilon<\alpha/2$, then
	\begin{eqnarray}
	R_{a2}((1-\epsilon)N,M)\geq \frac{1}{2}R_{a3}(N,M)-3\epsilon^2\left(\ln^2\frac{\alpha}{mD^dv_d}+\ln^2\alpha+\frac{9}{4}\right).
	\end{eqnarray}
\end{lem}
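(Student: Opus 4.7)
\emph{Plan.} The strategy is a reduction via Poisson thinning: from any estimator $\hat D_2$ for the $\mathcal{F}_a$ problem with $\mathrm{Poi}((1-\epsilon)N)$ samples from $f$ and $M$ samples from $g$, we build an estimator $\hat D_3$ for the $\mathcal{F}_a'$ problem with $\mathrm{Poi}(N\int f)$ samples and $M$ samples from $g$, whose mean squared error is only slightly larger. Taking $\inf_{\hat D_2}\sup_{\mathcal{F}_a}$ and $\sup_{\mathcal{F}_a'}$ then yields the desired inequality after rearrangement.

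\emph{Step 1 (near projection).} For $(f,g)\in\mathcal{F}_a'$, set $\eta:=\tfrac{1}{m}\sum_i u_i-\alpha\in(-\epsilon,\epsilon)$ and $S:=\{i:u_i>0\}$. Define $\tilde f$ by the coefficients $\tilde u_i=u_i-m\eta/|S|$ for $i\in S$ and $\tilde u_i=0$ otherwise, so that $\tfrac{1}{m}\sum_i\tilde u_i=\alpha$. A direct check using $|\eta|<\epsilon<\alpha/2$, $|S|>m\alpha/2$, and $u_i/(mD^d)\in(c(1+\epsilon),1-\epsilon)$ shows $\tilde u_i/(mD^d)\in(c,1)$, so $(\tilde f,g)\in\mathcal{F}_a$; the same margin gives $(1-\epsilon)\tilde f(\mathbf{x})\leq f(\mathbf{x})$ pointwise on the base and on each spike.

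\emph{Step 2 (thinning and triangle inequality).} View the samples from $f$ as a Poisson point process of intensity $Nf$ on $\mathbb{R}^d$, and thin each point independently with probability $p(\mathbf{x}):=(1-\epsilon)\tilde f(\mathbf{x})/f(\mathbf{x})\leq 1$. By the thinning theorem, the retained points form a Poisson process of intensity $(1-\epsilon)N\tilde f$, which is equivalent to $\mathrm{Poi}((1-\epsilon)N)$ i.i.d.\ samples from $\tilde f\in\mathcal{F}_a$. Let $\hat D_3$ be $\hat D_2$ applied to these thinned samples together with the untouched $M$ samples from $g$. Then $(a+b)^2\leq 2a^2+2b^2$ yields
\begin{equation*}
\mathbb{E}\bigl[(\hat D_3-D(f\|g))^2\bigr]\leq 2\,\mathbb{E}\bigl[(\hat D_2-D(\tilde f\|g))^2\bigr]+2\bigl(D(f\|g)-D(\tilde f\|g)\bigr)^2,
\end{equation*}
and taking $\sup_{\mathcal{F}_a'}$ on the left and $\inf_{\hat D_2}\sup_{\mathcal{F}_a}$ on the right gives $R_{a3}(N,M)\leq 2R_{a2}((1-\epsilon)N,M)+2\sup_{(f,g)\in\mathcal{F}_a'}(D(f\|g)-D(\tilde f\|g))^2$, which rearranges to the claim provided the gap is controlled as below.

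\emph{Main obstacle: bounding the divergence gap.} The technical heart of the proof is showing $|D(f\|g)-D(\tilde f\|g)|\leq \epsilon\bigl(|\ln(\alpha/(mD^dv_d))|+|\ln\alpha|+3/2\bigr)$, so that squaring with $(a+b+c)^2\leq 3(a^2+b^2+c^2)$ produces the exact constant $3\epsilon^2\bigl(\ln^2(\alpha/(mD^dv_d))+\ln^2\alpha+9/4\bigr)$. Split
\begin{equation*}
D(f\|g)-D(\tilde f\|g)=\int(f-\tilde f)\ln(f/g)\,d\mathbf{x}+\int\tilde f\,\ln(f/\tilde f)\,d\mathbf{x}.
\end{equation*}
The first integral is supported on $\bigcup_{i\in S}B(\mathbf{a}_i,D)$ with total signed mass $-\eta$, $|\eta|<\epsilon$; on each active spike $\ln(f/g)$ is a combination of $\ln((1-\alpha)+u_i/(mD^d))$ and $\ln((1-\alpha)+\alpha/(mD^d))$, whose magnitude is controlled by $|\ln(\alpha/(mD^dv_d))|$ up to a constant, giving the first contribution. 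The second integral is a KL-type discrepancy on the spikes; expanding $\ln(f/\tilde f)=\ln(1+(u_i-\tilde u_i)/(\tilde u_i+(1-\alpha)mD^d))$ to first order in $\eta/|S|$ and summing produces $O(\epsilon|\ln\alpha|)$ plus a bounded constant (the source of the $9/4$). The delicacy is that several logarithms may blow up as $\alpha,mD^d\to 0$, so the cancellations in $\int(f-\tilde f)=-\eta$ and the sign pattern of $u_i-\tilde u_i$ must be tracked carefully to keep only the stated log-moduli in the final bound.
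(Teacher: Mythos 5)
Your overall architecture (replace $(f,g)\in\mathcal{F}_a'$ by a normalized surrogate in $\mathcal{F}_a$, apply $(a+b)^2\le 2a^2+2b^2$, then bound the divergence gap by $\epsilon\bigl(|\ln\tfrac{\alpha}{mD^dv_d}|+|\ln\alpha|+\tfrac32\bigr)$) is the same as the paper's, but the two concrete steps you rely on do not hold. First, the additive projection $\tilde u_i=u_i-m\eta/|S|$ need not stay in $\mathcal{F}_a$, and your stated facts do not show that it does. Since each active spike has $u_i\in(c(1+\epsilon)mD^d,(1-\epsilon)mD^d)$ and $\sum_{i\in S}u_i=m(\alpha+\eta)$, the number of active spikes is $|S|\le(\alpha+\eta)/(c(1+\epsilon)D^d)$, so the per-spike shift satisfies $m|\eta|/|S|\gtrsim (|\eta|/\alpha)\,c\,mD^d$; the slack built into $\mathcal{F}_a'$ is only $c\epsilon\,mD^d$ below and $\epsilon\,mD^d$ above, and since $|\eta|$ may be nearly $\epsilon$ while $\alpha$ is small (it tends to $0$ in the application), the shift overwhelms the slack and $\tilde u_i/(mD^d)$ can leave $(c,1)$. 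Your bound $|S|>m\alpha/2$ is true but far weaker: it only gives a relative shift of order $\epsilon/(\alpha\,mD^d)$, which diverges because $mD^d\to0$, so the ``direct check'' does not go through. The same arithmetic breaks the pointwise domination: when $\eta<0$ you must add $m|\eta|/|S|$ to every active spike, and $(1-\epsilon)\tilde u_i\le u_i$ then forces essentially $|\eta|\lesssim\epsilon\alpha$, which is not implied by $\epsilon<\alpha/2$. Second, even where domination holds, the thinning probability $(1-\epsilon)\tilde f(\mathbf{x})/f(\mathbf{x})$ depends on the unknown $u_i$ (through $\eta$, $|S|$, and $u_i$ itself on each spike), so the resulting $\hat D_3$ is not a legitimate estimator; a reduction of this type must be a parameter-free randomization of the observed data.

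The paper instead rescales the spikes proportionally, $f^*=(1-\alpha)Q_a+\tfrac1q\sum_i\tfrac{u_i}{mD^d}Q_a(\cdot)$ with $q=\sum_iu_i/(m\alpha)$ and $|q-1|<\epsilon/\alpha$, and does not thin at all: it compares the two inf--sup risks directly with the surrogate target $D(f^*\|g)$ and then bounds $|D(f\|g)-D(f^*\|g)|$ through $|h(f)-h(f^*)|\le\epsilon\ln\tfrac1\alpha+\tfrac32\epsilon$ and $\bigl|\int(f-f^*)\ln g\,d\mathbf{x}\bigr|=\bigl|\tfrac1m\sum_iu_i-\alpha\bigr|\,\bigl|\ln\tfrac{\alpha}{mD^dv_d}\bigr|\le\epsilon\bigl|\ln\tfrac{\alpha}{mD^dv_d}\bigr|$; the cross-entropy term is exact because $\ln g$ is constant on the spikes and $f-f^*$ vanishes elsewhere, and squaring via $(a+b+c)^2\le3(a^2+b^2+c^2)$ yields the stated constant $3\epsilon^2(\ln^2\tfrac{\alpha}{mD^dv_d}+\ln^2\alpha+\tfrac94)$. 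Your sketch of the gap bound (``up to a constant'') does not actually deliver that constant, and, more importantly, the multiplicative normalization is what keeps the perturbation proportional to each $u_i$ and makes this computation clean. If you wish to keep an explicit simulation argument, you would need to redesign it so the retention probabilities are known constants on each region, which your construction does not provide.
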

\begin{proof}
	Please refer to Appendix~\ref{sec:2to3} for details.
\end{proof}
With Lemma \ref{lem:1to2} and Lemma \ref{lem:2to3}, the problem of bounding $R_a(N,M)$ can be converted to bounding $R_{a3}(M,N,\epsilon)$. We then show the following lemma, which is slightly modified from Lemma 11 in \cite{zhao2019analysis}.
\begin{lem}\label{lem:lecam}
	Let $U$, $U'$ be two random variables that satisfy the following conditions:
	
	1) $U,U'\in [\eta \lambda, \lambda]$, in which $\lambda\leq (1-\epsilon)mD^d$, $0<\eta<1$, and $\eta \lambda\geq c(1+\epsilon)mD^d$;
	
	2) $\mathbb{E}[U]=\mathbb{E}[U']=\alpha$.
	
	Define
	\begin{eqnarray}
	\Delta=\left|\mathbb{E}\left[U\ln \frac{1}{U}\right]-\mathbb{E}\left[U'\ln \frac{1}{U'}\right]\right|.
	\end{eqnarray}
	Let
	\begin{eqnarray}
	\epsilon=4\lambda/\sqrt{m},
	\label{eq:eps}
	\end{eqnarray}
	then
	\begin{eqnarray}
	R_{a3}(N,M,\epsilon)&\geq& \frac{\Delta^2}{16}\left[\frac{31}{32}-\frac{64\lambda^2\left(\ln \frac{m}{\lambda}\right)^2}{m\Delta^2}-m\mathbb{TV}\left(\mathbb{E}\left[\text{Poi}\left(\frac{NU}{m}\right)\right],\mathbb{E}\left[\text{Poi}\left(\frac{NU'}{m}\right)\right]\right)\right.\nonumber\\
	&&-\left.\frac{16\lambda^2}{m\Delta^2}(d\ln D+h(Q_a))^2\right],
	\label{eq:lecam}
	\end{eqnarray}
	in which $h(Q_a)=\ln v_d$ is the differential entropy of $Q_a$.	
\end{lem}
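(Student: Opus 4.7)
The proof plan is to apply Le Cam's two-point method in combination with Poisson sampling, using mixture priors built from $U$ and $U'$. Concretely, I would introduce two priors $\pi_U$ and $\pi_{U'}$ on $\mathcal{F}_a'$, under which the coefficients $u_1,\ldots,u_m$ defining $f$ are drawn i.i.d.\ from $U$ and from $U'$ respectively, while $g$ is held fixed (its parameters do not depend on these coefficients). The claimed lower bound will emerge by combining (i) the gap $\Delta$ between the expected values of $D(f\|g)$ under the two priors, (ii) concentration of the random quantity $D(f\|g)$ around those expectations, (iii) the near-total support of each prior on $\mathcal{F}_a'$, and (iv) the total variation distance between the Poissonized data distributions the two priors induce.

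Checking that each prior is almost entirely supported on $\mathcal{F}_a'$ is straightforward: because $U,U'\in[\eta\lambda,\lambda]\subset(c(1+\epsilon)mD^d,(1-\epsilon)mD^d]$, the pointwise condition $u_i/(mD^d)\in(c(1+\epsilon),1-\epsilon)$ holds automatically, and only the average constraint $|(1/m)\sum_i u_i-\alpha|<\epsilon$ can fail; a Chebyshev bound with $\mathrm{Var}(U)\le\lambda^2/4$ gives this failure probability at most $\lambda^2/(4m\epsilon^2)=1/64$ with the choice $\epsilon=4\lambda/\sqrt m$, and a union bound over the two priors yields the constant $31/32$ in \eqref{eq:lecam}. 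A direct bump-by-bump computation next gives $D(f\|g)=(1/m)\sum_{i=1}^m u_i\ln(u_i/\alpha)$, whose expectation under each prior combined with $\mathbb{E}[U]=\mathbb{E}[U']=\alpha$ produces the gap $|\mathbb{E}_{\pi_U}[D(f\|g)]-\mathbb{E}_{\pi_{U'}}[D(f\|g)]|=\Delta$. Applying Chebyshev to the i.i.d.\ sum $(1/m)\sum_i u_i\ln u_i$, using $|u_i\ln u_i|\lesssim\lambda\ln(m/\lambda)$, concentrates $D(f\|g)$ about its mean at scale $\Delta/4$ with failure probability at most $64\lambda^2\ln^2(m/\lambda)/(m\Delta^2)$; this is the second error term. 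The third error term arises analogously from Chebyshev concentration of the auxiliary linear summand $\ln\alpha\cdot(1/m)\sum_i u_i$, whose coefficient magnitude $|\ln\alpha|$ matches the log-volume $\ln(D^d v_d)=d\ln D+h(Q_a)$ of a single bump when $\alpha$ scales like $D^d v_d$.

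For the hypothesis-testing ingredient, I would invoke Poissonization to decouple the bumps: the $\mathbf X$-sample counts in the $m$ disjoint bumps are mutually independent Poissons, with the count in bump $i$ marginally $\mathrm{Poi}(Nu_i/m)$ once $u_i$ is integrated out, while the $\mathbf Y$-counts are unaffected by the prior. Since $\pi_U$ and $\pi_{U'}$ are product measures over $u_1,\ldots,u_m$, the joint data distributions tensorize, and the standard inequality $\mathbb{TV}(\bigotimes_i P_i,\bigotimes_i Q_i)\le\sum_i\mathbb{TV}(P_i,Q_i)$ bounds the joint TV by $m\cdot\mathbb{TV}(\mathbb{E}[\mathrm{Poi}(NU/m)],\mathbb{E}[\mathrm{Poi}(NU'/m)])$. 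Assembling these ingredients through the two-point Le Cam reduction---with effective separation $\Delta/2$ after subtracting two concentration tolerances of $\Delta/4$---yields a prefactor $(\Delta/4)^2=\Delta^2/16$ multiplying a good-event probability bounded below by the union-bound expression $31/32$ minus the concentration failures and the TV term, which is precisely the bound in \eqref{eq:lecam}. The main technical obstacle is the careful simultaneous bookkeeping of these probabilistic controls: conditioning onto $\mathcal{F}_a'$ and onto the two concentration events must not destroy either the tensorization of TV or the $\Delta/2$ effective separation, so the restricted priors must be shown to inherit both properties up to additive error. This follows the same blueprint as Lemma 11 of \cite{zhao2019analysis} for the entropy case, but the KL setting requires additional control of the $\ln\alpha$ term, which is what produces the extra $(d\ln D+h(Q_a))^2$ factor not present in the entropy version.
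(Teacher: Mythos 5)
Your proposal is correct and follows essentially the same route as the paper, whose entire proof of this lemma is a pointer to Lemma 11 of \cite{zhao2019analysis}: the i.i.d.\ two-prior construction, Chebyshev control of the average constraint giving the $31/32$ term via $\epsilon=4\lambda/\sqrt m$, Chebyshev concentration of the functional giving the $\lambda^2\ln^2(m/\lambda)$ and $(d\ln D+h(Q_a))^2$ terms, Poissonization with tensorized total variation, and the conditioned Le Cam two-point reduction are exactly the cited machinery (the same pattern the paper spells out for the analogous Lemma~\ref{lem:lecam2} in Appendix~\ref{sec:lecam2}). Your bookkeeping of where each error term arises is consistent with that argument, so no gap beyond routine constant tracking.
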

\begin{proof}
	The proof is exactly the same as the proof of Lemma 11 in \cite{zhao2019analysis}. Condition (1) is different from the corresponding condition in \cite{zhao2019analysis}, but such difference does not affect the proof.
\end{proof}

We construct $U$, $U'$ as following. Let $X, X'\in [\eta,1]$ have matching moments to the $L$-th order, and let
\begin{eqnarray}
P_U(du)&=&\left(1-\mathbb{E}\left[\frac{\eta}{X}\right]\right)\delta_0(du)+\frac{\alpha}{u}P_{\alpha X/\eta}(du),\\
P_{U'}(du)&=&\left(1-\mathbb{E}\left[\frac{\eta}{X'}\right]\right)\delta_0(du)+\frac{\alpha}{u}P_{\alpha X'/\eta}(du),
\end{eqnarray} 
in which $\delta_0$ denotes the distribution that puts all the mass on $u=0$. Now we assume $\alpha\leq (1-\epsilon)mD^d\eta$. Let $\lambda=\alpha/\eta$, then $U, U'$ are supported in $[0,\lambda]$, and condition (1) in Lemma \ref{lem:lecam} is satisfied. Then from Lemma 4 in \cite{wu2016minimax},
\begin{eqnarray}
\Delta&=&\mathbb{E}\left[U\ln \frac{1}{U}-U'\ln \frac{1}{U'}\right]\nonumber\\
&=&\alpha \left(\mathbb{E}\left[\ln \frac{1}{X}\right]-\mathbb{E}\left[\ln \frac{1}{X'}\right]\right),
\end{eqnarray}
and $\mathbb{E}[U^j]=\mathbb{E}[U'^j]$ for $j=1,\ldots,L$. In particular, $\mathbb{E}[U]=\mathbb{E}[U']=\alpha$. When $X$ and $X'$ are properly selected, according to eq.(34) in \cite{wu2016minimax},
\begin{eqnarray}
\left|\mathbb{E}\left[\ln \frac{1}{X}\right]-\mathbb{E}\left[\ln \frac{1}{X'}\right]\right|=2\underset{p\in \mathcal{P}_L}{\inf}\underset{x\in [\eta,1]}{\sup}|\ln x-p(x)|,
\end{eqnarray}
in which $\mathcal{P}_L$ is the set of all polynomials with degree $L$.

According to eq.(5) and (6) in page 445 in \cite{timan1963theory}, for $a>1$, $L\rightarrow \infty$,
\begin{eqnarray}
\underset{p\in \mathcal{P}_L}{\inf}\underset{t\in [-1,1]}{\sup}|\ln (a-t)-p(t)|=\frac{1+o(1)}{L\sqrt{a^2-1}(a+\sqrt{a^2-1})^L}.
\end{eqnarray}
Let $x=1-(t+1)/(a+1)$, and $\eta=(a-1)/(a+1)$, then the above equation can be transformed to the following one:
\begin{eqnarray}
\underset{p\in \mathcal{P}_L}{\inf}\underset{x\in [\eta,1]}{\sup}|\ln x-p(x)|=\frac{1+o(1)}{L\frac{\sqrt{4\eta}}{1-\eta}\left(\frac{1+\eta}{1-\eta}+\frac{\sqrt{4\eta}}{1-\eta}\right)^L},
\end{eqnarray}
i.e. there exist two constants $c_1(\eta)$ and $c_2(\eta)$ that depend on $\eta$, such that
\begin{eqnarray}
\underset{p\in \mathcal{P}_L}{\inf}\underset{x\in [\eta,1]}{\sup}|\ln x-p(x)|\geq \frac{c_1(\eta)}{Lc_2^L(\eta)}.
\end{eqnarray}
Hence
\begin{eqnarray}
\Delta\geq \frac{2\alpha c_1(\eta)}{Lc_2^L(\eta)}.
\end{eqnarray}
To bound the total variation term in \eqref{eq:lecam}, we use the following lemma.
\begin{lem}\label{lem:TV}
	(\cite{wu2016minimax}, Lemma 3) Let $Z, Z'$ be random variables on $[0,A]$. If $\mathbb{E}[V^j]=\mathbb{E}[V'^j]$ for $j=1,\ldots,L$, and $L>2eA$, then
	\begin{eqnarray}
	\mathbb{TV}\left(\mathbb{E}[\text{Poi}(Z)],\mathbb{E}[\text{Poi}(Z')]\right)\leq \left(\frac{2eA}{L}\right)^L.
	\end{eqnarray}	
\end{lem}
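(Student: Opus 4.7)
The plan is to express the total variation as a sum of differences of Poisson pmfs at each atom, expand each pmf as a power series in the mixing variable, and use matching moments to kill all low-order contributions. Specifically, by the definition of total variation for discrete distributions,
\begin{eqnarray}
2\,\mathbb{TV}(\mathbb{E}[\text{Poi}(Z)],\mathbb{E}[\text{Poi}(Z')]) = \sum_{k=0}^\infty \left|\mathbb{E}\left[\frac{e^{-Z}Z^k}{k!}\right]-\mathbb{E}\left[\frac{e^{-Z'}(Z')^k}{k!}\right]\right|. \nonumber
\end{eqnarray}
The key elementary identity is $e^{-z}z^k/k! = \sum_{m=k}^\infty (-1)^{m-k}\binom{m}{k}z^m/m!$, obtained by expanding $e^{-z}$ in its Taylor series and collecting the $z^m$ term. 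This rewrites the Poisson pmf as a power series in $z$ whose coefficients do not depend on the law of the mixing variable.

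Next I would invoke the moment-matching hypothesis: since $\mathbb{E}[Z^m]=\mathbb{E}[(Z')^m]$ for all $m\leq L$, every monomial of degree at most $L$ cancels when we take the difference of expectations, so for each fixed $k$ only the tail starting at $m\geq\max(k,L+1)$ contributes. Using the crude bound $|\mathbb{E}[Z^m]-\mathbb{E}[(Z')^m]|\leq 2A^m$ (since both variables take values in $[0,A]$), summing over $k$, and swapping the order of the two sums (justified by absolute convergence of the series on any compact set), the binomial coefficients collapse via $\sum_{k=0}^m\binom{m}{k}=2^m$, yielding
\begin{eqnarray}
2\,\mathbb{TV}(\mathbb{E}[\text{Poi}(Z)],\mathbb{E}[\text{Poi}(Z')]) \leq \sum_{m=L+1}^\infty \frac{2(2A)^m}{m!}. \nonumber
\end{eqnarray}

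The last step is to control this tail of the exponential series. Stirling's inequality $m!\geq (m/e)^m$ gives $(2A)^m/m!\leq (2eA/m)^m$, and under the hypothesis $L>2eA$ the ratio $2eA/m$ is strictly below $1$ for every $m\geq L+1$ and moreover decreasing in $m$, so the series is dominated by a convergent geometric tail of ratio at most $2eA/(L+1)$. Summing that tail gives the same order of magnitude as its leading term $(2eA/(L+1))^{L+1}$, which produces the advertised bound $(2eA/L)^L$. The main (mild) obstacle is preserving the exact constants so the final factor is $(2eA/L)^L$ rather than a constant multiple; this is handled either by a slightly sharper Lagrange-type remainder in the Taylor expansion or by absorbing the geometric factor $(1-2eA/L)^{-1}$ into the leading term, since already $L>2eA$ makes this absorption harmless.
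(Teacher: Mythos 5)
This lemma is not proved in the paper at all: it is imported verbatim from \cite{wu2016minimax} (Lemma 3), so there is no internal proof to compare against. Your argument is essentially the standard proof of that cited lemma: writing $2\,\mathbb{TV}$ as the $\ell_1$ sum over Poisson atoms, expanding $e^{-z}z^k/k!=\sum_{m\geq k}(-1)^{m-k}\binom{m}{k}z^m/m!$, using the matching moments to cancel all terms with $m\leq L$, and collapsing the binomial sum to arrive at $\mathbb{TV}\leq \sum_{m\geq L+1}(2A)^m/m!$. All of these steps are correct (boundedness on $[0,A]$ justifies the interchanges), so the skeleton is sound and matches the source's approach.

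The one step that does not work as literally written is the final absorption. If you first apply $m!\geq(m/e)^m$ termwise and then dominate by a geometric series of ratio $q=2eA/(L+1)$, the factor $(1-q)^{-1}$ is only bounded by $L+1$ when $2eA$ is close to $L$, and the resulting bound $2eA\left(\frac{2eA}{L+1}\right)^{L}$ exceeds $(2eA/L)^L$ whenever $2eA>(1+1/L)^L$, i.e. already for $A\geq 1/2$; so the claim that $L>2eA$ makes the absorption ``harmless'' is false in that regime. The fix is the sharper remainder you mention in passing, and it is needed rather than optional: keep the factorial and bound
\begin{eqnarray}
\sum_{m\geq L+1}\frac{(2A)^m}{m!}\;\leq\;\frac{(2A)^{L+1}}{(L+1)!}\cdot\frac{1}{1-\frac{2A}{L+2}}\;\leq\;\frac{e}{e-1}\left(\frac{2eA}{L+1}\right)^{L+1},\nonumber
\end{eqnarray}
where now the ratio is $2A/(L+2)<1/e$ because $L>2eA$, and the second inequality is Stirling applied once to $(L+1)!$. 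Writing $\left(\frac{2eA}{L+1}\right)^{L+1}=\left(\frac{2eA}{L}\right)^{L}\left(\frac{L}{L+1}\right)^{L}\frac{2eA}{L+1}$ and using $\left(\frac{L}{L+1}\right)^{L}\leq \frac12$ and $\frac{2eA}{L+1}<1$ gives a total prefactor at most $\frac{e}{2(e-1)}<1$, which yields the clean bound $(2eA/L)^L$. With that replacement your proof is complete.
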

Substitute $Z, Z'$ with $NU/m$ and $NU'/m$, and let $A=N\lambda/m$, we get
\begin{eqnarray}
\mathbb{TV}\left(\mathbb{E}\left[\text{Poi}\left(\frac{NU}{m}\right)\right], \mathbb{E}\left[\text{Poi}\left(\frac{NU'}{m}\right)\right]\right)\leq \left(\frac{2eN\lambda}{mL}\right)^L\leq \left(\frac{2eND^d}{L}\right)^L,
\label{eq:TVbound}
\end{eqnarray}
in which the last step holds because $\lambda\leq (1-\epsilon)mD^d$.

Let $L,D,m$ change in the following way:
\begin{eqnarray}
L&=&\left\lfloor \frac{\ln \ln N}{\ln c_2(\eta)}\right\rfloor,\label{eq:L}\\
D&=&\left(\frac{L}{2e}\right)^\frac{1}{d}N^{-\frac{1}{d}\left(1+\frac{1}{L}\right)}\label{eq:D},
\end{eqnarray}
and from \eqref{eq:fa},
\begin{eqnarray}
m\sim D^{-(d-1)}\sim L^{-\left(1-\frac{1}{d}\right)}N^{\left(1-\frac{1}{d}\right)\left(1+\frac{1}{L}\right)},\label{eq:m}
\end{eqnarray}
and
\begin{eqnarray}
\lambda\sim mD^d\sim L^\frac{1}{d}N^{-\frac{1}{d}\left(1+\frac{1}{L}\right)},\label{eq:lam}\\
\alpha=\lambda \eta\sim L^\frac{1}{d}N^{-\frac{1}{d}\left(1+\frac{1}{L}\right)}.\label{eq:alpha}
\end{eqnarray}
Then
\begin{eqnarray}
\Delta\geq \frac{2\alpha c_1(\eta)}{Lc_2^L(\eta)}\gtrsim \frac{\alpha}{\ln N\ln\ln N}.
\end{eqnarray}
Note that the second, third and fourth term in the bracket at the right hand side of \eqref{eq:lecam} converge to zero. In particular, for the second term,
\begin{eqnarray}
\frac{\lambda^2\left(\ln \frac{m}{\lambda}\right)^2}{m\Delta^2}\sim \frac{(\ln N)^4}{m}\rightarrow 0.
\end{eqnarray}
For the third term,
\begin{eqnarray}
m\mathbb{TV}\left(\mathbb{E}\left[\text{Poi}\left(\frac{NU}{m}\right)\right], \mathbb{E}\left[\text{Poi}\left(\frac{NU'}{m}\right)\right]\right)\leq \left(\frac{2eND^d}{L}\right)^L m=\frac{m}{N}\rightarrow 0,
\end{eqnarray}
and it is straightforward to show that the fourth term also converges to zero. Therefore, from Lemma \ref{lem:lecam},
\begin{eqnarray}
R_{a3}(N,M,\epsilon)\gtrsim \Delta^2\gtrsim L^\frac{2}{d}N^{-\frac{2}{d}\left(1+\frac{1}{L}\right)}\frac{1}{\ln^2 N\ln^2 \ln N}.
\end{eqnarray}

Pick $\eta$ such that $c_2(\eta)=e^2$. According to condition 1) in the statement of Lemma \ref{lem:lecam}, this is possible if $c$ is sufficiently small. Then
\begin{eqnarray}
R_{a3}(N,M,\epsilon)\gtrsim N^{-\frac{2}{d}\left(1+\frac{2}{\ln \ln N}\right)}\ln^{-2}N \ln^{-\left(2-\frac{2}{d}\right)}(\ln N).
\label{eq:R3lb}
\end{eqnarray}
From Lemma \ref{lem:2to3}, and note that from \eqref{eq:eps},
\begin{eqnarray}
\epsilon^2=\frac{16\lambda^2}{m^2}\sim \frac{m^2 D^{2d}}{m}\sim D^{d+1},
\end{eqnarray}
which converges sufficiently fast, thus $R_{a2}(N(1-\epsilon))$ can also be lower bounded with the right hand side of \eqref{eq:R3lb}. From \eqref{eq:0to1} and \eqref{eq:1to2}, 
\begin{eqnarray}
R_a(N,M)\gtrsim N^{-\frac{2}{d}\left(1+\frac{2}{\ln \ln N}\right)}\ln^{-2}N \ln^{-\left(2-\frac{2}{d}\right)}(\ln N).
\end{eqnarray}

\textbf{Proof of \eqref{eq:Ra3}}.

Define
\begin{eqnarray}
\mathcal{G}_a&=&\left\{(f,g)|f(\mathbf{x})=(1-\alpha)Q_a(\mathbf{x})+\sum_{i=1}^m \frac{\alpha}{mD^d}Q_a\left(\frac{\mathbf{x}-\mathbf{a}_i}{D}\right),\right.\nonumber\\ &&\hspace{15mm}g(\mathbf{x})=(1-\alpha)Q_a(\mathbf{x})
+\sum_{i=1}^m \frac{v_i}{mD^d}Q_a\left(\frac{\mathbf{x}-\mathbf{a}_i}{D}\right),\nonumber\\
&&\hspace{15mm}\left.\frac{1}{m}\sum_{i=1}^m v_i=\alpha, 1<mD^{d-1}<C_1, \frac{u_i}{mD^d}\in  (c,1)\right\}.\nonumber\\
\label{eq:ga}
\end{eqnarray}
Then for any $(f,g)\in \mathcal{G}_a$,
\begin{eqnarray}
D(f||g)=\sum_{i=1}^m \frac{\alpha}{m}\ln \frac{\alpha}{v_i}=\alpha \ln \alpha-\frac{\alpha}{m}\sum_{i=1}^m \ln v_i.
\label{eq:kl2}
\end{eqnarray}

Define
\begin{eqnarray}
R_{a4}(N,M)=\underset{\hat{D}}{\inf}\underset{(f,g)\in \mathcal{G}_a}{\sup}\mathbb{E}[(\hat{D}(N,M)-D(f||g))^2],
\end{eqnarray}
then for sufficiently large $U_g$ and sufficiently low $L_g$, we have $R_a(N,M)\geq R_{a4}(N,M)$. 

We use Poisson sampling again. Define
\begin{eqnarray}
R_{a5}(N,M)=\underset{\hat{D}}{\inf}\underset{(f,g)\in \mathcal{G}_a}{\sup}\mathbb{E}[(\hat{D}(N,M')-D(f||g))^2],
\end{eqnarray}
in which $M'\sim \text{Poi}(M)$. Then we have the following lemma.
\begin{lem}\label{lem:4to5}
	\begin{eqnarray}
	R_{a4}(N,M)\geq R_{a5}(N,2M)-\frac{1}{4}\alpha^2\ln^2 c\exp[-(1-\ln 2)M].
	\label{eq:4to5}
	\end{eqnarray}
\end{lem}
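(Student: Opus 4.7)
The plan is to apply the standard Poissonization-by-doubling trick, mirroring the proof of Lemma~\ref{lem:1to2} with the roles of the $f$- and $g$-samples swapped. Let $\hat{D}(N,M)$ be an arbitrary estimator for the fixed-sample problem on $\mathcal{G}_a$. I would construct a Poisson-sample estimator $\hat{D}'(N,M')$, where $M'\sim\mathrm{Poi}(2M)$, by: on the event $\{M'\geq M\}$, running $\hat{D}$ on the inputs $(\mathbf{X}_1,\ldots,\mathbf{X}_N,\mathbf{Y}_1,\ldots,\mathbf{Y}_M)$ (i.e., discarding all but the first $M$ of the $g$-samples); on the event $\{M'<M\}$, returning a default constant $D_0$ to be chosen below. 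Because, conditional on $M'=k\geq M$, the samples $\mathbf{Y}_1,\ldots,\mathbf{Y}_k$ are i.i.d.\ from $g$, the first $M$ of them have exactly the same joint law as the $g$-samples used to define $R_{a4}(N,M)$. Hence for every $(f,g)\in\mathcal{G}_a$,
\begin{eqnarray*}
\mathbb{E}\!\left[(\hat{D}'-D(f||g))^2\right]\leq \mathbb{E}\!\left[(\hat{D}(N,M)-D(f||g))^2\right]+(D_0-D(f||g))^2\,P(M'<M).
\end{eqnarray*}

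Next, I would bound the range of $D(f||g)$ over $\mathcal{G}_a$ in order to pin down a near-optimal $D_0$. Starting from the closed form~\eqref{eq:kl2}, namely $D(f||g)=\alpha\ln\alpha-(\alpha/m)\sum_i\ln v_i$, Jensen's inequality for the convex function $-\ln$ together with the constraint $(1/m)\sum_i v_i=\alpha$ yields $D(f||g)\geq 0$. Conversely, the constraint $v_i/(mD^d)\in(c,1)$ gives $-\ln v_i<-\ln(c\,mD^d)$, and combining this with $\alpha\leq mD^d$ (since each $v_i<mD^d$) produces $D(f||g)\leq\alpha\ln(\alpha/(mD^d))-\alpha\ln c\leq-\alpha\ln c$. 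Picking the midpoint $D_0=-(\alpha\ln c)/2$ therefore guarantees $(D_0-D(f||g))^2\leq\alpha^2\ln^2 c/4$ uniformly on $\mathcal{G}_a$.

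Taking the supremum over $\mathcal{G}_a$ on both sides of the displayed inequality, and then the infimum over $\hat{D}$ on the right — while observing that the left-hand side upper bounds $R_{a5}(N,2M)$ because $\hat{D}'$ is merely one particular Poisson-sample estimator — gives
\begin{eqnarray*}
R_{a5}(N,2M)\leq R_{a4}(N,M)+\frac{\alpha^2\ln^2 c}{4}\,P(M'<M).
\end{eqnarray*}
A Chernoff bound for $M'\sim\mathrm{Poi}(2M)$ at threshold $M$ yields $P(M'\leq M)\leq e^{-(1-\ln 2)M}$, and rearranging produces~\eqref{eq:4to5}. The only mildly subtle point is justifying the equality in MSE when conditioning on $\{M'\geq M\}$; this is routine once one invokes the fact that the first $M$ points of a Poisson sample, conditional on the total count being at least $M$, are i.i.d.\ from $g$ — exactly the same ingredient that underlies the companion Lemma~\ref{lem:1to2}.
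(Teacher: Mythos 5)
Your proposal is correct and follows essentially the same route as the paper: Poissonization by doubling the $g$-sample size, a constant fallback on the low-count event $\{M'<M\}$ bounded via the range of $D(f\|g)$ over $\mathcal{G}_a$ (which is at most $\alpha|\ln c|$, giving the $\tfrac14\alpha^2\ln^2 c$ factor), and the Chernoff bound $P(\mathrm{Poi}(2M)<M)\leq e^{-(1-\ln 2)M}$. The only difference is bookkeeping — you build the hybrid estimator explicitly, while the paper argues through monotonicity of $R_{a4}$ in $M$ and a conditional-risk bound as in Lemma~\ref{lem:1to2} — and both yield \eqref{eq:4to5}.
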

\begin{proof}
	Please refer to Appendix~\ref{sec:4to5}.
\end{proof}
Define
\begin{eqnarray}
\mathcal{G}_a'&=&\left\{(f,g)|f(\mathbf{x})=(1-\alpha)Q_a(\mathbf{x})+\sum_{i=1}^m \frac{\alpha}{mD^d}Q_a\left(\frac{\mathbf{x}-\mathbf{a}_i}{D}\right),\right.\nonumber\\ &&\hspace{15mm}g(\mathbf{x})=(1-\alpha)Q_a(\mathbf{x})
+\sum_{i=1}^m \frac{v_i}{mD^d}Q_a\left(\frac{\mathbf{x}-\mathbf{a}_i}{D}\right),\nonumber\\&&\hspace{15mm}\left.\left|\frac{1}{m}\sum_{i=1}^m v_i-\alpha\right|<\epsilon, 1<mD^{d-1}<C_1, \frac{u_i}{mD^d}\in  (c(1+\epsilon),1-\epsilon)\right\},\nonumber\\
\end{eqnarray}
and 
\begin{eqnarray}
R_{a6}(N,M)=\underset{\hat{D}}{\inf}\underset{(f,g)\in \mathcal{G}_a'}{\sup}\mathbb{E}[(\hat{D}(N,M')-D(f||g))^2],
\end{eqnarray}
in which $M'\sim \text{Poi}\left(M\int g(\mathbf{x})d\mathbf{x}\right)$. Then the following lemma lower bounds $R_{a5}$ with $R_{a6}$:
\begin{lem}\label{lem:5to6}
	If $\epsilon<\alpha/2$, then
	\begin{eqnarray}
	R_{a5}(N,(1-\epsilon)M)\geq \frac{1}{2}R_{a6}(N,M)-4\epsilon^2.
	\label{eq:5to6}
	\end{eqnarray}
\end{lem}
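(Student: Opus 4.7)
The plan is to mirror the Poissonization reduction of Lemma~\ref{lem:2to3}, now applied on the $g$-side instead of the $f$-side. Given $(f,g)\in\mathcal{G}_a'$, the constraint $|(1/m)\sum_i v_i-\alpha|<\epsilon$ forces $\int g=(1-\alpha)+(1/m)\sum_i v_i\in[1-\epsilon,1+\epsilon]$, so $g$ is only a slightly mis-normalized density. The tighter bump constraint $v_i/(mD^d)\in(c(1+\epsilon),1-\epsilon)$ written into the definition of $\mathcal{G}_a'$ is calibrated exactly so that after renormalization the heights $v_i/(mD^d\int g)$ lie in $(c,1)$; that is, the rescaled $\tilde g:=g/\int g$ has the shape of a genuine member of $\mathcal{G}_a$.

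The reduction proceeds in two steps. First, the Poissonized observation $M'\sim\mathrm{Poi}(M\int g)$ i.i.d.\ draws from $g$ is equivalent to $M'$ draws from $\tilde g$; independent thinning with retention probability $(1-\epsilon)/\int g\le 1$ then produces $\mathrm{Poi}((1-\epsilon)M)$ i.i.d.\ draws from $\tilde g$, which is the exact sample regime of $R_{a5}(N,(1-\epsilon)M)$. An estimator $\hat D$ attaining $R_{a5}(N,(1-\epsilon)M)$ can then be applied to the $N$ samples from $f$ together with the thinned samples, yielding an output $\hat D'$ that (modulo the subtlety described below) targets $D(f\|\tilde g)$. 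Second, because $f$ is a proper density, a direct computation gives $D(f\|\tilde g)=\int f\ln(f/g)+\int f\cdot\ln\int g=D(f\|g)+\ln\int g$, hence $|D(f\|g)-D(f\|\tilde g)|=|\ln\int g|\le -\ln(1-\epsilon)\le 2\epsilon$ for $\epsilon<1/2$. Combining via $(a+b)^2\le 2a^2+2b^2$ and then taking the infimum over estimators and supremum over $(f,g)\in\mathcal{G}_a'$ yields
\begin{equation*}
R_{a6}(N,M)\le 2R_{a5}(N,(1-\epsilon)M)+2(2\epsilon)^2=2R_{a5}(N,(1-\epsilon)M)+8\epsilon^2,
\end{equation*}
which rearranges to the claimed inequality~\eqref{eq:5to6}.

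The main technical obstacle is an $\alpha$-matching issue: after normalization, $\tilde g$ has the form of a $\mathcal{G}_a$-density with a shifted parameter $\alpha^\ast=1-(1-\alpha)/\int g\ne\alpha$, while the given $f$ still carries the original $\alpha$. Thus the pair $(f,\tilde g)$ does not lie strictly in $\mathcal{G}_a$ and the MSE guarantee on $\hat D$ does not apply verbatim to $D(f\|\tilde g)$. This is precisely the same obstacle that appears in the proof of Lemma~\ref{lem:2to3}, and I would handle it in the same way: run the argument inside a proof variant in which $f$ and $g$ are permitted to have independent $\alpha$-parameters (so $(f,\tilde g)$ with parameters $(\alpha,\alpha^\ast)$ is admissible), and verify that this enlargement does not change the minimax rate since $|\alpha^\ast-\alpha|\le\epsilon(1-\alpha)/(1-\epsilon)=O(\epsilon)$ is much smaller than $\alpha$ under the hypothesis $\epsilon<\alpha/2$. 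Pushing this through is what pins down the precise constants $1/2$ and $4\epsilon^2$ in \eqref{eq:5to6}.
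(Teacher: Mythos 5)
Your overall template (replace $g$ by a normalized surrogate, use $(a+b)^2\le 2a^2+2b^2$, and absorb the normalization slack into the Poissonized sample size) is the right one, but your choice of surrogate creates a gap that you yourself flag and do not close. Renormalizing globally, $\tilde g=g/\int g$, gives a density whose background coefficient is $(1-\alpha)/\int g$ and whose bump weights average $\alpha^\ast=\bigl(\frac{1}{m}\sum_i v_i\bigr)/\int g\neq\alpha$, while $f$ keeps the original $\alpha$; so $(f,\tilde g)\notin\mathcal{G}_a$, which is the class over which the supremum in $R_{a5}$ is taken. Consequently the near-minimax estimator for $R_{a5}(N,(1-\epsilon)M)$ carries no risk guarantee at the pair $(f,\tilde g)$, and the chain $R_{a6}(N,M)\le 2R_{a5}(N,(1-\epsilon)M)+8\epsilon^2$ is not actually established. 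Your proposed repair---allowing $f$ and $g$ to carry independent $\alpha$-parameters---redefines $R_{a5}$ and would force you to redo Lemma~\ref{lem:4to5} (and re-verify containment in $\mathcal{S}_a$) before the inequality could be chained with the rest of the minimax argument; you leave this, and the resulting constants, unverified, so as written the lemma is not proved.

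The paper's proof avoids the mismatch by renormalizing only the bump component of $g$: with $q=\bigl(\sum_i v_i\bigr)/(m\alpha)$ it sets $g^*(\mathbf{x})=(1-\alpha)Q_a(\mathbf{x})+\sum_i \frac{v_i/q}{mD^d}Q_a\bigl(\frac{\mathbf{x}-\mathbf{a}_i}{D}\bigr)$, so the background weight stays $1-\alpha$ and the bump weights average exactly $\alpha$; hence $(f,g^*)\in\mathcal{G}_a$ (this is the role of the tightened height range $(c(1+\epsilon),1-\epsilon)$ in the definition of $\mathcal{G}_a'$) and the $R_{a5}$ guarantee applies directly. Moreover $g/g^*\equiv q$ on the bumps and $\equiv 1$ elsewhere, so $|D(f\|g)-D(f\|g^*)|=\alpha|\ln q|\le 2\epsilon$, using $|q-1|<\epsilon/\alpha$ together with the hypothesis $\epsilon<\alpha/2$---note that this is where $\epsilon<\alpha/2$ actually enters, whereas your bound $|\ln\int g|\le 2\epsilon$ never uses it, a sign that your surrogate is not the intended one. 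To your credit, your global renormalization makes the sample-size step cleaner than the paper's terse assertion (since $\tilde g\propto g$, Poisson thinning at rate $(1-\epsilon)/\int g$ exactly produces $\mathrm{Poi}((1-\epsilon)M)$ draws from $\tilde g$), but that convenience is bought at the cost of leaving $\mathcal{G}_a$, which is the crux of the lemma.
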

\begin{proof}
	Please refer to Appendix~\ref{sec:5to6}.
\end{proof}

Now we bound $R_{a6}(N,M,\epsilon)$ with the following lemma.
\begin{lem}\label{lem:lecam2}
	Let $V, V'$ be two random variables that satisfy the following conditions:
	
	(1) $V,V'\in [\eta \lambda,\lambda]$, in which $\lambda\leq (1-\epsilon)mD^d$, $0<\eta<1$ and $\eta \lambda\geq c(1+\epsilon)mD^d$;
	
	(2) $\mathbb{E}[V]=\mathbb{E}[V']=\alpha$.
	
	Define
	\begin{eqnarray}
	\Delta=|\mathbb{E}[\ln V]-\mathbb{E}[\ln V']|.
	\label{eq:Delta2}
	\end{eqnarray}
	
	Let $\epsilon=\lambda/\sqrt{m}$, then
	\begin{eqnarray}
	R_{a6}(N,M,\epsilon)\geq \frac{\alpha^2\Delta^2}{16}\left[\frac{1}{2}-\frac{8\ln^2 c}{m\Delta^2}-m\mathbb{TV}\left(\mathbb{E}\left[\text{Poi}\left(\frac{MV}{m}\right)\right],\mathbb{E}\left[\text{Poi}\left(\frac{MV'}{m}\right)\right]\right)\right].
	\label{eq:lecam2}
	\end{eqnarray}
\end{lem}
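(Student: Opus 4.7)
The proof follows the same Le Cam scheme used in the proof of Lemma \ref{lem:lecam} (i.e., Lemma 11 of \cite{zhao2019analysis}), adapted to the present setup in which $f$ is held fixed and the bump heights of $g$ are randomized. The plan is as follows.

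First, construct two randomized priors $\pi$ and $\pi'$ over the parameter vector $(v_1,\ldots,v_m)$ by drawing the $v_i$'s independently from the law of $V$ and of $V'$, respectively. Condition (1) forces $v_i/(mD^d)\in[c(1+\epsilon),\,1-\epsilon]$ surely, so the pointwise constraint in the definition of $\mathcal{G}_a'$ is automatic. For the aggregate constraint $|\bar v-\alpha|<\epsilon$, Popoviciu's inequality gives $\Var(V)\le(\lambda-\eta\lambda)^2/4\le\lambda^2/4$, and Chebyshev then yields
\[
\text{P}(|\bar v-\alpha|\ge\epsilon)\le \frac{\Var(V)}{m\epsilon^2}\le \frac{\lambda^2}{4m\epsilon^2}=\frac{1}{4}
\]
under the choice $\epsilon=\lambda/\sqrt m$. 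So on an event $E$ of probability at least $3/4$ the realized vector $(v_i)$ places $(f,g)$ inside $\mathcal{G}_a'$.

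Second, using the computation in \eqref{eq:kl2} (which holds for any $v_1,\ldots,v_m$, not only when $\bar v=\alpha$ exactly), the prior-mean and prior-variance of the divergence are
\[
\mathbb{E}_\pi[D(f||g)]=\alpha\ln\alpha-\alpha\,\mathbb{E}[\ln V],\qquad \mathbb{E}_{\pi'}[D(f||g)]=\alpha\ln\alpha-\alpha\,\mathbb{E}[\ln V'],
\]
so the gap in means is $L:=\alpha\Delta$. Since $\ln v_i$ lies in an interval of length at most $\ln(1/\eta)\le\ln(1/c)$, another use of Popoviciu and independence across bumps gives $\Var_\pi[D(f||g)]\le \alpha^2\ln^2 c/(4m)$, and analogously under $\pi'$. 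Next, Poissonization decouples the data: within each bump $i$ the sample count is an independent $\text{Poi}(Mv_i/m)$, and the within-bump locations are uniform independent of $v_i$ because $Q_a$ is uniform; the samples from $f$ and the background component of $g$ are identical under the two priors. By subadditivity of total variation on product measures,
\[
\mathbb{TV}(P_\pi,P_{\pi'})\le m\,\mathbb{TV}\left(\mathbb{E}\left[\text{Poi}\left(\frac{MV}{m}\right)\right],\,\mathbb{E}\left[\text{Poi}\left(\frac{MV'}{m}\right)\right]\right).
\]

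Finally, apply Le Cam's two-point method with the test $\hat\psi=\mathbf{1}\{\hat D>(\mathbb{E}_\pi[D]+\mathbb{E}_{\pi'}[D])/2\}$. Decomposing the error event under $\pi$ into $E^c\cup\{|D-\mathbb{E}_\pi[D]|>L/4\}\cup\{|\hat D-D|>L/4\}$ and applying Markov, Chebyshev, and the previous bounds, then summing the analogous bound under $\pi'$ and comparing with $\text{P}_\pi(\hat\psi=1)+\text{P}_{\pi'}(\hat\psi=0)\ge 1-\mathbb{TV}(P_\pi,P_{\pi'})$, one solves for $\mathbb{E}[(\hat D-D)^2]$ to obtain the bound \eqref{eq:lecam2}. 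The main obstacle is the combined bookkeeping caused by the fact that $D(f||g)$ is itself random under each prior: Le Cam's inequality must simultaneously absorb the estimator error, the prior variance $\Var_\pi[D(f||g)]$ (which is responsible for the $8\ln^2 c/(m\Delta^2)$ term), and the $1/4$ failure probability of the aggregate constraint on each side (which is what degrades the leading constant in the bracket from $1$ to $1/2$). The tuning $\epsilon=\lambda/\sqrt m$ is precisely the pairing that keeps this last failure probability at a universal constant and produces the form of the bracket stated in \eqref{eq:lecam2}.
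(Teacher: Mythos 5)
Your proposal is correct and follows essentially the same route as the paper's proof: draw the $v_i$'s i.i.d.\ from the laws of $V$ and $V'$, use Chebyshev with $\epsilon=\lambda/\sqrt{m}$ to get the $1/4$ failure probability of the aggregate constraint, use Popoviciu on $\ln V$ to control the fluctuation of $D(f\|g)$ (producing the $8\ln^2 c/(m\Delta^2)$ term), bound the total variation of the Poissonized observations by $m$ times the per-bump mixture TV, and conclude via Le Cam with separation $\alpha\Delta$. The only difference is presentational: you run Le Cam through an explicit test and error-event decomposition, whereas the paper conditions on the events $E,E'$ and bounds the TV between the conditional distributions, which yields the same constants.
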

\begin{proof}
	Please refer to Appendix~\ref{sec:lecam2}.
\end{proof}

Now we use eq.(34) in \cite{wu2016minimax} again, which shows that there exist $V,V'\in [\eta\lambda,\lambda]$ that have matching moments up to $L$-th order, such that
\begin{eqnarray}
|\mathbb{E}[\ln V]-\mathbb{E}[\ln V']|=2\underset{p\in \mathcal{P}_L}{\inf}\underset{z\in [\eta,1]}{\sup}|\ln z-p(z)|.
\end{eqnarray} 
The remaining proof follows the proof of \eqref{eq:Ra2}. $L,D,m,\lambda$ and $\alpha$ take the same value as the equations from \eqref{eq:L} to \eqref{eq:alpha}, and then we can get similar bound as \eqref{eq:Ra2}, replacing $N$ with $M$.

\subsection{Proof of Lemma \ref{lem:1to2}}\label{sec:1to2}
Let $N'\sim \text{Poi}(2N)$, then
\begin{eqnarray}
R_{a2}(2N,M)&=&\underset{\hat{D}}{\inf}\underset{(f,g)\in \mathcal{F}_a}{\sup}\mathbb{E}\left[(\hat{D}(N,M)-D(f||g))^2\right]\nonumber\\
&\leq& \underset{\hat{D}}{\inf}\mathbb{E}\left[\underset{(f,g)\in \mathcal{F}_a}{\sup}\mathbb{E}\left[(\hat{D}(N,M)-D(f||g))^2|N'\right]\right]\nonumber\\
&=&\mathbb{E}\left[\underset{\hat{D}}{\inf}\underset{(f,g)\in \mathcal{F}_a}{\sup}\mathbb{E}\left[(\hat{D}(N,M)-D(f||g))^2|N'\right]\right]\nonumber\\
&=&\mathbb{E}[R_{a1}(N',M)]\nonumber\\
&=&\mathbb{E}[R_{a1}(N',M)|N'\geq N]\text{P}(N'\geq N)+\mathbb{E}[R_{a1}(N',M)|N'<N]\text{P}(N'<N),\nonumber\\
\label{eq:Ra2decomp}
\end{eqnarray}
in which the inequality in the second step comes from Jensen's inequality. Note that $R_{a1}(N,M)$ is a nonincreasing function of $N$, because if $N_1<N_2$, given $N_2$ samples $\{\mathbf{X}_1,\ldots, \mathbf{X}_{N_2} \}$, one can always pick $N_1$ samples for the estimation, thus $R_{a1}(N_1,M)\geq R_{a1}(N_2,M)$ always holds. Therefore
\begin{eqnarray}
\mathbb{E}[R_{a1}(N',M)|N'\geq N]\leq R_{a1}(N,M).
\label{eq:Ra2b1}
\end{eqnarray}
Moreover, since $N'\sim \text{Poi}(2N)$, use Chernoff inequality, we get
\begin{eqnarray}
\text{P}(N'<N)\leq \exp[-(1-\ln 2)N].
\label{eq:smalln}
\end{eqnarray}

Now it remains to bound $\mathbb{E}[R_{a1}(N',M)|N'\leq N]$. Note that we can always let the estimator be
\begin{eqnarray}
\hat{D}(f||g)=\frac{1}{2}\left(\underset{(f,g)\in \mathcal{F}_a}{\sup} D(f||g)+\underset{(f,g)\in \mathcal{F}_a}{\inf}D(f||g)\right),
\end{eqnarray}
hence
\begin{eqnarray}
\mathbb{E}[R_{a1}(N',M)|N'<N]\leq \frac{1}{4}\left(\underset{(f,g)\in \mathcal{F}_a}{\sup}D(f||g)-\underset{(f,g)\in \mathcal{F}_a}{\inf}D(f||g)\right)^2.
\end{eqnarray}
From the definition of $\mathcal{F}_a$ in \eqref{eq:fa}, for all $(f,g)\in \mathcal{F}_a$, 
\begin{eqnarray}
D(f||g)&=&\int f(\mathbf{x})\ln f(\mathbf{x})d\mathbf{x}-\int f(\mathbf{x})\ln g(\mathbf{x})d\mathbf{x}\nonumber\\
&=&-h(f)-\int f(\mathbf{x})\ln g(\mathbf{x})d\mathbf{x},
\end{eqnarray}
and
\begin{eqnarray}
\int f(\mathbf{x})\ln g(\mathbf{x})d\mathbf{x}&=&\int \sum_{i=1}^m \frac{u_i}{mD^d}Q_a\left(\frac{\mathbf{x}-\mathbf{a}_i}{D}\right)\ln \frac{\alpha}{mD^d v_d}d\mathbf{x}\nonumber\\
&=&\left(\frac{1}{m}\sum_{i=1}^m u_i\right)\ln \frac{\alpha}{mD^dv_d}\nonumber\\
&=&\alpha\ln \frac{\alpha}{mD^d v_d},
\label{eq:crossentropy}
\end{eqnarray}
which is the same for all $(f,g)\in \mathcal{F}_a$. In addition,
\begin{eqnarray}
h(f)&=&-\int f(\mathbf{x})\ln f(\mathbf{x})d\mathbf{x}\nonumber\\
&=&-(1-\alpha)\ln \frac{1}{v_d}-\frac{1}{m}\sum_{i=1}^m u_i\ln \frac{\alpha}{mD^dv_d}\nonumber\\
&=&(1-\alpha)\ln v_d+\alpha \ln (mD^d v_d)-\frac{1}{m}\sum_{i=1}^m u_i\ln u_i.
\label{eq:hf}
\end{eqnarray}
Hence,
\begin{eqnarray}
&&\mathbb{E}[R_{a1}(N',M)|N'<N]\nonumber\\
&\leq & \left(\underset{(f,g)\in \mathcal{F}_a}{\sup}h(f)-\underset{(f,g)\in \mathcal{F}_a}{\inf} h(f) \right)^2\nonumber\\
&=&\frac{1}{4}\left[\sup\left\{\frac{1}{m}\sum_{i=1}^m u_i\ln u_i|u_i>0,\frac{1}{m}\sum_{i=1}^m u_i=\alpha \right\}-\inf\left\{\frac{1}{m}\sum_{i=1}^m u_i\ln u_i|u_i>0,\frac{1}{m}\sum_{i=1}^m u_i=\alpha \right\}\right]^2\nonumber\\
&=&\frac{1}{4}\alpha^2 \ln^2\alpha\nonumber\\
&<&\frac{1}{4}.
\label{eq:maxR}
\end{eqnarray}
From \eqref{eq:Ra2decomp}, \eqref{eq:Ra2b1}, \eqref{eq:smalln} and \eqref{eq:maxR},
\begin{eqnarray}
R_{a2}(2N,M)\leq R_{a1}(N,M)+\frac{1}{4}\exp[-(1-\ln 2)N].
\end{eqnarray}
\subsection{Proof of Lemma \ref{lem:2to3}}\label{sec:2to3}
Recall that in \eqref{eq:fa2}, 
\begin{eqnarray}
f(\mathbf{x})=(1-\alpha)Q_a(\mathbf{x})+\frac{1}{q}\sum_{i=1}^m \frac{u_i}{mD^d}Q_a\left(\frac{\mathbf{x}-\mathbf{a}_i}{D}\right),
\end{eqnarray}
and $|(1/m)\sum_{i=1}^m u_i-\alpha|<\epsilon$. Define
\begin{eqnarray}
q=\frac{\sum_{i=1}^m u_i}{m\alpha},
\label{eq:q}
\end{eqnarray}
and
\begin{eqnarray}
f^*(\mathbf{x})=(1-\alpha)Q_a(\mathbf{x})+\frac{1}{q}\sum_{i=1}^m \frac{u_i}{mD^d}Q_a\left(\frac{\mathbf{x}-\mathbf{a}_i}{D}\right).
\end{eqnarray}
Then from \eqref{eq:fa2}, $|q-1|<\epsilon/\alpha$, $\int f^*(\mathbf{x})d\mathbf{x}=1$, and $f^*\in \mathcal{F}_a$. Hence
\begin{eqnarray}
R_{a3}(N,M,\epsilon)&=&\underset{\hat{D}}{\inf}\underset{(f,g)\in \mathcal{F}_a'}{\sup}\mathbb{E}\left[(\hat{D}(N,M)-D(f||g))^2 \right]\nonumber\\
&\leq & 2\underset{\hat{D}}{\inf}\underset{(f,g)\in \mathcal{F}_a}{\sup}\mathbb{E}\left[(\hat{D}(N,M)-D(f^*||g))^2 \right]+2\underset{(f,g)\in \mathcal{F}_a}{\sup}\left(D(f||g)-D(f^*||g)\right)^2\nonumber\\
&\leq & 2R_{a2}((1-\epsilon)N,M)+2\underset{(f,g)\in \mathcal{F}_a}{\sup}\left(D(f||g)-D(f^*||g)\right)^2.
\label{eq:cauchy}
\end{eqnarray}
Now we bound the second term.
\begin{eqnarray}
|D(f||g)-D(f^*||g)|\leq |h(f)-h(f^*)|+\left|\int f(\mathbf{x})\ln g(\mathbf{x})-\int f^*(\mathbf{x})\ln g(\mathbf{x})d\mathbf{x}\right|.
\end{eqnarray}
According to \eqref{eq:hf}, 
\begin{eqnarray}
|h(f)-h(f^*)|&=&\frac{1}{m}\left|\sum_{i=1}^m u_i\ln u_i-\sum_{i=1}^m \frac{u_i}{q}\ln \frac{u_i}{q}\right|\nonumber\\
&=&\frac{1}{m}\left|q\sum_{i=1}^m \frac{u_i}{q}\left(\ln \frac{u_i}{q}+\ln q\right)-\sum_{i=1}^m \frac{u_i}{q}\ln \frac{u_i}{q}\right|\nonumber\\
&\leq &\frac{1}{m}\left|(q-1)\sum_{i=1}^m \frac{u_i}{q}\ln \frac{u_i}{q}\right|+\frac{1}{m}\left|\sum_{i=1}^m u_i\ln q\right|\nonumber\\
&\overset{(a)}{\leq}&|1-q||\alpha\ln \alpha|+\alpha |q\ln q|\nonumber\\
&\overset{(b)}{\leq}& \epsilon\ln \frac{1}{\alpha}+\alpha\left(1+\frac{\epsilon}{\alpha}\right)\ln \left(1+\frac{\epsilon}{\alpha}\right)\nonumber\\
&\overset{(c)}{\leq}& \epsilon\ln \frac{1}{\alpha}+\frac{3}{2}\epsilon,
\end{eqnarray}
in which (a) is obtained by maximizing $|\sum_{i=1}^m (u_i/q)\ln (u_i/q)|$ under the restriction $(1/m)\sum_{i=1}^m (u_i/q)=\alpha$, (b) comes from $|q-1|<\epsilon/\alpha$, and (c) uses $\epsilon<\alpha/2$. Moreover,
\begin{eqnarray}
\left|\int f(\mathbf{x})\ln g(\mathbf{x})d\mathbf{x}-\int f^*(\mathbf{x})\ln g(\mathbf{x})d\mathbf{x}\right|&=&\left|\left(\frac{1}{m}\sum_{i=1}^m u_i-\alpha\right)\ln \frac{\alpha}{mD^dv_d}\right|\nonumber\\
&\leq & \epsilon\left|\ln \frac{\alpha}{mD^dv_d}\right|.
\end{eqnarray}
Hence
\begin{eqnarray}
|D(f||g)-D(f^*||g)|\leq \epsilon\left|\ln \frac{\alpha}{mD^dv_d}\right|+\epsilon\ln \frac{1}{\alpha}+\frac{3}{2}\epsilon.
\end{eqnarray}
Therefore
\begin{eqnarray}
R_{a3}(N,M,\epsilon)\leq 2R_{a2}((1-\epsilon)N,M)+6\epsilon^2\left(\ln^2 \frac{\alpha}{mD^d v_d}+\ln^2\alpha+\frac{9}{4}\right).
\end{eqnarray}
\subsection{Proof of Lemma \ref{lem:4to5}}\label{sec:4to5}
Similar to the proof of Lemma \ref{lem:1to2},
\begin{eqnarray}
R_{a5}(N,2M)\leq R_{a4}(N,M)+\exp[-(1-\ln 2)M]\mathbb{E}[R_{a4}(N,M')|M'<M],
\end{eqnarray}
and
\begin{eqnarray}
\mathbb{E}[R_{a4}(N,M')|M'<M]&\leq& \frac{1}{4}\left(\underset{(f,g)\in \mathcal{G}_a}{\sup} D(f||g)-\underset{(f,g)\in \mathcal{G}_a}{\inf}D(f||g)\right)^2\nonumber\\
&=&\frac{1}{4}\left(\frac{\alpha}{m}\sup\left\{\sum_{i=1}^m \ln v_i|v_i\in (cmD^d,mD^d), \frac{1}{m}\sum_{i=1}^m v_i=\alpha \right\}\right.\nonumber\\
&&\left.-\frac{\alpha}{m}\inf\left\{\sum_{i=1}^m \ln v_i|v_i\in (cmD^d,mD^d), \frac{1}{m}\sum_{i=1}^m v_i=\alpha \right\}\right)\nonumber\\
&\leq & \frac{1}{4}\alpha^2 \ln^2 c.
\end{eqnarray}
The proof is complete.
\subsection{Proof of Lemma \ref{lem:5to6}}\label{sec:5to6}
Similar to the proof of Lemma \ref{lem:2to3}, consider that
\begin{eqnarray}
g(\mathbf{x})=(1-\alpha)Q_a(\mathbf{x})+\frac{1}{mD^d}\sum_{i=1}^m v_iQ_a\left(\frac{\mathbf{x}-\mathbf{a}_i}{D}\right),
\end{eqnarray}
define $q=(\sum_{i=1}^m v_i)/(m\alpha)$, and
\begin{eqnarray}
g^*(\mathbf{x})=(1-\alpha) Q_a(\mathbf{x})+\frac{1}{q}\sum_{i=1}^m \frac{v_i}{mD^d}Q_a\left(\frac{\mathbf{x}-\mathbf{a}_i}{D}\right).
\end{eqnarray}
Similar to \eqref{eq:cauchy},
\begin{eqnarray}
R_{a6}(N,M,\epsilon)\leq 2R_{a5}(N,(1-\epsilon)M)+2\underset{(f,g)\in \mathcal{G}_a'}{\sup}\left(D(f||g)-D(f||g^*)\right)^2,
\end{eqnarray}
and
\begin{eqnarray}
|D(f||g)-D(f||g^*)|=\left|f(\mathbf{x})\ln \frac{g(\mathbf{x})}{g^*(\mathbf{x})}d\mathbf{x}\right|=\alpha|\ln q|\leq 2\epsilon,
\end{eqnarray}
in which the last step holds since $|q-1|<\epsilon/\alpha$ and $\epsilon<\alpha/2$. The proof is complete.

\subsection{Proof of Lemma \ref{lem:lecam2}}\label{sec:lecam2}
Let $g_1$, $g_2$ be two random functions:
\begin{eqnarray}
g_1(\mathbf{x})&=&(1-\alpha)Q_a(\mathbf{x})+\sum_{i=1}^m \frac{V_i}{mD^d}Q_a\left(\frac{\mathbf{x}-\mathbf{a}_i}{D}\right),\\
g_2(\mathbf{x})&=&(1-\alpha)Q_a(\mathbf{x})+\sum_{i=1}^m \frac{V_i'}{mD^d}Q_a\left(\frac{\mathbf{x}-\mathbf{a}_i}{D}\right).
\end{eqnarray}
Define two events:
\begin{eqnarray}
E&=&\left\{ \left|\frac{1}{m}\sum_{i=1}^m V_i-\alpha\right|\leq \epsilon, |D(f||g_1)-\mathbb{E}[D(f||g_1)]|\leq \frac{1}{4}\alpha\Delta \right\},\label{eq:Edef}\\
E'&=&\left\{ \left|\frac{1}{m}\sum_{i=1}^m V_i'-\alpha\right|\leq \epsilon, |D(f||g_2)-\mathbb{E}[D(f||g_2)]|\leq \frac{1}{4}\alpha\Delta \right\},\label{eq:E'def}
\end{eqnarray}
then
\begin{eqnarray}
\text{P}\left(\left|\frac{1}{m}\sum_{i=1}^m V_i-\alpha\right|>\epsilon\right)\leq \frac{\Var[V]}{m\epsilon^2}\leq \frac{\lambda^2}{4m\epsilon^2}=\frac{1}{4}.
\end{eqnarray}
Consider that $|\ln V|\in (\ln (1/\lambda),\ln(1/(\eta\lambda)))$, we have
\begin{eqnarray}
\Var[\ln V]\leq \frac{1}{4}\ln^2\eta\leq \frac{1}{4}\ln^2 c,
\end{eqnarray}
hence for $i=1,2$,
\begin{eqnarray}
\text{P}\left(|D(f||g_i)-\mathbb{E}[D(f||g_i)]|>\frac{1}{4}\alpha\Delta\right)&\leq &\frac{16}{\alpha^2\Delta^2}\Var[D(f||g_i)]\nonumber\\
&=&\frac{16}{\alpha^2\Delta^2 m}\Var[\alpha \ln V]\nonumber\\
&\leq & \frac{4\ln^2 c}{m\Delta^2}.
\end{eqnarray}
Therefore
\begin{eqnarray}
\max\{P(E^c), P(E'^c) \}\leq \frac{1}{4}+\frac{4\ln^2 c}{m\Delta^2}.
\end{eqnarray}
According to \eqref{eq:kl2},
\begin{eqnarray}
|\mathbb{E}[D(f||g_1)]-\mathbb{E}[D(f||g_2)]=\alpha |\mathbb{E}[\ln V]-\mathbb{E}[\ln V']|=\alpha\Delta.
\end{eqnarray}
From the definition of $E$, $E'$ in \eqref{eq:Edef} and \eqref{eq:E'def}, if $E,E'$ happen, then
\begin{eqnarray}
|D(f||g_1)-D(f||g_2)|\leq \frac{1}{2}\alpha \Delta.
\end{eqnarray}
Denote $\pi_1^*$ as the distribution of samples according to $g_1$ conditional on $E$, and $\pi_2^*$ as the distribution according to $g_2$ conditional on $E'$. Then under $\pi_1^*$, $\pi_2^*$,
\begin{eqnarray}
\mathbb{TV}(\pi_1^*,\pi_2^*)\leq \mathbb{TV}(\pi_1,\pi_2)+P(E^c)+P(E'^c),
\end{eqnarray}
and
\begin{eqnarray}
\mathbb{TV}(\pi_1,\pi_2)\leq m\mathbb{TV}\left(\mathbb{E}\left[\text{Poi}\left(\frac{MV}{m}\right)\right],\mathbb{E}\left[\text{Poi}\left(\frac{MV'}{m}\right)\right] \right).
\end{eqnarray}
Then according to Le Cam's lemma,
\begin{eqnarray}
R_{a6}(N,M,\epsilon)&\geq&\frac{1}{4}\left(\frac{1}{2}\alpha\Delta\right)^2(1-\mathbb{TV}(\pi_1^*,\pi_2^*))\nonumber\\
&\geq &\frac{\alpha^2\Delta^2}{16}\left[\frac{1}{2}-\frac{8\ln^2 c}{m\Delta^2}-m\mathbb{TV}\left(\mathbb{E}\left[\text{Poi}\left(\frac{MV}{m}\right)\right],\mathbb{E}\left[\text{Poi}\left(\frac{MV'}{m}\right)\right]\right)\right].\nonumber\\
\end{eqnarray}
The proof is complete.
\section{Proof of Theorem \ref{thm:mmx2}}\label{sec:mmx2}
Similar to Theorem \ref{thm:mmx1}, the proof can be divided into proving the following three bounds:
\begin{eqnarray}
R_b(N,M)&\gtrsim& \frac{1}{M}+\frac{1}{N};\label{eq:Rb1}\\
R_b(N,M)&\gtrsim& N^{-\frac{2\gamma}{d+2}}(\ln N)^{-\frac{4d+8-4\gamma}{d+2}};\label{eq:Rb2}\\
R_b(N,M)&\gtrsim& M^{-\frac{2\gamma}{d+2}}(\ln M)^{-\frac{4d+8-4\gamma}{d+2}}.\label{eq:Rb3}
\end{eqnarray}
\textbf{Proof of \eqref{eq:Rb1}}.

Let
\begin{eqnarray}
g(\mathbf{x})=\frac{1}{\sqrt{2\pi}}\exp\left[-\frac{1}{2}x_1^2\right],
\end{eqnarray}
in which $x_1$ is the value of the first coordinate of $\mathbf{x}$, and
\begin{eqnarray}
f_i(\mathbf{x})=\frac{1}{\sqrt{2\pi} \sigma_i} \exp\left[-\frac{x_1^2}{2\sigma_i^2}\right], i=1,2,
\end{eqnarray}
in which $\sigma_2^2=1/2$, and $\sigma_1=(1+\delta)\sigma_2$. Then
\begin{eqnarray}
D(f_1||g)&=&\frac{1}{2}(\sigma_1^2-1)-\ln \sigma_1,\\
D(f_2||g)&=&\frac{1}{2}(\sigma_2^2-1)-\ln \sigma_2,\\
\end{eqnarray}
and
\begin{eqnarray}
D(f_1||f_2)&=&\frac{1}{2}\left(\frac{\sigma_1^2}{\sigma_2^2}-1\right)-\ln \frac{\sigma_1}{\sigma_2}\nonumber\\
&=&\delta+\frac{1}{2}\delta^2-\ln (1+\delta)\nonumber\\
&\leq &\delta^2.
\end{eqnarray}
From Le Cam's lemma, 
\begin{eqnarray}
R_b(N,M)&\geq& \frac{1}{4}(D(f_2||g)-D(f_1||g))^2\exp[-ND(f_1||f_2)]\nonumber\\
&\geq & \frac{1}{4}\left(\ln(1+\delta)-\frac{1}{4}(2\delta+\delta^2)\right)^2\exp[-N\delta^2]\nonumber\\
&\geq & \frac{1}{4}\left(\frac{1}{2}\delta-\frac{3}{4}\delta^2\right)^2\exp[-N\delta^2].
\end{eqnarray}
Let $\delta=1/\sqrt{N}$, for sufficiently large $N$, $R_b(N,M)\geq 1/(32N)$. Similarly, let
\begin{eqnarray}
f(\mathbf{x})=\frac{1}{\sqrt{2\pi}}\exp\left[-\frac{1}{2}x_1^2\right],
\end{eqnarray}
and
\begin{eqnarray}
g_i(\mathbf{x})=\frac{1}{\sqrt{2\pi} \sigma_i} \exp\left[-\frac{x_1^2}{2\sigma_i^2}\right], i=1,2,
\end{eqnarray}
in which $\sigma_1=(1+\delta)\sigma_2$, then we can get $R_b(N,M)\gtrsim 1/M$. Hence
\begin{eqnarray}
R_b(N,M)\gtrsim \frac{1}{N}+\frac{1}{M}.
\end{eqnarray}

\textbf{Proof of \eqref{eq:Rb2}}.

To begin with, we construct $Q_b(\mathbf{x})$ that satisfies the following conditions:

(G1) $Q_b(\mathbf{x})$ is supported on $B(0,1)$, i.e. $Q_b(\mathbf{x})=0$ for $\norm{\mathbf{x}}>1$;

(G2) $\norm{\nabla^2 Q_b}\leq C_0$ for some constant $C_0$;

(G3) $\int_{B(0,1)}Q_b(\mathbf{x})d\mathbf{x}=1$;

(G4) $Q_b(\mathbf{x})\geq 0$ for all $\mathbf{x}$.

Let
\begin{eqnarray}
Q_m=\underset{\mathbf{x}}{\sup} Q_b(\mathbf{x}).
\end{eqnarray}

Define
\begin{eqnarray}
\mathcal{F}_b&=&\left\{(f,g)|f(\mathbf{x})=(1-\alpha)Q_b(\mathbf{x})+\sum_{i=1}^m \frac{u_i}{mD^d}Q_a\left(\frac{\mathbf{x}-\mathbf{a}_i}{D}\right),\right.\nonumber\\ &&\hspace{15mm}g(\mathbf{x})=(1-\alpha)Q_b(\mathbf{x})
+\sum_{i=1}^m \frac{\alpha}{mD^d}Q_b\left(\frac{\mathbf{x}-\mathbf{a}_i}{D}\right),\nonumber\\&&\hspace{15mm}\left.\frac{1}{m}\sum_{i=1}^m u_i=\alpha, 1<mD^{d+2(1-\gamma)}<C_1, \frac{u_i}{mD^{d+2}}<1 \right\}.\nonumber\\
\label{eq:fb}
\end{eqnarray}

In \eqref{eq:fb}, there are two conditions that are different from the definition of $\mathcal{F}_a$ in \eqref{eq:fa}: $1<mD^{d+2(1-\gamma)}<C_1$, and $u_i/(mD^{d+2})<1$. The first one is designed so that the distribution satisfies the tail assumption (Assumption 2 (b)). For $t\leq 1$,
\begin{eqnarray}
\text{P}(f(\mathbf{X})\leq t)&\leq& \left\{
\begin{array}{ccc}
tv_d+mtv_dD^d &\text{if} & t\leq D^2Q_m\\
tv_d+\alpha &\text{if} & t>D^2Q_m\\
\end{array}
\right.\nonumber\\
&\leq &tv_d+mD^{d+2(1-\gamma)}Q_m^{1-\gamma}v_dt^\gamma\nonumber\\
&\leq &\mu t^\gamma,
\end{eqnarray}
in which $\mu=v_d(1+C_1Q_m^{1-\gamma})$.

Follow the analysis in \cite{zhao2019analysis}, we can still get eq.(100) in \cite{zhao2019analysis}, i.e.
\begin{eqnarray}
R(N,M)\gtrsim \left(\frac{m}{N\ln m}\right)^2.
\end{eqnarray}

Let 
\begin{eqnarray}
D\sim N^{-\frac{1}{d+2}}(\ln N)^\frac{1}{d+2},
\end{eqnarray}
then
\begin{eqnarray}
m\sim D^{-d-2(1-\gamma)}\sim N^{\frac{d+2(1-\gamma)}{d+2}}(\ln N)^{-\frac{d+2(1-\gamma)}{d+2}}.
\end{eqnarray}
Hence
\begin{eqnarray}
R_b(N,M)\gtrsim N^{-\frac{4\gamma}{d+2}}(\ln N)^{-\frac{4d+8-4\gamma}{d+2}}.
\end{eqnarray}

\textbf{Proof of \eqref{eq:Rb3}}.
Define
\begin{eqnarray}
\mathcal{G}_b&=&\left\{(f,g)|f(\mathbf{x})=(1-\alpha)Q_b(\mathbf{x})+\sum_{i=1}^m \frac{\alpha}{mD^d}Q_a\left(\frac{\mathbf{x}-\mathbf{a}_i}{D}\right),\right.\nonumber\\ &&\hspace{15mm}g(\mathbf{x})=(1-\alpha)Q_b(\mathbf{x})
+\sum_{i=1}^m \frac{v_i}{mD^d}Q_b\left(\frac{\mathbf{x}-\mathbf{a}_i}{D}\right),\nonumber\\&&\hspace{15mm}\left.\frac{1}{m}\sum_{i=1}^m v_i=\alpha, 1<mD^{d+2(1-\gamma)}<C_1, \frac{v_i}{mD^{d+2}}<1,v_i\geq C_2\alpha \right\},\nonumber\\
\label{eq:gb}
\end{eqnarray}
in which $C_1$ and $C_2$ are two constants. Comparing with the definition of $\mathcal{F}_b$ in \eqref{eq:fb}, we add a new condition $v_i\geq C_2\alpha$, to ensure that $f/g$ is always bounded by $1/C_2$. Similar to Theorem \ref{thm:mmx1}, Let $V,V'\in [C_2\alpha,\lambda]$, $\lambda=\alpha/\eta$, $\lambda\leq mD^{d+2}$. Moreover, we still define $\Delta$ as was already defined in \eqref{eq:Delta2}. Then from Lemma \ref{lem:lecam2},
\begin{eqnarray}
R(N,M)\gtrsim \alpha^2\Delta^2\left[\frac{1}{2}-\frac{8\ln^2 c}{m\Delta^2}-m\mathbb{TV}\left(\mathbb{E}\left[\text{Poi}\left(\frac{MV}{m}\right)\right],\mathbb{E}\left[\text{Poi}\left(\frac{MV'}{m}\right)\right]\right)\right],
\end{eqnarray}
and from \eqref{eq:TVbound},

\begin{eqnarray}
\mathbb{TV}\left(\mathbb{E}\left[\text{Poi}\left(\frac{NU}{m}\right)\right], \mathbb{E}\left[\text{Poi}\left(\frac{NU'}{m}\right)\right]\right)\leq \left(\frac{2eM\lambda}{mL}\right)^L\leq \left(\frac{2eM\alpha}{mL\eta}\right)^L.
\end{eqnarray}
From Lemma 5 in \cite{wu2016minimax}, there exists two constants $c$, $c'$ such that
\begin{eqnarray}
\Delta=\underset{p\in \mathcal{P}_L}{\inf}\underset{z\in [cL^{-2},1]}{\sup}|\ln z-p(z)|\geq c'.
\end{eqnarray}

Let $L=2\lfloor \ln m\rfloor$, and
\begin{eqnarray}
\lambda=\frac{m\ln m}{e^2M},
\end{eqnarray}
\begin{eqnarray}
\alpha=\frac{m}{M\ln m},
\end{eqnarray}

then
\begin{eqnarray}
R_b(N,M)\gtrsim \left(\frac{m}{M\ln m}\right)^2.
\end{eqnarray}

With the restriction $1<mD^{1+2(1-\gamma)}<C_1$ and $\lambda\leq mD^{d+2}$, we have
\begin{eqnarray}
D\sim M^{-\frac{1}{d+2}}\ln^\frac{1}{d+2}M,
\end{eqnarray}
\begin{eqnarray}
m\sim D^{-d-2(1-\gamma)}\sim M^\frac{d+2(1-\gamma)}{d+2}(\ln M)^{-\frac{d+2(1-\gamma)}{d+2}},
\end{eqnarray}
hence
\begin{eqnarray}
R_b(N,M)\gtrsim M^{-\frac{4\gamma}{d+2}}(\ln M)^{-\frac{4d+8-4\gamma}{d+2}}.
\end{eqnarray}
\small \bibliography{macros,knn}
\bibliographystyle{ieeetran}
\end{document}